\newcommand{\ii}{\mathrm{i}}
\newcommand{\ket}[1]{|#1\rangle}
\newcommand{\bra}[1]{\langle#1|}
\newcommand{\overlap}[2]{\langle #1|#2 \rangle}
\newcommand{\Tr}{\mathrm{Tr}}
\newtheorem{theorem}{Theorem}[section]
\newtheorem{corollary}{Corollary}[theorem]
\newtheorem{lemma}[theorem]{Lemma}
\newtheorem{definition}[theorem]{Definition}
\newtheorem{property}[theorem]{Property}
\begin{document}
\title{Multi-Layer Restricted Boltzmann Machine Representation of 1D Quantum Many-Body Wave Functions}
\date{\today}

\author{Huan He}
\affiliation{Department of Physics, Princeton University, Princeton, New Jersey 08544, USA}

\author{Yunqin Zheng}
\affiliation{Department of Physics, Princeton University, Princeton, New Jersey 08544, USA}

\author{B. Andrei Bernevig}
\affiliation{Department of Physics, Princeton University, Princeton, New Jersey 08544, USA}
\affiliation{Physics Department, Freie Universitat Berlin, Arnimallee 14, 14195 Berlin, Germany}
\affiliation{Max Planck Institute of Microstructure Physics, 06120 Halle, Germany}

\author{German Sierra}
\affiliation{
Instituto de F\'{\i}sica Te\'orica, UAM-CSIC, Universidad Aut\'onoma de Madrid, Cantoblanco, Madrid, Spain}

\begin{abstract}
	We consider representing two classes of 1D quantum wave functions of spin systems, including the AKLT and CFT correlator wave functions, in terms of multi-layer restricted Boltzmann machines. In our prescription, the AKLT wave function can be exactly represented by a 2-layer restricted Boltzmann machine with five hidden spins per visible spin. The construction can be generalized to prove that any MPS wave function on $N$ unit cells with finite bond dimension can be approximated by a 2-layer restricted Boltzmann machine with $\mathcal{O}(N)$ hidden spins within an error which scales linearly with $N$. The Haldane-Shastry wave function or a chiral boson CFT correlator wave function, as any Jastrow type of wave functions, can be exactly written as a 1-layer Boltzmann machine with $\mathcal{O}(N^2)$ hidden spins and $N$ visible spins. Applying the cumulant expansion, we further find that the chiral boson CFT correlator wave function (with small  vertex operator conformal dimension $\alpha$, i.e., $\alpha<0.1$) can be approximated, within 99.9\% accuracy up to 22 visible spins, by a 1-layer RBM with $\mathcal{O}(N)$ hidden spins. The cumulant expansion also leads us to a physically inspiring result in which the hidden spins of the restricted Boltzmann machine can be interpreted as the conformal tower of the chiral boson CFT on the cylinder.
\end{abstract}

\maketitle

\paragraph*{Introduction}

Restricted Boltzmann machine (RBM) states have been recently attracting the attention of the condensed matter community as a new numerical tool for quantum systems\cite{Carleo2017Solving,torlai2018neural,carleo2018constructing,jia2018efficient,Kaubruegger2018Chiral,Glasser2018Neural,Clark2018Unifying,Nomura2017Restricted,Rao2018Identifying,morningstar2017deep,gao2017efficient,Deng2017Quantum,Chen2018Equivalence,Aoki2016Restricted,Tubiana2017Emergence,Huang2017Accelerated,2018PhRvB..97c5116C, 2018arXiv181002352L, 2017arXiv170106246H,PhysRevB.91.155150,PhysRevLett.61.2376}. The inspiration of RBM states comes from the fields of deep learning and artificial intelligence: neural networks. One of the earliest neural networks is the RBM. In this paper, we focus on a simple generalization of the RBM, the \emph{$k$-layer RBM}. A $k$-layer RBM refers to the following type of functions:
\begin{equation}\label{eq.RBM}
\begin{split}
&f_{\mathrm{k-RBM}}(s_1,s_2,\ldots,s_N) = \sum_{h^1_{1},\ldots,h^1_{m_1}} \sum_{h^2_{1},\ldots,h^2_{m_2}} \ldots \sum_{h^k_{1},\ldots,h^k_{m_k}} \\
& \exp\left( \sum_{p_0, p_1} W^1_{p_0p_1} s_{p_0}h^1_{p_1} + \sum_{p_0} s_{p_0} a_{p_0} \right. \\
& \left. + \sum_{i=2}^{k} \sum_{p_{i-1}, p_i} W^i_{p_{i-1}p_i} h^{i-1}_{p_{i-1}} h^i_{p_i} + \sum_{i=1}^{k}\sum_{p_i} h^i_{p_i} b^i_{p_i}\right)
\end{split}
\end{equation}
where (i) $i\in \lbrace 1, ..., k\rbrace $ labels the hidden layer; (ii) $s_{p_0}$ $(1\leq p_0\leq n)$ are the \textit{visible spins}; (iii) $h^i_{p_i}$ $(1\leq p_i\leq m_i)$ are the \emph{hidden spins} in the $i$-th layer; (iv) $W^i_{p_ip_{i-1}}$ are complex valued weights; (v) $a_{p_0}$ and $b^i_{p_i}$ are complex valued biases for the visible and $i$-th layer hidden spins respectively. An important feature of the $k$-layer RBM is that the hidden spins in the $i$-th layer only couple to the hidden spins in the $(i+1)$-th or $(i-1)$-th layers, and they do not couple to the hidden spins within the same layer, although integrating out the $i$-th layer will give spin-spin coupling in the $(i-1)$-th and $(i+1)$-th layer. See Fig.~\ref{fig.RBM} for an graphical representation of a 3-layer RBM. In particular, the 1-layer RBM is:
\begin{equation}\label{eq.RBM.1layer}
\begin{split}
&f_{1-\mathrm{RBM}}(s_1,s_2,\ldots,s_N) = \\
&\sum_{h_{1},\ldots,h_{M}} 
\exp\left( \sum_{i,j} W_{ij} s_{i}h_{j} + \sum_{i} s_{i} a_{i} + \sum_{j} h_{j} b_{j} \right) \\
\end{split}
\end{equation}
The values of the visible and hidden spins can vary depending on the systems. 

\begin{figure}
	\centering
	\includegraphics[width=0.8\columnwidth]{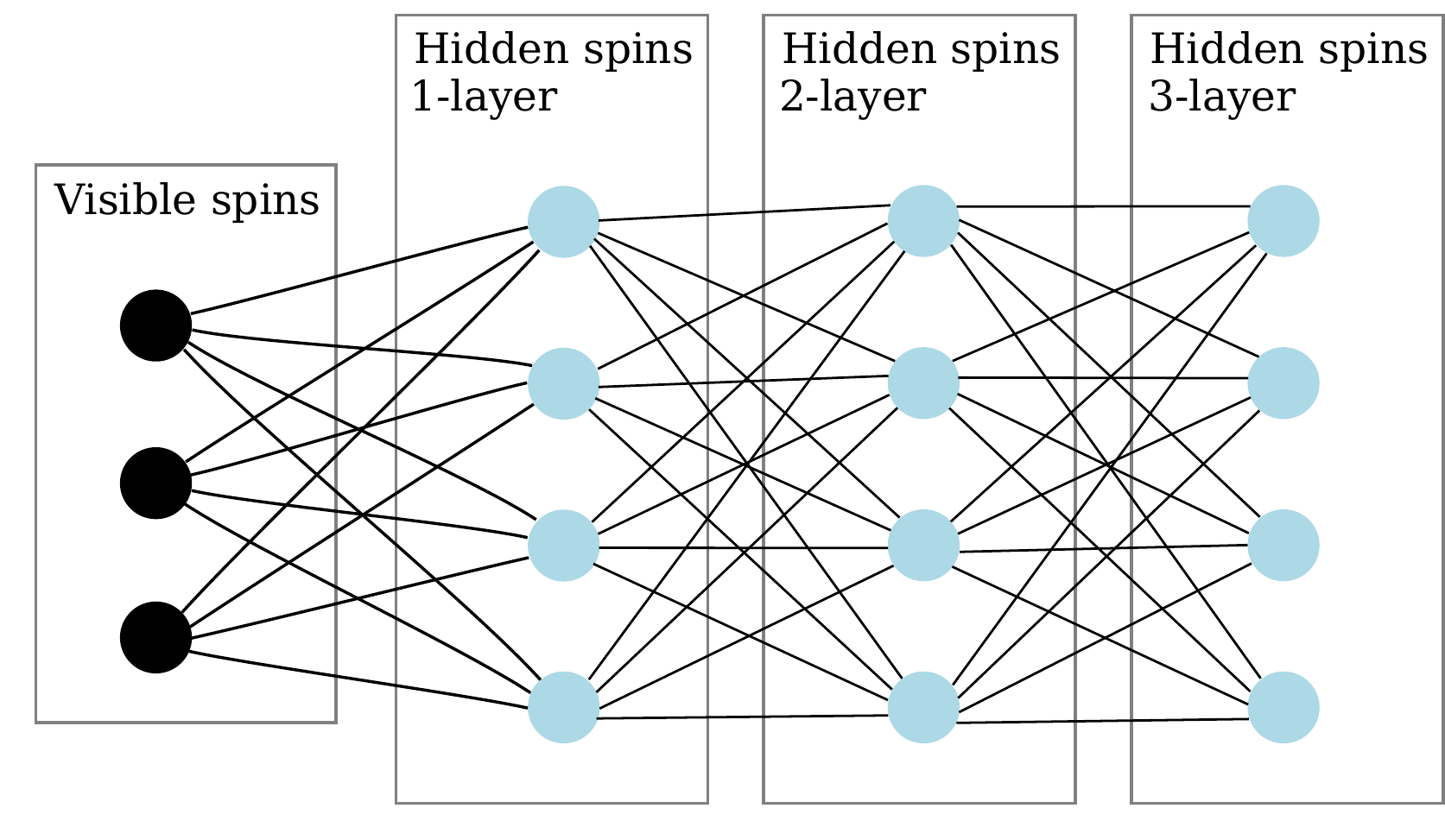}
	\caption{A neural network diagram representation of a 3-layer fully connected RBM. The black dots are the visible spins while the blue ones are hidden. Each of the black lines connecting two dots represents the $W$ weight in Eq.~\eqref{eq.RBM}. The biases are not explicit in this diagram. The first hidden layer is connected to the visible layer and the second hidden layer, and the second hidden layer is connected to the first and the third one. The last hidden layer only connect with the second hidden layer. The fully connected RBM refers to an RBM where each of the spins in a layer is connected to \emph{all} the spin in the nearest neighbor layers. }
	\label{fig.RBM}
\end{figure}

Previous studies \cite{Chen2018Equivalence, 2018arXiv180908631Z, zheng2018restricted} mostly focus on the numerical applications of 1-layer RBM to quantum many-body systems. Only few analytical studies demonstrate the power of $k$-layer RBM's. In Ref.~\cite{Chen2018Equivalence, 2018arXiv180908631Z}, efficient numerical algorithms were proposed to construct 1-layer RBM states from tensor network states or from stabilizer codes. In Ref.~\cite{zheng2018restricted}, finitely connected\footnote{Here, finitely connected means that the weights between visible spins and hidden spins are non-vanishing only when the distance between the visible and hidden spins are within a finite range. } 1-layer RBM states with minimal number of hidden spins have been analytically constructed for a family of 1D stabilizer codes. The purpose of this paper is to represent more generic wave functions beyond stabilizer codes in terms of $k$-layer RBM's. In particular, we focus on two examples: the AKLT model and the free-boson CFT correlator states. As proved in Ref.~\cite{zheng2018restricted}, a necessary condition of the existence of a finitely connected 1-layer RBM to \emph{exactly} represent a 1D wave function is that the matrices of its matrix product state (MPS)\cite{Hastings2007Area, 1987PhRvL..59..799A, fannes1992, PhysRevLett.69.2863, PhysRevLett.75.3537, 2006quant.ph..8197P, verstraete2008matrix, 2008PhRvL.100c0504S, schollwock2011}, which can be immediately obtained from the RBM, are of rank 1. However, this condition does not hold true for AKLT model, since one of its MPS matrix is of rank 2.\cite{1987PhRvL..59..799A} We will show in this paper that the spin-1 AKLT model can be exactly represented by a 2-layer RBM. Furthermore, we prove in section D of the supplementary material that any MPS can be approximated, up to an arbitrary small error that scales linearly with the system size $N$, by a 2-layer RBM.

\paragraph*{2-Layer RBM for the AKLT State}

For clarity, we denote $\textstyle \lbrace \ket{\frac{1}{2}}, \ket{-\frac{1}{2}} \rbrace$ as the basis of a spin-$\frac{1}{2}$ state, and $\textstyle \lbrace \ket{1}, \ket{0}, \ket{-1} \rbrace$ as the basis of a spin-$1$ state. Consider a spin-$1$ chain with periodic boundary condition containing $N$ unit cells. The parent Hamiltonian of the AKLT model is 
$
H=\sum_{i=1}^{N} \left[ \mathbf{S}_{i} \cdot \mathbf{S}_{i+1} +\frac{1}{3} \left( \mathbf{S}_{i} \cdot \mathbf{S}_{i+1}\right)^2\right]
$
where $S_i^a$ $(a=1,2,3)$ are three spin-1 operators for the $i$-th spin. The AKLT ground state can be represented as an MPS 
\begin{eqnarray}
|\mathrm{AKLT}\rangle=\sum_{\lbrace s_i\rbrace =\lbrace -1, 0, 1\rbrace ^{N}} \Tr\bigg(\prod_{i=1}^{N}T^{s_i}\bigg) |\lbrace s_i\rbrace \rangle
\end{eqnarray}
where 
\begin{eqnarray}\label{Eq.AKLTMPS}
T^{-1}= \sqrt{\frac{2}{3}}\sigma^+, T^0= -\sqrt{\frac{1}{3}}\sigma^3, T^1=-\sqrt{\frac{2}{3}}\sigma^-
\end{eqnarray}
$\textstyle \sigma^{\pm}= (\sigma^1\pm i \sigma^2)/2$, and $\textstyle \sigma^{1,2,3}$ are the 2-by-2 Pauli matrices. The basis $|{s_i}\rangle$ is a direct product state $|{s_i}\rangle= \otimes_{i=1}^{N}|s_i\rangle$. The AKLT ground state and its MPS representation can be constructed by starting with a singlet chain of spin-$\frac{1}{2}$'s (to be specified below) and project the spin-$\frac{1}{2}$'s into spin-1's \cite{1987PhRvL..59..799A,schollwock2011}. As we will find, a 2-layer RBM representation of the ground state can be obtained using the same construction. 

\paragraph{1. A Singlet Chain as an RBM:}

Consider a translational invariant chain with $N$ unit cells where each unit cell 
contains two spin-$\frac{1}{2}$'s, labeled by $a_i, b_i=\pm \frac{1}{2}$, $i=1, ..., N$. A singlet chain state is a tensor product of pairs of singlets $|\psi\rangle_{i,i+1}$ across adjacent unit cells, i.e., $\textstyle \ket{\psi}=\bigotimes_{i=1}^{N} \ket{\psi}_{i,i+1}$
with
\begin{eqnarray}\label{eq.singlet}
\begin{split}
&\ket{\psi}_{i,i+1}	\\
=& \frac{\ii}{\sqrt{2}} \left( \ket{b_i=\frac{1}{2},a_{i+1}=-\frac{1}{2}}-\ket{b_i=-\frac{1}{2},a_{i+1}=\frac{1}{2}} \right) \\
=& \frac{1}{2\sqrt{2}} 
\sum_{b_i,a_{i+1}=\pm \frac{1}{2}} \sum_{h_{i,i+1}=0}^{1} e^{\ii \pi h_{i,i+1} (b_i+a_{i+1}) + \ii \pi b_i} \ket{b_i,a_{i+1}}
\end{split}
\end{eqnarray}
where we introduce a hidden spin $h_{i,i+1}=0,1$ for each pair of adjacent unit cells to fit the coefficient in the form of an RBM. Suppressing the overall normalization factor, the singlet chain $\ket{\psi}$ can be compactly written by taking the product of Eq.~\eqref{eq.singlet} as:
\begin{eqnarray}\label{eq.singlet.chain.BM}
\begin{split}
\ket{\psi}
=& \sum_{\lbrace a_i,b_i,h_{i,i+1}\rbrace } e^{\ii \sum_{i} \left( \pi h_{i,i+1} (b_i+a_{i+1}) + \pi b_i \right)}
\ket{\lbrace a_i,b_i\rbrace }	\\
\end{split}
\end{eqnarray}
The summation is taken over all possible configurations of $a_i$, $b_i$ and $h_{i,i+1}$.

\paragraph{2. Projection as a 1-layer RBM:}

The second step is to project the two spin-$\frac{1}{2}$'s on each site to a spin-$1$ via the projector $\mathbb{P}$ 
\begin{equation}\label{Eq.Proj}
\mathbb{P}=\bigotimes_{i=1}^{N}\mathbb{P}_i =\bigotimes_{i=1}^{N} \bigg(\sum_{s_i=-1,0,1} \sum_{a_i,b_i=\pm \frac{1}{2}} P^{s_i}_{a_i,b_i} \ket{s_i}\bra{a_i,b_i}\bigg)
\end{equation}
where the Clebsch-Gordan (CG) coefficients $P^{s}_{a,b}$ are
\begin{eqnarray}\label{eq.project_tensor}
\begin{split}
P^{1}_{\frac{1}{2},\frac{1}{2}}&=P^{-1}_{-\frac{1}{2},-\frac{1}{2}}=1,	\quad
P^{0}_{\frac{1}{2},-\frac{1}{2}}=P^{0}_{-\frac{1}{2},\frac{1}{2}}=\frac{1}{\sqrt{2}},	\\
P^{s}_{a,b}&=0, \quad\text{otherwise}.
\end{split}
\end{eqnarray}
Relegating the details to section A of the supplementary material, we find that the CG coefficients can be expressed as a 1-layer RBM,
\begin{eqnarray}\label{eq.projector.as.BM}
\begin{split}
P^{s}_{a,b}
= \frac{1}{6} \sum_{h=-1}^{1}\sum_{h^\prime=\pm \frac{1}{2}} e^{\ii\left( \frac{2\pi}{3} h (a+b-s)+ \frac{\pi}{2}h^\prime(a-b) \right)} \\
\end{split}
\end{eqnarray}
Substituting Eq.~\eqref{eq.projector.as.BM} into Eq.~\eqref{Eq.Proj}, expanding the direct product and suppressing the overall normalization factor, we find the projector $\mathbb{P}$ as
\begin{eqnarray}\label{eq.projector.BM}
\begin{split}
\mathbb{P} \propto \sum_{\lbrace s_i, a_i,b_i,h_i,h^\prime_i\rbrace}
& e^{\pi\ii \sum_i \left( \frac{2}{3} h_i (a_i+b_i-s_i)+ \frac{1}{2}h_i^\prime(a_i-b_i) \right)}	\\
& \ket{\lbrace s_i \rbrace}\bra{\lbrace a_i, b_i \rbrace}	\\
\end{split}
\end{eqnarray}
where $|\{s_i\}\rangle$ is defined below Eq.~\eqref{Eq.AKLTMPS} and $\ket{\lbrace a_i, b_i \rbrace}= \otimes_{i=1}^N |a_i, b_i\rangle$. 
We refer to the projector Eq.~\eqref{eq.projector.BM} as the RBM projector. 

\paragraph{3. AKLT State:}

\begin{figure}[t]
	\centering
	\includegraphics[width=0.8\columnwidth]{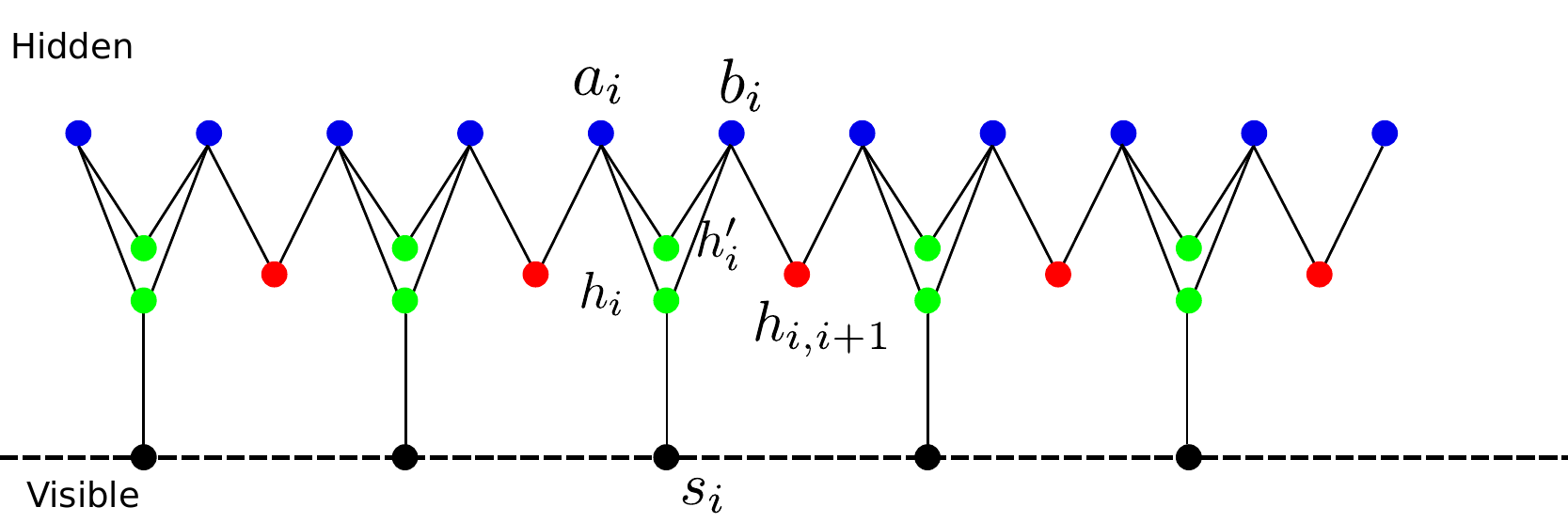}
	\caption{The illustration for the 2-layer RBM state of the spin-$1$ AKLT wave function. The black dots are the visible spin-1's. The colored dots are the hidden spins. The weights are the solid black lines. The dashed black line is the 1D lattice.}
	\label{fig.AKLT.spin1.BM}
\end{figure}

We finally combine the results in Eqs.~\eqref{eq.singlet.chain.BM} and \eqref{eq.projector.BM} to express the spin-$1$ AKLT state in terms of a 2-layer RBM state:
\begin{equation}\label{eq.AKLT.spin1.BM}
\begin{split}
&\mathbb{P} \ket{\psi}
\propto 
\sum_{\lbrace s_i,a_i,b_i,h_{i,i+1},h_i,h^\prime_i\rbrace } \\
&e^{\pi\ii \sum_{i} \left( h_{i,i+1} (b_i+a_{i+1}) + b_i + \frac{2}{3} h_i (a_i+b_i-s_i)+ \frac{1}{2}h_i^\prime(a_i-b_i) \right)} \ket{\lbrace s_i\rbrace}
\end{split}
\end{equation}
where the last expression is a 2-layer RBM state. $h_{i}, h_{i}^{\prime}$ and $h_{i,i+1}$ are the hidden spins in the first layer, while $a_{i},b_{i}$ are the hidden spins in the second layer. See Fig.~\ref{fig.AKLT.spin1.BM} for a graphical representation of the 2-layer RBM.
Applying the procedure in Ref.~\cite{zheng2018restricted} (see also section B of the supplementary material), the RBM Eq.~\eqref{eq.AKLT.spin1.BM} can be cast into the form of an MPS, 
\begin{equation}\label{Eq.MPS}
\begin{split}
&\tilde{T}^{s_i}_{h_{i-1, i} h_{i, i+1}}
\propto \sum_{a_i, b_i, h_i, h_i^\prime } 
\exp\bigg[ \pi\ii \bigg( h_{i-1,i} a_{i} + \\
& h_{i,i+1} b_i + b_i + \frac{2}{3} h_i (a_i+b_i-s_i)+ \frac{1}{2}h_i^\prime(a_i-b_i)
\bigg) \bigg]
\end{split}
\end{equation}
After a similarity transformation on the virtual indices,
\begin{eqnarray}
\tilde{T}^{s_i}\to U\cdot \tilde{T}^{s_i}\cdot U^{-1}, ~~
U= \frac{1}{\sqrt{2}}
\begin{pmatrix}
1 & \ii\\
1 & -\ii
\end{pmatrix}
\end{eqnarray}
we obtain the standard MPS matrices of the AKLT state Eq.~\eqref{Eq.AKLTMPS}.

\paragraph*{MPS by 2-Layer RBM}

The above procedure of expressing the AKLT state as a 2-layer RBM can be generalized for any MPS with finite bond dimension. In section D of the supplementary material, we prove that given an arbitrary small error $\epsilon$ and arbitrary MPS $\mathrm{MPS}(\{s_i\})= \prod_{\{s_i\}}\Tr(T^{s_i})$ with finite bond dimension on a system with $N$ unit cells,  there exists a 2-layer RBM such that
\begin{equation}\label{Eq.appro}
\max_{\{s_i\}}|\mathrm{RBM}(\{s_i\})- \mathrm{MPS}(\{s_i\})|\simeq \mathcal{O}(N) \epsilon
\end{equation}
where $\mathcal{O}(N)$ is a coefficient that scales linearly with $N$.  
We prove Eq.~\eqref{Eq.appro} based on the representing power theorem of 1-layer RBM for (real) probability distributions of binary numbers\cite{le2008representational}. We leave the proof to the supplementary material. 

\paragraph*{Correlator States by RBM}

In this section, we construct the \emph{exact} 1-layer RBM  representions for a series of chiral boson CFT correlator wave functions. 
For simplicity, we adopt a slightly different convention for labeling the spin basis: 
$|s\rangle= |\pm\rangle$. 
The vertex operator of the chiral boson CFT \cite{Cirac2010Infinite}\footnote{The standard vertex operator of chiral boson CFT is $\exp(i \sqrt{\alpha} \phi(z))$ which does not contain the prefactor $s_i$. The vertex operator in Eq.~\eqref{eq.chiralboson.vertexoperators} is more like Knizhnik-Zamolodchikov spin vertices. However, for simplicity, we still denote Eq.~\eqref{eq.chiralboson.vertexoperators} as the vertex operator of chiral boson CFT. } is
\begin{eqnarray}\label{eq.chiralboson.vertexoperators}
A^i_{s_i} = \chi_{s_i} : e^{\ii s_i \sqrt{\alpha} \phi(z_i)}:, 
\quad i=1,2,\ldots,N.
\end{eqnarray}
We explain the notations in Eq.~\eqref{eq.chiralboson.vertexoperators}. (i) $N$ is the system size and we assume that $N$ is even; (ii) $\phi(z)$ is the holomorphic part of the free boson field, where $z=x+\ii y$ is the complex coordinate; (iii) $s_i=\pm 1$ labels the spin basis; (iv) $\alpha$ is a positive real number encoding the conformal dimension $h$ of $A^i_{s_i}$, i.e., $h=\frac{1}{2} \alpha$; (v) the Marshall sign factor $\chi_{s_j}$ is defined as $\chi_{s_j} = 1$ for  even $j$, and $\chi_{s_j}=s_j$ for odd $j$; (vi) $:\mathcal{O}:$ denotes the normal ordering of the operator $\mathcal{O}$ on the decomposition of $\phi$ into normal modes. The correlator of the vertex operators Eq.~\eqref{eq.chiralboson.vertexoperators} is the wave function:
\begin{eqnarray}\label{eq.CFT.correlator}
\begin{split}
&\psi_{\mathrm{CFT}}(s_1,s_2,\ldots,s_{N})
\propto \langle A^1_{s_1}A^2_{s_2} \ldots A^{N}_{s_{N}} \rangle_{\mathrm{CFT}}	\\
\propto& \delta\left( \sum_{i=1}^{N} s_i\right)  \prod_{i ~\mathrm{odd}} s_i \prod_{i>j} (z_i-z_j)^{\alpha s_i s_j}
\end{split}
\end{eqnarray}
where $ \textstyle \delta\left( \sum_{i=1}^{N} s_i\right)=1 $ if $\textstyle \sum_{i=1}^{N} s_i=0$; otherwise $\textstyle \delta\left( \sum_{i=1}^{N} s_i\right) = 0$, i.e, constraining the total $s_z$ to be 0. If $z_i$'s are restricted to be the coordinates of a spin chain of $N$ sites with periodic boundary condition, i.e.,
$\textstyle z_n=e^{\frac{2\pi\ii}{N} n},\;n\in\lbrace 0, 1, ..., N-1\rbrace $, then Eq.~\eqref{eq.CFT.correlator} reduces to
\begin{eqnarray}\label{eq.CFT.wavefunction}
\begin{split}
&\psi_{\mathrm{CFT}}(s_1,s_2,\ldots,s_{N}) \propto 
\delta\left( \sum_{i=1}^{N} s_i\right) \prod_{i ~\mathrm{odd}} s_i \\
& \times 
\prod_{m>n} \left[\sin\left(\pi\frac{m-n}{N}\right)\right]^{\alpha s_m s_n}
\end{split}
\end{eqnarray}
In particular, when $\textstyle \alpha=\frac{1}{2}$, $\psi_{\mathrm{CFT}}$ reduces to the Haldane-Shastry wave function \cite{Cirac2010Infinite}\footnote{To see its connection to the HS wave function, it is more convenient to change the spin basis $s_i$ to the occupation number basis $n_i$, via $s_i=1-2a_i$. $a_i=0,1$. Since $\sum_i s_i=0$, we have $\sum_i a_i=N/2$. Hence there are exactly $N/2$ $i$'s where $a_i=1$. Denote such indices as $n_j$. Then Eq.~\eqref{eq.CFT.wavefunction} can be rewritten as $\psi_{\mathrm{CFT}}(n_1, ..., n_{N/2})\propto e^{i \pi \sum_i n_i} \prod_{n_i>n_j}^{N/2} (\sin(\pi(n_i-n_j)/N))^2$. This is precisely the HS wave function. See \cite{Cirac2010Infinite} for details. }. For convenience, when $m\neq n$, we denote
$\textstyle d_N\left(m,n\right) := |\sin\left(\pi\frac{m-n}{N}\right) |$. Eq.~\eqref{eq.CFT.wavefunction} becomes
\begin{eqnarray}\label{eq.CFT.spinchain}
\begin{split}
&\psi_{\mathrm{CFT}}(s_1,s_2,\ldots,s_{N}) \propto\\
&\delta\left( \sum_{i=1}^{N} s_i\right) 
\prod_{i ~\mathrm{odd}} s_i 
\prod_{m>n} d_N(m,n)^{\alpha s_m s_n}
\end{split}
\end{eqnarray}
The chiral boson CFT wave function Eq.~\eqref{eq.CFT.spinchain} is a special type of Jastrow wave function. Notice that a wave function $f(\{s_i\})$ is Jastrow if $f(\{s_i\})=\prod_{i<j}u_{ij}(s_i, s_j)$ for some function $u_{ij}(s_i,s_j)$.  Ref.~\cite{Glasser2018Neural} has shown that a generic Jastrow wave function can be exactly represented by a 1-layer RBM with $\mathcal{O}(N^2)$ hidden spins. In the following we first show that the chiral boson CFT correlator wave function  can be exactly represented by a 1-layer RBM with $\frac{N(N-1)}{2}+1$ hidden spins, 
\begin{equation}\label{eq.CFT.spinchain.toRBM.2}
\begin{split}
&\psi_{\mathrm{CFT}}(s_1,s_2,\ldots,s_{N}) \propto
e^{\ii\frac{\pi}{2} \sum_{i:\mathrm{odd}} (s_i-1)}\times  \\& \sum_{h=0}^{N-1} \sum_{\lbrace h_{m,n} \rbrace } e^{\ii\frac{\pi}{N} h \sum_{i} s_i - \ii \sum_{m>n} \left( s_m V_1^{\alpha,m,n} h_{m,n}+s_n V_2^{\alpha,m,n} h_{m,n} \right) }
\end{split}
\end{equation}
where the hidden spins include:
(a) $h_{m,n}=\pm 1$ coupling to the pair of visible spins $s_m$ and $s_n$ and
(b) $h=0,1,\ldots,N-1$ coupling to all visible spins, as a Zeeman magnetic field. In Eq.~\eqref{eq.CFT.spinchain.toRBM.2}, we have used $s_i= e^{i\pi (s_i-1)/2}$ for $s_i\in \{1, -1\}$. 
The hidden spin $h$ is introduced to impose the constraint $\sum_{i} s_i = 0$ by a $\mathbb{Z}_N$ Fourier transformation.
The total number of the hidden spins are $\frac{N(N-1)}{2}+1$. 
In Eq.~\eqref{eq.CFT.spinchain.toRBM.2} we use the notations:
\begin{eqnarray}
\begin{split}
& V_1^{\alpha,m,n}
:= V_1\left(\alpha \ln\left( d_N(m,n)\right)\right)	\\
& V_2^{\alpha,m,n} 
:= V_2\left(\alpha \ln\left( d_N(m,n)\right)\right)
\end{split}
\end{eqnarray}
where the functions $V_1(x)$ and $V_2(x)$ are:
\begin{eqnarray}\label{eq.V1V2.C=1/2}
\begin{split}
V_{1,2}(x)&= \frac{1}{2} \left(\text{arccosh}\left(e^x\right) \pm \text{arcsech}\left(e^x\right)\right).	\\
\end{split}
\end{eqnarray}
Fig.~\ref{fig.HS} is a graphical representation of the 1-layer RBM in Eq.~\eqref{eq.CFT.spinchain.toRBM.2}. The details to derive Eq.~\eqref{eq.CFT.spinchain.toRBM.2} can be found in section C of the supplementary material.

\begin{figure}[t]
	\centering
	\includegraphics[width=0.6\columnwidth]{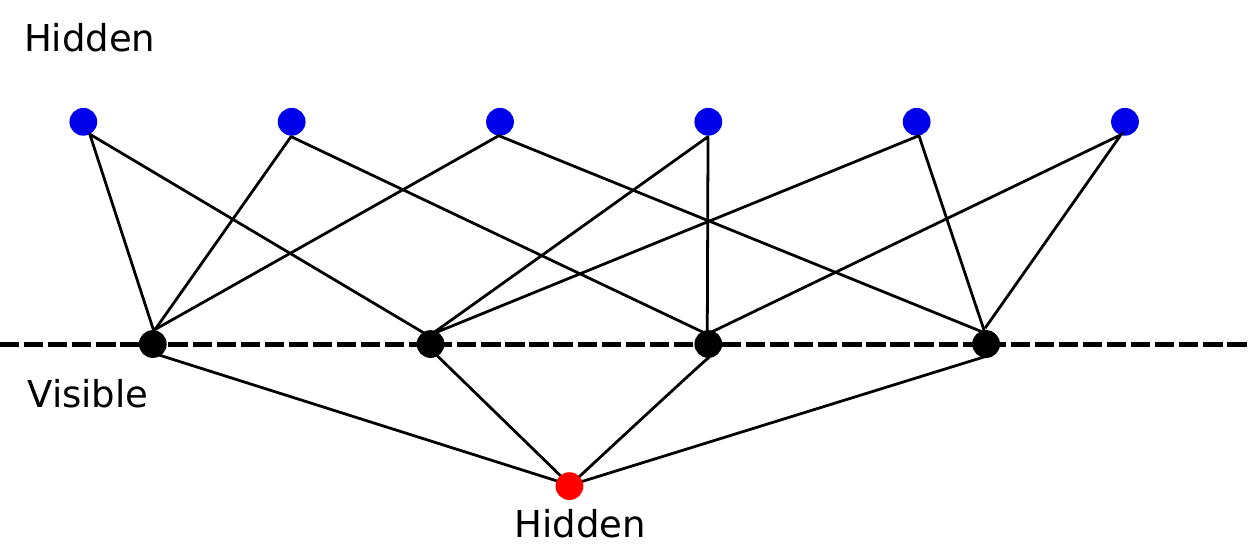}
	\caption{The RBM state figure for CFT correlator states with 4 spins with periodic boundary condition. The black dots are the visible spins and the blue dots are the hidden spins for each pair of visible spins, i.e., $h_{m,n}$. See Eq.~\eqref{eq.CFT.spinchain.toRBM.2}.}
	\label{fig.HS}
\end{figure}

We further numerically show that the CFT correlator wave function can be approximated by a 1-layer RBM with $N$ hidden spins (for $20\leq N\leq 22$) within the accuracy $99.9\%$ for $0<\alpha<0.1$, in contrast to the $\mathcal{O}(N^2)$ hidden spins for the exact representation Eq.~\eqref{eq.CFT.spinchain.toRBM.2}. We first prove that 1-layer RBM without bias, i.e., Eq.~\eqref{eq.RBM.1layer} where $a_i=0$ and $b_j=0$, can be approximated (with the precision given below) by an exponential function of a quadratic polynomial of physical spins, provided the weights are small, i.e., $|W_{ij}|\ll 1$. Concretely, 
\begin{equation}\label{eq.TargetState}
\begin{split}
\sum_{h_{1},\ldots,h_{M}} 
&\exp\left( \sum_{i,j} W_{ik} s_{i}h_{k} \right) 
\simeq 2^M \exp\left( \frac{1}{2}\sum_{i \neq j} U_{ij} s_i s_j \right) .
\end{split}
\end{equation}
where $U$ is symmetric and 
\begin{equation}\label{eq.alphadecompose}
U_{ij}\simeq \sum_{k=1}^M W_{ik}W_{jk} \quad \text{or}\quad U \simeq W W^T. 
\end{equation}

To prove Eq.~\eqref{eq.TargetState}, we use the \textit{cumulant expansion}. In Eq.~\eqref{eq.TargetState}, summing over the hidden spins amounts to calculating the expectation value with a uniform normalized probability distribution:
\begin{eqnarray}\label{eq.ProbabilityDistribution}
p\left( h_1,h_2,\ldots,h_M \right) = \frac{1}{2^{M}}, \quad \forall \; h_1,h_2,\ldots,h_M = \pm 1.
\end{eqnarray}
Denote $X:=\sum_{ij} W_{ij} s_i h_j$. 
Cumulant expansion\cite{McCullagh:2009} yields
\begin{eqnarray}\label{eq.cumulantexpansion}
\sum_{h_1,h_2,\ldots,h_{M}} e^{X} = 2^{M} \mathbb{E}(e^{X}) = 2^M \exp\left( \sum_{r=1}^{\infty} \frac{\kappa_r}{r!} \right)
\end{eqnarray}
where $\mathbb{E}$ denotes the expectation value over the probability distribution Eq.~\eqref{eq.ProbabilityDistribution}. 
The $r$-th cumulant $\kappa_r$ is listed in Ref.~\cite{McCullagh:2009}. $\kappa_r$ is of order $\mathcal{O}(|W_{ij}|^r)$. The first two cumulants $\kappa_1$ and $\kappa_2$ are: 
\begin{equation}
\begin{split}
&\kappa_1 =\mathbb{E}(X) = 0	\\
&\kappa_2 =\mathbb{E}\left(X^2 \right) = \sum_{i,j,k} W_{ik}W_{jk} s_i s_j
\end{split}
\end{equation}
Hence $\kappa_2$ is the leading contribution to Eq.~\eqref{eq.cumulantexpansion}. 
Keeping only the leading terms in Eq.~\eqref{eq.cumulantexpansion}, we obtain the desired approximation 
\begin{eqnarray}\label{Eq.1}
\begin{split}
\sum_{h_1,h_2,\ldots,h_{M}} e^{X} \simeq 2^M \exp\left( \frac{1}{2}\sum_{i,j,k} W_{ik}W_{jk} s_i s_j \right)
\end{split}
\end{eqnarray}
The right hand side of Eq.~\eqref{Eq.1} is precisely of the form $\exp\left( \frac{1}{2}\sum_{i \neq j} U_{ij} s_i s_j \right)$ where $U_{ij} = \sum_{k} W_{ik}W_{jk}$, and hence Eq.~\eqref{eq.alphadecompose} holds. Furthermore, since the only approximation comes from truncation of the Cumulant series in Eq.~\eqref{eq.cumulantexpansion}, we can determine that Eq.~\eqref{eq.alphadecompose} is accurate up to $\mathcal{O}(|W_{ij}|^3)$. 
Equivalently, the two sides of  Eq.~\eqref{eq.TargetState} equal up to a multiplicative factor $e^{\mathcal{O}(|W_{ij}|^3)}$ which is exponentially close to 1. We emphasize that the accuracy of the approximation also depends on the number of hidden spins $M$ which we specify below. 

\begin{figure}[t]
\centering
\hspace*{-1cm}
\includegraphics[width=1.2\columnwidth]{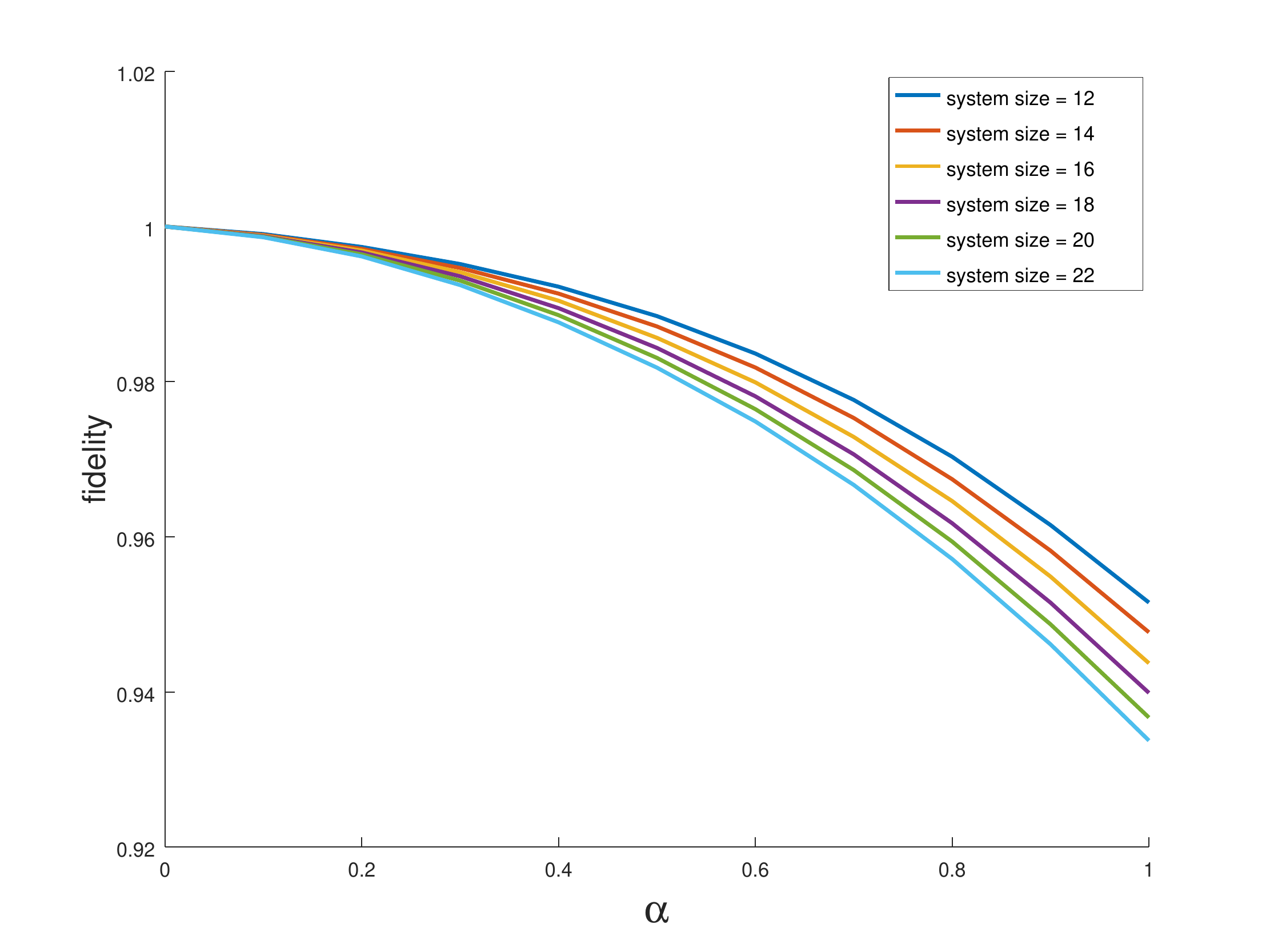}
\caption{Numerical results for the fidelity $|\overlap{\psi_{\mathrm{CFT}}}{\psi_{\mathrm{CFT-RBM}}}|^{1/N}$ versus $\alpha$. The two states $\psi_{\mathrm{CFT}}$ and $\psi_{\mathrm{CFT-RBM}}$ are defined in Eqs.~\eqref{eq.CFT.wavefunction} and \eqref{eq.NumericalTrialStates} respectively. Different curves corresponds to different system sizes $N$. The horizontal axis is $\alpha$, and the vertical axis is the fidelity $|\overlap{\psi_{\mathrm{CFT}}}{\psi_{\mathrm{CFT-RBM}}}|^{1/N}$. The fidelity is significantly larger than $99\%$ for $\alpha<0.1$.}
\label{fig.Overlap}
\end{figure}

Eq.~\eqref{eq.TargetState} provides a convenient approximation of the CFT correlator wave function using the RBM. For the simplicity of numerical simulations, we choose the number of hidden spins $M$ to be the number of visible spins $N$ (in contrast to $\mathcal{O}(N^2)$ for the exact representation Eq.~\eqref{eq.CFT.spinchain.toRBM.2}),  we numerically test the validity of this approximation by computing the overlap of $\psi_{\mathrm{CFT}}$ in Eq.~\eqref{eq.CFT.spinchain} with the RBM wave function: 
\begin{eqnarray}\label{eq.NumericalTrialStates}
\begin{split}
&\psi_{\mathrm{CFT-RBM}}(s_1,s_2,\ldots,s_{N}) \\
=& \frac{1}{Z}\delta\left( \sum_{i=1}^{N} s_i\right) 
e^{\ii\frac{\pi}{2} \sum_{i:\mathrm{odd}} (s_i-1) } \sum_{h_1,h_2,\ldots,h_{N}}
e^{\sum_{ij} W_{ij} s_i h_j}.
\end{split}
\end{eqnarray}
In Eq.~\eqref{eq.NumericalTrialStates}, $W$ is obtained by eigenvalue decomposing $U$ (from Eq.~\eqref{eq.CFT.wavefunction}) whose nonzero matrix elements are $\textstyle U_{i,j} = \alpha \ln\left( d_N(i,j)\right), \forall i\neq j$. Concretely, we decompose $U$ as $U=V \Lambda V^T$ where $V$ is real and $\lambda$ is the diagonal matrix, and we construct $W= V \sqrt{\Lambda}$ \footnote{Because $U_{i,j}$ is real and symmetric, there exists a real matrix $V$ satisfying $V V^T=V^T V=I$ such that $U=V \Lambda V^T$, where $\Lambda$ is a diagonal matrix. Denote $W= V \sqrt{\Lambda}$. $U$ can be rewritten as $U=W W^T$. Thus one can find $W$ by performing the eigenvalue decomposition $U=V \Lambda V^T$. }. We numerically compute the \emph{fidelity} (i.e. single-site overlap) $|\overlap{\psi_{\mathrm{CFT}}}{\psi_{\mathrm{CFT-RBM}}}|^{1/N}$. The numerical results are summarized in Fig.~\ref{fig.Overlap}. For $0\le \alpha \le 1$, the fidelity is greater than 90\% with sizes up to $N=22$. In particular, in the regime where the perturbation is valid, i.e., $0\leq \alpha \leq0.1$, the fidelity is greater than 99.9\%. This indicates that the approximation by cumulant expansion is acceptable at the leading order. The next order term in the cumulant expansion is $\kappa_4$ since $\kappa_3=0$ and it leads to a quadratic term in the wave function, that is $\sum_{i,j,k,l}U_{ijkl} s_i s_j s_k s_l$, which does not have the quadratic dependence of Eq.~\eqref{eq.TargetState}. 

We comment that the RBM wave function Eq.~\eqref{eq.NumericalTrialStates} also approximates the ground state wave function of the integrable XXZ chain. The Hamiltonian of the XXZ chain is 
\begin{equation}\label{Eq.XXZ}
    H_{XXZ}=\sum_{i=1}^N S_i^x S_{i+1}^x + S_i^y S_{i+1}^y + \Delta S_i^z S_{i+1}^z
\end{equation}
where $S_i^{a}$, $a=x,y,z$ are Pauli matrices. 
In Ref.~\cite{Cirac2010Infinite}, it was numerically shown that the overlap between the ground state of Eq.~\eqref{Eq.XXZ} $|\psi_{XXZ}\rangle$ and the chiral boson CFT correlator Eq.~\eqref{eq.CFT.correlator} $|\psi_{\mathrm{CFT}}\rangle$ is greater than $99\%$ with $N=20$ provided $\Delta= -\cos(2\pi \alpha)$ for $0\leq \alpha\leq 0.1$, i.e., $|\langle \psi_{XXZ}|\psi_{\mathrm{CFT}}\rangle|>99\%$. Thus the single site overlap (i.e. the fidelity) is
\begin{equation}
    |\langle \psi_{XXZ}|\psi_{\mathrm{CFT}}\rangle|^{1/N}>99.9\%, ~~~ N=20.
\end{equation}
Since we have already shown that the fidelity between the CFT correlator Eq.~\eqref{eq.CFT.correlator} and the CFT-RBM Eq.~\eqref{eq.NumericalTrialStates} for $N=20$ is also greater than $99.9\%$ for $0\leq \alpha\leq 0.1$, we conclude  that the CFT-RBM wave function nicely approximates the ground state of the XXZ ground state wave function with the fidelity \footnote{Since $|\langle \psi_{XXZ}|\psi_{\mathrm{CFT}}\rangle|^{1/N}>1-\epsilon$ where $\epsilon=0.1\%$, we find to the leading order of $\epsilon$ that the XXZ state can be approximated as $|\psi_{XXZ}\rangle= |\psi_{\mathrm{CFT}}\rangle+ N\epsilon |\alpha\rangle$ for some $|\alpha\rangle$. Similarly since $|\overlap{\psi_{\mathrm{CFT}}}{\psi_{\mathrm{CFT-RBM}}}|^{1/N}> 1-\epsilon$, to the leading order of $\epsilon$, the $|\psi_{\mathrm{CFT-RBM}}\rangle$  can be approximated as $|\psi_{\mathrm{CFT-RBM}}\rangle= |\psi_{\mathrm{CFT}}\rangle + N\epsilon |\beta\rangle$ for some $|\beta\rangle$. We estimate the fidelity $|\langle \psi_{XXZ}|\psi_{\mathrm{CFT-RBM}}\rangle|^{1/N}$  up to leading order of $\epsilon$ to be $(1+ N \epsilon (\langle \alpha |\mathrm{CFT}\rangle+ \langle \mathrm{CFT}|\beta\rangle))^{1/N}\gtrsim 1-2\epsilon=99.8\%$. }
\begin{equation}
    |\langle \psi_{XXZ}|\psi_{\mathrm{CFT-RBM}}\rangle|^{1/N}\gtrsim 99.8\%, ~~~ N=20.
\end{equation}

We finally comment that Eq.~\eqref{eq.TargetState} allows us to  label  the orthogonal basis of the chiral boson CFT using hidden spins. Recalling Eq.~\eqref{eq.CFT.correlator} and \eqref{eq.CFT.wavefunction}, the matrix elements $U_{ij}$ in Eq.~\eqref{eq.TargetState} can be identified as:
\begin{eqnarray}\label{Eq.2006}
U_{mn} = \alpha \ln\sin\left( \frac{\pi(m-n)}{N}\right), \quad m>n.
\end{eqnarray}
$U_{mn}$ in Eq.~\eqref{Eq.2006} is proportional to the chiral boson correlation function on a cylinder of length $2\pi$:
\begin{eqnarray}
U_{mn} \propto \alpha \langle \phi\left(\frac{2\pi m}{N}\right)\phi\left(\frac{2\pi n}{N}\right) \rangle_{\mathrm{cyl}}
\end{eqnarray}
Denote $\ket{p,q}$ as the complete and orthonormal basis of chiral boson CFT, consisting of all the primary states (labeled by $p$) and their descendents (labeled by $q\geq 0$).  Inserting the identity $1\equiv \sum_{p,q} \ket{p,q}\bra{p,q}$ on the cylinder, 
we have
\begin{eqnarray}\label{eq.CompleteBasisInsertion}
U_{mn} \propto \alpha \sum_{p,q} \langle \phi\left(\frac{2\pi m}{N}\right)\ket{p,q}_{\mathrm{cyl}} \bra{p,q}\phi\left(\frac{2\pi n}{N}\right) \rangle_{\mathrm{cyl}}
\end{eqnarray}
Comparing Eq.~\eqref{eq.CompleteBasisInsertion} and \eqref{eq.alphadecompose}, we can find that:
\begin{eqnarray}\label{eq.hiddenSpinCFTBasis}
W_{m(p,q)} \propto \sqrt{\alpha} \langle \phi\left( \frac{2\pi m}{N}\right) \ket{p,q}_{\mathrm{cyl}},
\end{eqnarray}
where the second index $(p,q)$ of $W$, i.e., the hidden spin in the 1-layer RBM, labels   the orthonormal basis of chiral boson CFT on a cylinder $|p,q\rangle$. Since there are infinite number of orthogonal CFT bases, the number of hidden spins $M$ is infinite in this case.

\paragraph*{Conclusion}

In this paper, we study two famous states for 1D spin chains: the AKLT state and the CFT correlator state. The AKLT state can be exactly represented as a 2-layer RBM state, where the number of hidden spins per visible spin is 5. We further proved that any MPS can be approximated by a 2-layer RBM within a given accuracy $\epsilon$. The free-boson CFT correlator wave function Eq.~\eqref{eq.CFT.wavefunction} can be exactly represented by a 1-layer RBM, and the number of hidden spins per unit cell is approximately $M=\mathcal{O}(\frac{N^2}{2})$ as $N$ becomes large. The number of hidden spins grows as the system size indicates that the ground state is strongly entangled, with the entanglement bounded from above by $\log 2^{NM}$. Moreover, using cumulant expansion for the 1-layer RBM, the CFT correlator wave function can be approximated well by a 1-layer RBM with $N$ number of hidden spins with the accuracy shown in  Fig.~\ref{fig.Overlap}. It is worth emphasizing that the hidden spins can be interpreted as the orthonormal basis of the free boson CFT, as shown in Eq.~\eqref{eq.hiddenSpinCFTBasis}.

\paragraph*{Acknowledgment}
We thank Nicolas Regnault for useful discussions and the collaborations in the previous works. 
B. A. B. was supported by the Guggenheim Foundation, the Department of Energy Grant No. DE-SC0016239, the National Science Foundation EAGER Grant No. DMR 1643312, Simons Investigator Grants No. 404513, ONR No. N00014-14-1-0330, and  NSF-MRSEC No. DMR-142051, the Packard Foundation, the Schmidt Fund for Innovative Research. G. S. acknowledges the  support from the grants PGC2018-095862-B-C21, QUITEMAD+ S2013/ICE-2801, SEV-2016-0597 of the``Centro de Excelencia Severo Ochoa” Programme and the CSIC Research Platform on Quantum Technologies PTI-001.

\bibliography{RBM_ExoticState}

\clearpage
\onecolumngrid
\appendix

\setcounter{equation}{0}
\renewcommand{\theequation}{A\thechapter.\arabic{equation}}
\renewcommand{\thetheorem}{A\thechapter.\arabic{theorem}}

\section{A. Expressing the CG Coefficients as an RBM}
\label{App.CG}

In this appendix, we express the Clebsch-Gordan (CG) coefficients in Eq.~\eqref{eq.project_tensor} in the main text in terms of a 1-layer RBM. We find that the coefficients can be organized into the following formula,
\begin{eqnarray}\label{eq.project_tensor.tospin1.0}
\begin{split}
P^{s}_{a,b} 
&= \frac{1}{3} \sum_{h=-1}^{1} e^{\frac{2\pi\ii}{3} h (a+b-s)}F_{a,b}	\\
\end{split}
\end{eqnarray}
where
\begin{eqnarray}
F_{\frac{1}{2},\frac{1}{2}}=F_{-\frac{1}{2},-\frac{1}{2}}=1,	\quad
F_{\frac{1}{2},-\frac{1}{2}}=F_{-\frac{1}{2},\frac{1}{2}}=\frac{1}{\sqrt{2}}
\end{eqnarray} 
The rest of the task is to write the factor $F_{a,b}$ as an RBM. Fortunately, $F_{a,b}$ fits nicely into the formula
\begin{eqnarray}\label{eq.project_tensor.tospin1.2}
\begin{split}
F_{a,b} 
=& \cos\left(\frac{\pi}{4}(a-b)\right)
= \frac{1}{2} \sum_{h^\prime=\pm \frac{1}{2}} e^{\ii \frac{\pi}{2}h^\prime(a-b) } \\
\end{split}
\end{eqnarray}
Combining both Eqs.~\eqref{eq.project_tensor.tospin1.0} and \eqref{eq.project_tensor.tospin1.2} leads to:
\begin{eqnarray}\label{eq.projector.as.BM1}
\begin{split}
P^{s}_{a,b}
= \frac{1}{6} \sum_{h=-1}^{1}\sum_{h^\prime=\pm \frac{1}{2}} e^{\ii\left( \frac{2\pi}{3} h (a+b-s)+ \frac{\pi}{2}h^\prime(a-b) \right)} \\
\end{split}
\end{eqnarray}

\setcounter{equation}{0}
\renewcommand{\theequation}{B\thechapter.\arabic{equation}}
\renewcommand{\thetheorem}{B\thechapter.\arabic{theorem}}

\section{B. Expressing a finitely connected and translationally invariant RBM state as an MPS}

In this appendix, we map a finitely connected and translationally invariant 2-layer RBM into an MPS. See Ref.~\cite{zheng2018restricted} for the discussion of 1-layer RBM $\leftrightarrow$ MPS map. As an example, we explain the 2-layer RBM $\leftrightarrow$ MPS map where there are 1 visible spin, 2 hidden spins in the first hidden layer and 2 hidden spins in the second hidden layer per unit cell. We will comment on the generalization to 2-layer RBM's with more visible and hidden spins per unit cell. 

Consider a translational invariant and nearest neighbor connected 2-layer RBM defined on a 1D lattice with $N$ unit cells. We first introduce the degrees of freedom in each unit cell as well as the weights and biases. 
\begin{enumerate}
\item There is 1 visible spin $s_r$ in the $r$-th unit cell, $r=1, ..., N$. 
\item There are 2 hidden spins $h^1_r$ and $g^1_r$ in the $r$-th unit cell of the first hidden layer. $h^1_r$ only connects to the visible spin in the same unit cell $s_r$, while $g^1_r$ connects to the visible spins belonging to neighbor unit cells $s_r$ and $s_{r+1}$. 
\item There are 2 hidden spins $h^2_r$ and $g^2_r$ in the $r$-th unit cell of the second hidden layer. $h^2_r$ only connects to the hidden spins from the first hidden layer in the same unit cell, i.e., $h^1_r$ and $g^1_r$. $g^2_r$ connects to the hidden spins from the first hidden layer belonging to neighbor unit cells, i.e., $h^1_r, g^1_r, h^1_{r+1}$ and $g^1_{r+1}$. 
\item The weights connecting the visible spins and hidden spins from the first hidden layer are specified as follows:
\begin{enumerate}
\item The weight connecting $s_r$ and $h^1_r$ is $W_{sh}^1$. 
\item The weight connecting $s_r$ and $g^1_r$ is $W_{sg}^1$. 
\item The weight connecting $s_{r+1}$ and $g^1_r$ is $\widetilde{W}_{sg}^1$. 
\end{enumerate}
\item The weights connecting the hidden spins from the first hidden layer and those from the second hidden layer are as follows:
\begin{enumerate}
\item The weight connecting $h^1_r$ and $h^2_r$ is $W_{hh}^2$. 
\item The weight connecting $g^1_r$ and $h^2_r$ is $W_{gh}^2$.
\item The weight connecting $h^1_r$ and $g^2_r$ is $W_{hg}^2$.
\item The weight connecting $g^1_r$ and $g^2_r$ is $W_{gg}^2$.
\item The weight connecting $h^1_{r+1}$ and $g^2_r$ is $\widetilde{W}_{hg}^2$.
\item The weight connecting $g^1_{r+1}$ and $g^2_r$ is $\widetilde{W}_{gg}^2$.   
\end{enumerate}
\item The biases are as follows:
\begin{enumerate}
\item The bias of $s_r$ is $A_r$. 
\item The bias of $h_r^i$ is $B_r^i$, $i=1, 2$. 
\item The bias of $g_r^i$ is $C_r^i$, $i=1,2$. 
\end{enumerate}
\end{enumerate}
For simplicity, we assume all the visible spins and hidden spins are $\{0,1\}$ valued. 
The 2-layer RBM takes the above general form
\begin{equation}\label{Eq.2layerRBM}
\begin{split}
f_{2-\mathrm{RBM}}(s_1, ..., s_N)= &\sum_{\{h^1_{r}\}, \{g^1_r\}}\sum_{\{h^2_{r}\}, \{g^2_r\}}\exp\Bigg( \sum_{r=1}^{N}\bigg(W_{sh}^1 s_r h_r^1 + W_{sg}^1 s_r g_r^1+ \widetilde{W}_{sg}^1 s_{r+1} g_r^1 \\
&+ W_{hh}^2 h_r^1 h_r^2 + W_{gh}^2 g_r^1 h_r^2 + W^2_{hg} h_r^1 g_r^2 + W^2_{gg} g_r^1 g_r^2+ \widetilde{W}_{hg}^2 h_{r+1}^1 g_r^2+ \widetilde{W}^2_{gg} g_{r+1}^1 g_r^2 \\
&+ A_rs_r + B_r^1 h_r^1 + C_r^1 g_r^1+ B_r^2 h_r^2+ C_r^2g_r^2 \bigg)  \Bigg)
\end{split}
\end{equation}
See Figure \ref{Fig.2layerRBM} for a graphical representation of the RBM. 

\begin{figure}[t]
\centering
\includegraphics[width=0.6\columnwidth]{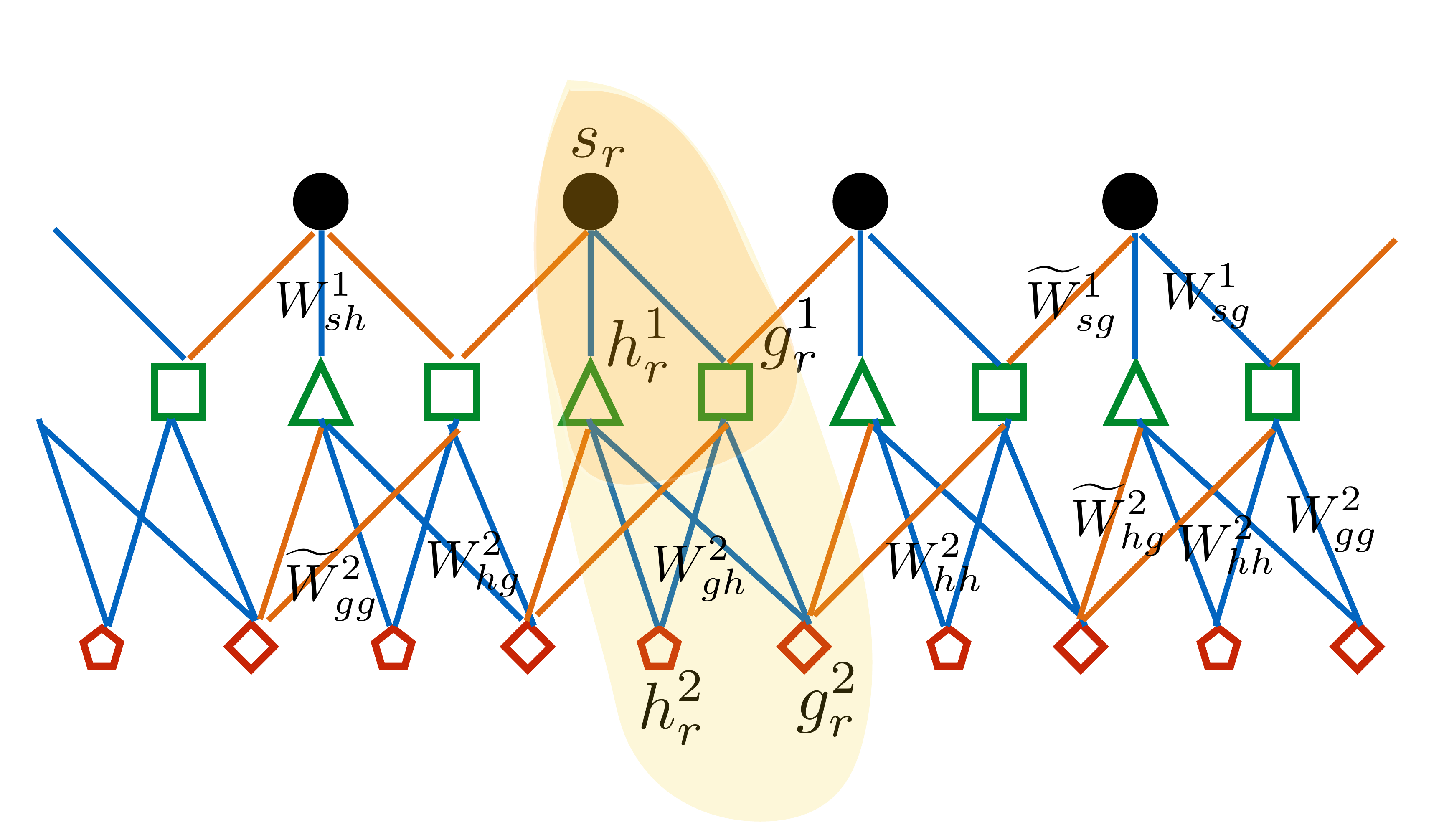}
\caption{Translationally invariant 2-layer RBM with nearest neighbor connection. The black dot represents the visible spin $\{s_r\}$. In the first hidden layer, the green triangle represents the hidden spin $\{h^1_r\}$, and the green box represents the hidden spin $\{g^1_r\}$.  In the second hidden layer, the red pentagon represents the hidden spin $\{h^2_r\}$, and the red diamond represents the hidden spin $\{g^2_r\}$. The blue lines represent the weights connecting the spins within the unit cell, and the orange lines represent the weights connecting the spins from adjacent unit cells (which are labeled by $\widetilde{W}$'s). The yellow shaded region represents a unit cell involving all three layers. The dark yellow shaded region represents a unit cell involving the visible layer and the first hidden layer.  }
\label{Fig.2layerRBM}
\end{figure}

We further organize the 2-layer RBM Eq.~\eqref{Eq.2layerRBM} as an MPS. We first construct a state by treating the RBM Eq.~\eqref{Eq.2layerRBM} as a wave function, 
\begin{eqnarray}
|\mathrm{RBM}\rangle= \sum_{\{s_r\}} f_{2-\mathrm{RBM}}(s_1, ..., s_N) |\{s_r\}\rangle
\end{eqnarray}
By properly grouping the various terms in Eq.~\eqref{Eq.2layerRBM}, we find 
\begin{eqnarray}
|\mathrm{RBM}\rangle= \sum_{\{s_r\}} \Tr \bigg(\prod_{r=1}^N T^{s_r}\bigg) |\{s_r\}\rangle
\end{eqnarray}
where the tensor $T^{s_r}$ is a $4\times 4$ matrix
\begin{eqnarray}\label{2RBM1}
\begin{split}
T^{s_r}_{g_{r-1}^1 g_{r-1}^2, g_r^1 g_{r}^2} = &\sum_{h_r^1, h_r^2} \exp\Bigg(W_{sh}^1 s_r h_r^1 + W_{sg}^1 s_r g_r^1+ \widetilde{W}_{sg}^1 s_{r} g_{r-1}^1 \\
&+ W_{hh}^2 h_r^1 h_r^2 + W_{gh}^2 g_r^1 h_r^2 + W^2_{hg} h_r^1 g_r^2 + W^2_{gg} g_r^1 g_r^2+ \widetilde{W}_{hg}^2 h_{r}^1 g_{r-1}^2+ \widetilde{W}^2_{gg} g_{r}^1 g_{r-1}^2 \\
&+ A_rs_r + B_r^1 h_r^1 + C_r^1 g_r^1+ B_r^2 h_r^2+ C_r^2g_r^2 \Bigg)
\end{split}
\end{eqnarray}
This shows that one can directly obtain the MPS from a 2-layer translationally invariant RBM. It is remarkable that by grouping the indices $g_{r-1}^1, g_{r-1}^2$ as the left index and the indices $g_r^1, g_{r}^2$ as the right index, the rank of the 4-by-4 matrices $T^{s_r}$ is upper bounded by $2$. 
To see this, we denote
\begin{eqnarray}\label{TUV}
T^{s_r}_{g_{r-1}^1 g_{r-1}^2, g_r^1 g_{r}^2} = \sum_{h_r^1} U^{s_r}_{g_{r-1}^1 g_r^1 h_r^1} V^{g_{r-1}^1 g_r^1 h_r^1}_{g_{r-1}^2 g_r^2}
\end{eqnarray}
where 
\begin{equation}
\begin{split}
U^{s_r}_{g_{r-1}^1 g_r^1 h_r^1}&= \exp\Bigg(W_{sh}^1 s_r h_r^1 + W_{sg}^1 s_r g_r^1+ \widetilde{W}_{sg}^1 s_{r} g_{r-1}^1+A_rs_r + B_r^1 h_r^1 + C_r^1 g_r^1 \Bigg)\\
V^{g_{r-1}^1 g_r^1 h_r^1}_{g_{r-1}^2 g_r^2}&= \sum_{h_r^2} \exp\Bigg(W_{hh}^2 h_r^1 h_r^2 + W_{gh}^2 g_r^1 h_r^2 + W^2_{hg} h_r^1 g_r^2 + W^2_{gg} g_r^1 g_r^2+ \widetilde{W}_{hg}^2 h_{r}^1 g_{r-1}^2+ \widetilde{W}^2_{gg} g_{r}^1 g_{r-1}^2+ B_r^2 h_r^2+ C_r^2g_r^2\Bigg)
\end{split}
\end{equation}
Notice that $U^{s_r}_{g_{r-1}^1 g_r^1 h_r^1}$ and $V^{g_{r-1}^1 g_r^1 h_r^1}_{g_{r-1}^2 g_r^2}$ are both 1-layer RBMs respectively. In Ref.~\cite{zheng2018restricted}, we showed that the MPS matrix of the 1-layer RBM must have rank 1. From Eq.~\eqref{TUV} the matrix $T$ can be expressed as sum of two rank 1 $4\times 4$ matrices, 
\begin{eqnarray}\label{B7}
T^{s_r}_{g_{r-1}^1 g_{r-1}^2, g_r^1 g_{r}^2} = U^{s_r}_{g_{r-1}^1 g_r^1 0} V^{g_{r-1}^1 g_r^1 0}_{g_{r-1}^2 g_r^2} + U^{s_r}_{g_{r-1}^1 g_r^1 1} V^{g_{r-1}^1 g_r^1 1}_{g_{r-1}^2 g_r^2} 
\end{eqnarray}
because $U^{s_r}_{g_{r-1}^1 g_r^1 0} V^{g_{r-1}^1 g_r^1 0}_{g_{r-1}^2 g_r^2} $ and $U^{s_r}_{g_{r-1}^1 g_r^1 1} V^{g_{r-1}^1 g_r^1 1}_{g_{r-1}^2 g_r^2} $ are both rank 1 $4\times 4$ matrices. (Tensor product of two rank 1 matrices is still a rank 1 matrix. )
Hence the rank of $T$ is at most 2. Actually the rank of $T$ can be 1, when $U^{s_r}_{g_{r-1}^1 g_r^1 h_r^1} V^{g_{r-1}^1 g_r^1 h_r^1}_{g_{r-1}^2 g_r^2} $ is independent of $h_r^1$. In particular, when all the weights connecting to the hidden spin $h_r^1$ vanish, $U^{s_r}_{g_{r-1}^1 g_r^1 h_r^1} V^{g_{r-1}^1 g_r^1 h_r^1}_{g_{r-1}^2 g_r^2} $ is independent of $h_r^1$. We realize that even in this case (where all weights connecting to $h_r^1$ vanish), the Boltzmann machine is still a 2-layer RBM (which does not reduce to a 1-layer RBM). In conclusion, we showed the upper bound for the MPS of the 2-layer RBM Eq.~\eqref{2RBM1} is 2.

Generalization to the RBM with more visible and hidden spins per unit cell is straightforward. The recipe is to introduce species indices to label visible and hidden spins, i.e., $\{s_r^a, h^{1, b}_{r}, g^{1, c}_{r}, h^{2, e}_{r}, g^{2, f}_{r}\}$ where $a, b, c, e, f$ are indices labeling the species. More specifically,
\begin{enumerate}
	\item $s^a_r$ is the visible spin. $a=1, ..., n_s$. 
\item $h^{1, b}_{r}$, $b=1, ..., n_{h1}$, is the hidden spin from the first hidden layer connecting only with the visible spins from the same unit cell, i.e., $s_{r}^a$. 
\item $g^{1, c}_{r}$, $c=1,..., n_{g1}$, is the hidden spin from the first hidden layer connecting only with the visible spins from the neighbor unit cells, i.e., $s_{r}^a$ and $s_{r+1}^a$; 
\item $h^{2, e}_{r}$, $e=1, ..., n_{h2}$,  is the hidden spin from the second hidden layer connecting only with the hidden spins from the first hidden layer from the same unit cell, i.e., $h_{r}^{1,b}$ and $g_r^{1,c}$;
\item $g^{2, f}_{r}$, $f=1, ..., n_{g2}$ is the hidden spin from the second hidden layer connecting only with the hidden spins from the first hidden layer from adjacent unit cells, i.e., $h^{1, b}_{r}, g^{1, c}_{r}, h^{1, b}_{r+1}$ and $g^{1, c}_{r+1}$.
\end{enumerate}
Finally, the weights and biases $\{W^1_{sh}, W_{sg}^1, \widetilde{W}_{sg}^1, W_{hh}^2, W_{gh}^2, W_{hg}^2, W_{gg}^2, \widetilde{W}_{hg}^2, \widetilde{W}_{gg}^2, A_r, B_r^1, C_r^1, B_r^2, C_r^2\}$ are generalized with the species labels: $\{W^{1, ab}_{sh}, W_{sg}^{1,ac}, \widetilde{W}_{sg}^{1,ac}, W_{hh}^{2,be}, W_{gh}^{2,ce}, W_{hg}^{2,bf}, W_{gg}^{2,cf}, \widetilde{W}_{hg}^{2,bf}, \widetilde{W}_{gg}^{2,cf}, A_r^a, B_r^{1,b}, C_r^{1,c}, B_r^{2,e}, C_r^{2,f}\}$. Using the above notations, the 2-layer RBM is
\begin{equation}\label{Eq. 102}
\begin{split}
f_{2-\mathrm{RBM}}(\{s^a_r\})= &\sum_{\{h^{1,b}_{r}\}, \{g^{1,c}_r\}}\sum_{\{h^{2,e}_{r}\}, \{g^{2,f}_r\}}\exp\Bigg( \sum_{r=1}^{N}\bigg(W_{sh}^{1,ab} s^a_r h_r^{1,b} + W_{sg}^{1,ac} s^a_r g_r^{1, c}+ \widetilde{W}_{sg}^{1,ac} s^a_{r+1} g_r^{1,c} \\
&+ W_{hh}^{2,be} h_r^{1,b} h_r^{2,e} + W_{gh}^{2,ce} g_r^{1,c} h_r^{2,e} + W^{2,bf}_{hg} h_r^{1,b} g_r^{2,f} + W^{2,cf}_{gg} g_r^{1,c} g_r^{2,f}+ \widetilde{W}_{hg}^{2,bf} h_{r+1}^{1,b} g_r^{2,f}+ \widetilde{W}^{2,cf}_{gg} g_{r+1}^{1,c} g_r^{2,f} \\
&+ A^a_rs^a_r + B_r^{1,b} h_r^{1,b} + C_r^{1,c} g_r^{1,c}+ B_r^{2,e} h_r^{2,e}+ C_r^{2,f}g_r^{2,f} \bigg)  \Bigg)
\end{split}
\end{equation}
The RBM state of the 2-layer RBM Eq.~\eqref{Eq. 102} can be rewritten as an MPS, 
\begin{eqnarray}
|\mathrm{RBM}\rangle=\sum_{\{s^a_r\}} f_{2-\mathrm{RBM}}(\{s^a_r\}) |\{s^a_r\}\rangle= \sum_{\{s^a_r\}} \Tr \bigg(\prod_{r=1}^N T^{s^1_r... s^{n_s}_r}\bigg) |\{s_r\}\rangle
\end{eqnarray}
where 
\begin{equation}\label{2RBM2}
\begin{split}
T^{\{s^a_r\}}_{\{g_{r-1}^{1,c}\} \{g_{r-1}^{2,f}\}, \{g_r^{1,c'}\} \{g_{r}^{2,f'}\}} = &\sum_{\{h_r^{1,b}\}, \{h_r^{2,e}\}} \exp\Bigg(W_{sh}^{1,ab} s^a_r h_r^{1,b} + W_{sg}^{1,ac'} s^a_r g_r^{1,c'}+ \widetilde{W}_{sg}^{1,ac} s^a_{r} g_{r-1}^{1,c} \\
&+ W_{hh}^{2,be} h_r^{1,b} h_r^{2,e} + W_{gh}^{2,c'e} g_r^{1,c'} h_r^{2,e} + W^{2,bf'}_{hg} h_r^{1,b} g_r^{2,f'} + W^{2,c'f'}_{gg} g_r^{1,c'} g_r^{2,f'}+ \widetilde{W}_{hg}^{2,bf} h_{r}^{1,b} g_{r-1}^{2,f}\\&+ \widetilde{W}^{2,c'f}_{gg} g_{r}^{1,c'} g_{r-1}^{2,f} 
+ A^a_rs^a_r + B_r^{1,b} h_r^{1,b} + C_r^{1,c'} g_r^{1,c'}+ B_r^{2,e} h_r^{2,e}+ C_r^{2,f'}g_r^{2,f'} \Bigg)
\end{split}
\end{equation}
which is an $2^{n_{g1}+n_{g2}}\times 2^{n_{g1}+n_{g2}}$ dimensional matrix. We further calculate the rank of the MPS Eq.~\eqref{2RBM2}. We use the same method as in Eq.~\eqref{B7}. We denote 
\begin{eqnarray}
T^{\{s^a_r\}}_{\{g_{r-1}^{1,c}\} \{g_{r-1}^{2,f}\}, \{g_r^{1,c'}\} \{g_{r}^{2,f'}\}} = \sum_{\{h_r^{1,b}\}} U^{\{s_r^a\}}_{\{g_{r-1}^{1,c}\} \{g_r^{1,c'}\} \{h_r^{1,b}\}} V^{\{g_{r-1}^{1,c}\} \{g_r^{1,c'}\} \{h_r^{1,b}\}}_{\{g_{r-1}^{2,f}\} \{g_r^{2,f'}\}}
\end{eqnarray}
where 
\begin{eqnarray}
\begin{split}
U^{\{s_r^a\}}_{\{g_{r-1}^{1,c}\} \{g_r^{1,c'}\} \{h_r^{1,b}\}} &= \exp\Bigg(W_{sh}^{1,ab} s^a_r h_r^{1,b} + W_{sg}^{1,ac'} s^a_r g_r^{1,c'}+ \widetilde{W}_{sg}^{1,ac} s^a_{r} g_{r-1}^{1,c}  + A^a_rs^a_r + B_r^{1,b} h_r^{1,b} + C_r^{1,c'} g_r^{1,c'} \Bigg)\\
V^{\{g_{r-1}^{1,c}\} \{g_r^{1,c'}\} \{h_r^{1,b}\}}_{\{g_{r-1}^{2,f}\} \{g_r^{2,f'}\}} &= \sum_{ \{h_r^{2,e}\}} \exp\Bigg(W_{hh}^{2,be} h_r^{1,b} h_r^{2,e} + W_{gh}^{2,c'e} g_r^{1,c'} h_r^{2,e} + W^{2,bf'}_{hg} h_r^{1,b} g_r^{2,f'} + W^{2,c'f'}_{gg} g_r^{1,c'} g_r^{2,f'}+ \widetilde{W}_{hg}^{2,bf} h_{r}^{1,b} g_{r-1}^{2,f}\\&+  \widetilde{W}^{2,c'f}_{gg} g_{r}^{1,c'} g_{r-1}^{2,f} +  B_r^{2,e} h_r^{2,e}+ C_r^{2,f'}g_r^{2,f'}   \Bigg)
\end{split}
\end{eqnarray}
Both $U^{\{s_r^a\}}_{\{g_{r-1}^{1,c}\} \{g_r^{1,c'}\} \{h_r^{1,b}\}} $ and $V^{\{g_{r-1}^{1,c}\} \{g_r^{1,c'}\} \{h_r^{1,b}\}}_{\{g_{r-1}^{2,f}\} \{g_r^{2,f'}\}} $ are rank 1 matrices. Therefore, we find that $T$ is a sum of $2^{n_{h1}}$ rank 1 matrices. In summary we find that  the $2^{n_{g1}+n_{g2}}\times 2^{n_{g1}+n_{g2}}$ matrix $T^{\{s^a_r\}}_{\{g_{r-1}^{1,c}\} \{g_{r-1}^{2,f}\}, \{g_r^{1,c'}\} \{g_{r}^{2,f'}\}}$ has maximal rank $\min\{2^{n_{h1}}, 2^{n_{g1}+n_{g2}}\}$. 

\setcounter{equation}{0}
\renewcommand{\theequation}{C\thechapter.\arabic{equation}}
\renewcommand{\thetheorem}{C\thechapter.\arabic{theorem}}

\section{C. Exact RBM Representation for CFT Correlation Wave Function}
\label{App.C}

For the calculation convenience, we introduce the following identity:
\begin{lemma}\label{lemma.ssV2RBM}
\begin{equation}\label{eq.ssV2RBM}
\exp\left( s_1s_2 V\right) = \frac{1}{2} \sum_{h=\pm 1} \exp\left( V_1 s_1 h + V_2 s_2 h\right), \quad
\forall\; s_1,s_2=\pm 1,
\end{equation}
where $V$ is a constant, and $V_1, V_2$ will be shown to only depend on $V$.
\end{lemma}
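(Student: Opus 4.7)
The plan is to verify the identity by direct case analysis on the four possible values of $(s_1,s_2)$, and then solve for $V_1,V_2$ in closed form in terms of $V$.

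First I would rewrite the right-hand side of Eq.~\eqref{eq.ssV2RBM} by performing the sum over the hidden spin $h=\pm 1$ explicitly, obtaining
\begin{equation*}
\tfrac{1}{2}\sum_{h=\pm 1}\exp(V_1 s_1 h+V_2 s_2 h)=\cosh(V_1 s_1+V_2 s_2).
\end{equation*}
Since $\cosh$ is even, and $s_1 s_2$ takes only the values $\pm 1$, the four cases $(s_1,s_2)\in\{\pm 1\}^2$ collapse to two independent conditions: one for the ``parallel'' sector $s_1 s_2=+1$ and one for the ``antiparallel'' sector $s_1 s_2=-1$. Matching the left-hand side in each sector yields the pair of equations
\begin{equation*}
\cosh(V_1+V_2)=e^{V},\qquad \cosh(V_1-V_2)=e^{-V}.
\end{equation*}

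Next I would solve this $2\times 2$ system. Inverting the first equation gives $V_1+V_2=\operatorname{arccosh}(e^{V})$. For the second equation, I would use the elementary identity $\operatorname{arccosh}(1/x)=\operatorname{arcsech}(x)$ (valid for $x\in(0,1]$, and extendable as a complex identity otherwise) to rewrite $V_1-V_2=\operatorname{arccosh}(e^{-V})=\operatorname{arcsech}(e^{V})$. Adding and subtracting then gives
\begin{equation*}
V_{1,2}=\tfrac{1}{2}\bigl(\operatorname{arccosh}(e^{V})\pm\operatorname{arcsech}(e^{V})\bigr),
\end{equation*}
which is precisely Eq.~\eqref{eq.V1V2.C=1/2} with $x=V$. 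This exhibits $V_1$ and $V_2$ as functions of $V$ alone, as claimed, and by construction they satisfy both constraints, so the identity holds for all four $(s_1,s_2)$.

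The only subtle point is the domain of the inverse hyperbolic functions: $\operatorname{arccosh}(e^V)$ is real for $V\ge 0$ while $\operatorname{arcsech}(e^V)$ is real for $V\ge 0$, but one or both must be interpreted in the complex plane when $V<0$ (or when $V$ is itself complex, as needed later when $V=\alpha\ln d_N(m,n)$ can be negative). I would simply remark that the equations $\cosh(V_1\pm V_2)=e^{\pm V}$ always admit a complex solution, so $V_1,V_2$ exist in $\mathbb{C}$ for every $V\in\mathbb{C}$, and this is the only mild obstacle; no convergence or ordering issues arise because the verification is a finite check over four sign assignments.
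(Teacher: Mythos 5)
Your proof is correct and follows essentially the same route as the paper's: sum over $h$ to obtain $\cosh(V_1s_1+V_2s_2)$, reduce to the two equations $\cosh(V_1\pm V_2)=e^{\pm V}$, and invert to get $V_{1,2}=\tfrac{1}{2}\left(\operatorname{arccosh}(e^V)\pm\operatorname{arcsech}(e^V)\right)$, which coincides with the paper's solution after the identity $\operatorname{arcsech}(e^{-V})=\operatorname{arccosh}(e^{V})$. Your added remark on complex branches for $V<0$ is a point the paper leaves implicit.
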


\begin{proof}
Summing over the $h$ on RHS of Eq.~\eqref{eq.ssV2RBM} leads to:
\begin{equation}
\begin{split}
\exp\left( s_1s_2 V\right) = \cosh\left( V_1 s_1 + V_2 s_2\right), \quad
\forall\; s_1,s_2=\pm 1.
\end{split}
\end{equation}
More explicitly, this set of equations without duplicates is listed below:
\begin{equation}
\begin{split}
e^V&=\cosh (V_1+V_2),	\\
e^{-V}&= \cosh (V_1-V_2)
\end{split}
\end{equation}
We can solve these two equations to find a solution for $V_1$ and $V_2$:
\begin{equation}\label{eq.V1V2}
\begin{split}
V_1&= \frac{1}{2} \left(\mathrm{sech}^{-1}\left( e^{-V}\right)+\mathrm{sech}^{-1}\left( e^V\right)\right),	\\
V_2&= \frac{1}{2} \left(\mathrm{sech}^{-1}\left( e^{-V}\right)-\mathrm{sech}^{-1}\left( e^V\right)\right).	\\
\end{split}
\end{equation}
We find that $V_1$ and $V_2$ are Eq.~\eqref{eq.V1V2.C=1/2} in the main text. 

\end{proof}

We are ready to transform the CFT correlator wave function in Eq.~\eqref{eq.CFT.spinchain} to an RBM state. Our main tool is the identity  Eq.~\eqref{eq.ssV2RBM}. To begin with, let us apply Eq.~\eqref{eq.ssV2RBM} to the following term: 
\begin{equation}\label{eq.ssV2RBM.oneterm}
\begin{split}
&d\left( \frac{m-n}{N} \right)^{\alpha s_m s_n}	\\
=& \exp\left(\ln\left( d\left(\frac{m-n}{N}\right)\right) \alpha s_ms_n\right)	\\
=& \frac{1}{2} \sum_{h_{m,n}=\pm 1} \exp\bigg( s_m V_1\left(\alpha \ln\left( d\left( \frac{m-n}{N}\right)\right)\right) h_{m,n} +s_n V_2\left(\alpha \ln\left( d\left( \frac{m-n}{N}\right)\right)\right) h_{m,n}\bigg)
\end{split}
\end{equation}
where Eq.~\eqref{eq.ssV2RBM} is applied with $V=\alpha \ln\left( d\left( \frac{m-n}{N}\right)\right)$. To simplify notations, we can denote:
\begin{equation}\label{eq.CFT.V1V2}
\begin{split}
& V_1^{\alpha,m,n}
:= V_1\left(\alpha \ln\left( d\left( \frac{m-n}{N}\right)\right)\right)	
= \frac{1}{2} \left(\mathrm{sech}^{-1}\left( d\left(\frac{m-n}{N}\right)^{-\alpha}\right)+\mathrm{sech}^{-1}\left( d\left(\frac{m-n}{N}\right)^{\alpha}\right)\right)
\end{split}
\end{equation}
and
\begin{equation}
\begin{split}
&V_2^{\alpha,m,n} 
:= V_2\left(\alpha \ln\left( d\left( \frac{m-n}{N}\right)\right)\right)	
= \frac{1}{2} \left(\mathrm{sech}^{-1}\left( d\left(\frac{m-n}{N}\right)^{-\alpha}\right)-\mathrm{sech}^{-1}\left( d\left(\frac{m-n}{N}\right)^{\alpha}\right)\right)
\end{split}
\end{equation}
where the functions $V_1(x)$ and $V_2(x)$ are defined in Eq.~\eqref{eq.V1V2.C=1/2}.
Then Eq.~\eqref{eq.ssV2RBM.oneterm} becomes:
\begin{equation}
d\left( \frac{m-n}{N} \right)^{\alpha s_m s_n}	
=
\frac{1}{2}
\sum_{h_{m,n}=\pm 1} \exp\left( s_m V_1^{\alpha,m,n} h_{m,n}+s_n V_2^{\alpha,m,n} h_{m,n}\right)
\end{equation}
Applying this identity to each terms in Eq.~\eqref{eq.CFT.spinchain}, the wave function is transformed to:
\begin{equation}\label{eq.CFT.spinchain.toRBM.1}
\begin{split}
&\psi_{\mathrm{CFT}}(s_1,s_2,\ldots,s_N) 	\\
\propto& \delta\left( \sum_i s_i\right) \exp\left(\frac{\pi\ii}{2} \sum_{i:\mathrm{odd}} s_i-1 \right) \prod_{m<n}
\frac{1}{2}\sum_{h_{m,n}=\pm 1} \exp\left( s_m V_1^{\alpha,m,n} h_{m,n}+s_n V_2^{\alpha,m,n} h_{m,n}\right) 	\\
\propto& \delta\left( \sum_i s_i\right) \exp\left(\frac{\pi\ii}{2} \sum_{i:\mathrm{odd}} s_i \right) 
\sum_{\lbrace h_{m,n} \rbrace =\pm 1} \exp\left( \sum_{m<n} \left( s_m V_1^{\alpha,m,n} h_{m,n}+s_n V_2^{\alpha,m,n} h_{m,n} \right)\right)	\\
\end{split}
\end{equation}
where in the last equality, we rewrite the product of summation as a big summation, and the overall constant $\frac{1}{2}$ and the phase $\exp\left(-\frac{\pi\ii}{2} \right)$ are dropped. The last two terms in the last equality of Eq.~\eqref{eq.CFT.spinchain.toRBM.1} are already in the form of an RBM state. The rest of the task is to transform the $\delta$ function to an RBM state. We can achieve it by a Fourier transformation:
\begin{equation}\label{eq.CFT.fourier}
\begin{split}
\delta\left( \sum_i s_i\right) 
=& \frac{1}{N} \sum_{h=0,1,\ldots,N-1} \exp\left( \frac{\pi\ii}{N} h \sum_{i} s_i \right), \quad s_i \in \lbrace \pm 1\rbrace.
\end{split}
\end{equation}
Therefore, combining Eq.~\eqref{eq.CFT.fourier} and \eqref{eq.CFT.spinchain.toRBM.1}, $\psi_{\mathrm{CFT}}(s_1,s_2,\ldots,s_N)$ becomes:
\begin{equation}
\begin{split}
&\psi_{\mathrm{CFT}}(s_1,s_2,\ldots,s_N) 	\\
\propto& \frac{1}{N} \sum_{h=0,1,\ldots,N-1} \exp\left( \frac{\pi\ii}{N} h \sum_{i} s_i \right)
\exp\left(\frac{\pi\ii}{2} \sum_{i:\mathrm{odd}} s_i \right) 	
\sum_{\lbrace h_{m,n} \rbrace =\pm 1} \exp\left( \sum_{m<n} \left( s_m V_1^{\alpha,m,n} h_{m,n}+s_n V_2^{\alpha,m,n} h_{m,n} \right)\right) 	\\
=& \frac{1}{N} \sum_{h=0}^{N-1} \sum_{\lbrace h_{m,n} \rbrace =\pm 1}
\exp\left( \ii \left( \frac{\pi}{N} h \sum_{i} s_i + \frac{\pi}{2} \sum_{i:\mathrm{odd}} s_i - \ii \sum_{m<n} \left( s_m V_1^{\alpha,m,n} h_{m,n}+s_n V_2^{\alpha,m,n} h_{m,n} \right) \right) \right)
\end{split}
\end{equation}
This wave function is in the form of the RBM state. 

\setcounter{equation}{0}
\renewcommand{\theequation}{D\thechapter.\arabic{equation}}
\renewcommand{\thetheorem}{D\thechapter.\arabic{theorem}}

\section{D. MPS by 2-Layer RBM}
\label{app.proofForMPS2RBM}

In this appendix, we show that any MPS with finite bond dimension  can be approximated by a 2-layer RBM within an error which scales linearly with $N$.  Due to the length of the section, we briefly summarize the procedures:
\begin{enumerate}
\item Review the theorem in Ref.~\cite{le2008representational} where the authors proved that any (real) probability distribution can be approximated by a 1-layer RBM, i.e. Theorem \ref{theorem.LeRouxBengio}.
\item Establish Theorem \ref{Theorem1} that any complex function can be approximated by a 2-layer RBM.
\item Apply Theorem \ref{Theorem1} to the local complex tensor $A$ and thus obtain the approximated local 2-layer RBM, which is shown in Corollary \ref{corollary.MPS_Tensor_2LayerRBM}.
\item Combine each of the approximated local 2-layer RBM, which turns out to be a 2-layer RBM, i.e. Theorem \ref{theorem.Final}.
\end{enumerate}

\subsection{Conventions}

Suppose we have an MPS with periodic boundary condition whose local tensor is $A$:
\begin{equation}\label{eq.MPS}
\sum_{S_1,S_2,\ldots,S_N} \sum_{I_1,I_2,\ldots,I_N} A^{S_1}_{I_1,I_2} A^{S_2}_{I_2,I_3} \ldots A^{S_N}_{I_N,I_1} \ket{S_1,S_2,\ldots,S_N}
\end{equation}
The tensor $A$ has one physical index of dimension $d$ and two virtual indices of dimension $D$. The diagram representation for the tensor $A$ is:
\begin{equation}
\begin{gathered}
\includegraphics[width=2.5cm]{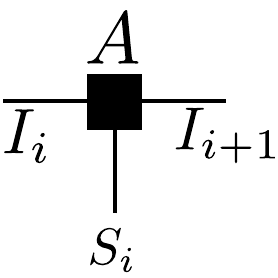}
\end{gathered}
\end{equation}
where each line represents an index of the tensor. For convenience, we assume:
\begin{equation}
d = 2^n,	\quad
D = 2^m.
\end{equation}
so that the indices of the tensor $A^{S}_{IJ}$ can be represented by the binary number indices:
\begin{equation}
\begin{split}
S &= (s_1s_2 \ldots s_n)_2, \quad \forall \; S = 0,1,2,\ldots,d-1.	\\
I &= (i_1i_2 \ldots i_m)_2, \quad \forall \; I = 0,1,2,\ldots,D-1.
\end{split}
\end{equation}
where each binary number is valued in $s_k\in\{0,1\}$ and $i_k \in \{0,1\}$. It is convenient to denote a 1-layer RBM abstractly without explicitly labeling the weights and biases，
\begin{equation}
\mathrm{RBM}_{n,m}(s_1,s_2,\ldots,s_n) 
= \sum_{\{h_j\} \in \{0,1\}^m } \exp\left( \sum_{ij} W_{ij}s_ih_j + \sum_{i} a_i s_i + \sum_{j} b_j h_j \right), \quad
\forall\; s_1,s_2,\ldots,s_n \in \{0,1\}^n,
\end{equation}
For the purpose of proving the existence of such an RBM, finding the specific values of weights and biases is not important, and the schematic notation of RBM $\mathrm{RBM}_{n,m}(s_1,s_2,\ldots,s_n) $ simplifies the discussion below. Similarly, we shall use the notation $\mathrm{RBM}_{n,p,q}\left(s_1,s_2,\ldots,s_n\right)$ to denote a 2-layer RBM whose number of visible spins is $n$, the number of the first layer hidden spins is $p$ and the number of the second layer hidden spins is $q$.

The 1-layer RBM has the following properties:
\begin{property}\label{property.RBM_notation}
\begin{enumerate}
\item Suppose $f_i(s_i)$ are linear functions of $s_i\; (i=1,2,\ldots,n)$ ($f_i(s_i)= p_i s_i + q_i,\; (i=1,2,\ldots,n)$), then $\mathrm{RBM}_{n,m}(f_1(s_1),f_2(s_2),\ldots,f_n(s_n))$ is also a 1-layer RBM.
\item The multiplication of two 1-layer RBM's of the same set of visible spins, with the number of hidden spins $m_1$ and $m_2$, is also a 1-layer RBM, however with of $m_1+m_2$ number of hidden spins
\begin{equation}
\mathrm{RBM}^{(1)}_{n,m_1}(s_1,s_2,\ldots,s_n)\mathrm{RBM}^{(2)}_{n,m_2}(s_1,s_2,\ldots,s_n)=\mathrm{RBM}_{n,m_1+m_2}(s_1,s_2,\ldots,s_n).
\end{equation}
$\mathrm{RBM}^{(1)}_{n,m_1}(s_1,s_2,\ldots,s_n)$ and $\mathrm{RBM}^{(2)}_{n,m_2}(s_1,s_2,\ldots,s_n)$ can have different weights and biases. 
\item The multiplication of two 1-layer RBM's $\mathrm{RBM}^{(1)}_{n_1,m_1}(s^{(1)}_1,s^{(1)}_2,\ldots,s^{(1)}_{n_1})$ and $\mathrm{RBM}^{(2)}_{n_2,m_2}(s^{(2)}_{1},\ldots,s^{(2)}_{n_2})$ is a  1-layer RBM of $n_1+n_2$  visible spins and $m_1+m_2$  hidden spins, i.e., 
\begin{equation}
\mathrm{RBM}^{(1)}_{n_1,m_1}(s^{(1)}_1,s^{(1)}_2,\ldots,s^{(1)}_{n_1})\mathrm{RBM}^{(2)}_{n_2,m_2}(s^{(2)}_{1},\ldots,s^{(2)}_{n_2})=\mathrm{RBM}_{n_1+n_2,m_1+m_2}(s^{(1)}_1,\ldots, s^{(1)}_{n_1},s^{(2)}_1,\ldots,s^{(2)}_{n_2}).
\end{equation}
$\mathrm{RBM}^{(1)}_{n,m_1}(s_1,s_2,\ldots,s_n)$ and $\mathrm{RBM}^{(2)}_{n,m_2}(s_1,s_2,\ldots,s_n)$ generally have different weights and biases. 
\end{enumerate}
\end{property}

\begin{proof}

1. Suppose the 1-layer RBM is:
\begin{equation}
\mathrm{RBM}_{n,m}(s_1,s_2,\ldots,s_n) = \sum_{h \in \{0,1\}^{m}} \exp\left( \sum_{ij} W_{ij}s_ih_j + \sum_{i} a_i s_i + \sum_{j} b_j h_j \right)
\end{equation}
and the linear functions are
\begin{equation}
f_i (x) = p_i x + q_i, \quad \forall\; i=1,2,\ldots,n.
\end{equation}
Then,
\begin{equation}
\begin{split}
& \mathrm{RBM}_{n,m}(f_1(s_1),f_2(s_2),\ldots,f_n(s_n)) \\
=& \sum_{h} \exp\left( \sum_{ij} W_{ij}(p_is_i+q_i) h_j + \sum_{i} a_i (p_is_i+q_i) + \sum_{j} b_j h_j \right)	\\
=& \sum_{h} \exp\left( \sum_{ij} \tilde{W}_{ij}s_ih_j + \sum_{i} \tilde{a}_i s_i + \sum_{j} \tilde{b}_j h_j + \sum_{i} a_i q_i \right)	\\
\end{split}
\end{equation}
where we have redefined:
\begin{equation}
\begin{split}
\tilde{W}_{ij} = W_{ij} p_{i},	\quad
\tilde{a}_{i} = a_{i} q_{i},	\quad
\tilde{b}_{j} = b_{j} + \sum_{i} W_{ij} q_{i},
\end{split}
\end{equation}
which is also a 1-layer  RBM. 

2. Suppose general expressions
\begin{equation}
\begin{split}
&\mathrm{RBM}^{(1)}_{n,m_1}(s_1,s_2,\ldots,s_n) = \sum_{h^{(1)}\in\{0,1\}^{m_1}} \exp\left( \sum_{ij} W^{(1)}_{ij}s_ih^{(1)}_j + \sum_{i} a^{(1)}_i s_i + \sum_{j} b^{(1)}_j h^{(1)}_j \right)	\\
&\mathrm{RBM}^{(2)}_{n,m_2}(s_1,s_2,\ldots,s_n)	= \sum_{h^{(2)}\in\{0,1\}^{m_2}} \exp\left( \sum_{ij} W^{(2)}_{ij}s_ih^{(2)}_j + \sum_{i} a^{(2)}_i s_i + \sum_{j} b^{(2)}_j h^{(2)}_j \right) \\	
\end{split}
\end{equation}
Hence, their product is
\begin{equation}
\begin{split}
&\mathrm{RBM}^{(1)}_{n,m_1}(s_1,s_2,\ldots,s_n)\mathrm{RBM}^{(2)}_{n,m_2}(s_1,s_2,\ldots,s_n)	\\
=& \sum_{h^{(1)}\in\{0,1\}^{m_1}} 
\exp\left( \sum_{ij} W^{(1)}_{ij}s_ih^{(1)}_j + \sum_{i} a^{(1)}_i s_i + \sum_{j} b^{(1)}_j h^{(1)}_j \right) 
\sum_{h^{(2)}\in\{0,1\}^{m_2}} 
\exp\left( \sum_{ij} W^{(2)}_{ij}s_ih^{(2)}_j + \sum_{i} a^{(2)}_i s_i + \sum_{j} b^{(2)}_j h^{(2)}_j \right)	\\
=& \sum_{h^{(1)}\in\{0,1\}^{m_1}} \sum_{h^{(2)}\in\{0,1\}^{m_2}} 
\exp\left( \sum_{ij} W^{(1)}_{ij}s_ih^{(1)}_j + \sum_{i} a^{(1)}_i s_i + \sum_{j} b^{(1)}_j h^{(1)}_j + \sum_{ij} W^{(2)}_{ij}s_ih^{(2)}_j + \sum_{i} a^{(2)}_i s_i + \sum_{j} b^{(2)}_j h^{(2)}_j \right)
\end{split}
\end{equation}
which is a 1-layer RBM with $m_1+m_2$ hidden spins.
	
3. The proof is essentially the same as the proof of the second property. Suppose
\begin{equation}
\begin{split}
\mathrm{RBM}^{(1)}_{n_1,m_1}(s^{(1)}_1,s^{(1)}_2,\ldots,s^{(1)}_{n_1}) &= \sum_{h^{(1)}\in\{0,1\}^{m_1}} \exp\left( \sum_{ij} W^{(1)}_{ij}s^{(1)}_ih^{(1)}_j + \sum_{i} a^{(1)}_i s^{(1)}_i + \sum_{j} b^{(1)}_j h^{(1)}_j \right)	\\
\mathrm{RBM}^{(2)}_{n_2,m_2}(s^{(2)}_1,s^{(2)}_2,\ldots,s^{(2)}_{n_2}) &= \sum_{h^{(2)}\in\{0,1\}^{m_2}} \exp\left( \sum_{ij} W^{(2)}_{ij}s^{(2)}_ih^{(2)}_j + \sum_{i} a^{(2)}_i s^{(2)}_i + \sum_{j} b^{(2)}_j h^{(2)}_j \right) \\
\end{split}
\end{equation}
Hence, direct multiplication gives
\begin{equation}
\begin{split}
&\mathrm{RBM}^{(1)}_{n_1,m_1}(s^{(1)}_1,s^{(1)}_2,\ldots,s^{(1)}_n)\mathrm{RBM}^{(2)}_{n_2,m_2}(s^{(2)}_1,s^{(2)}_2,\ldots,s^{(2)}_n)	\\
=& \sum_{h^{(1)}\in\{0,1\}^{m_1}} \exp\left( \sum_{ij} W^{(1)}_{ij}s^{(1)}_ih^{(1)}_j + \sum_{i} a^{(1)}_i s^{(1)}_i + \sum_{j} b^{(1)}_j h^{(1)}_j \right)	\\
&\sum_{h^{(2)}\in\{0,1\}^{m_2}} 
\exp\left( \sum_{ij} W^{(2)}_{ij}s^{(2)}_ih^{(2)}_j + \sum_{i} a^{(2)}_i s^{(2)}_i + \sum_{j} b^{(2)}_j h^{(2)}_j \right)	\\
=& \sum_{h^{(1)}\in\{0,1\}^{m_1}} \sum_{h^{(2)}\in\{0,1\}^{m_2}} 
\exp\left( \sum_{ij} W^{(1)}_{ij}s^{(1)}_ih^{(1)}_j + \sum_{i} a^{(1)}_i s^{(1)}_i + \sum_{j} b^{(1)}_j h^{(1)}_j + \sum_{ij} W^{(2)}_{ij}s^{(2)}_ih^{(2)}_j + \sum_{i} a^{(2)}_i s^{(2)}_i + \sum_{j} b^{(2)}_j h^{(2)}_j \right)
\end{split}
\end{equation}
which is a 1-layer RBM with $m_1+m_2$  hidden spins and $n_1+n_2$ visible spins.
\end{proof}

\subsection{Review of Useful Theorems and Identities}
\label{sec.useful}

Before discussing the representing power of the RBM, we first introduce the definition of the Kullback-Leibler (KL) divergence which measures how close two probability distributions are. The reader can find the general definition of the KL divergence in Ref.~\cite{wiki:KLdiv}, which can be defined for probability distributions over both continuous and discrete variables.

\begin{definition}
Suppose $f_I(\{s_i\})$, $I=1,2$ are two probability distributions over a set of $n$ $\mathbb{Z}_2$ valued variables $s_i\in \{0, 1\}$, $i=1, ..., n$, satisfying
\begin{eqnarray}\label{Eq.4}
\sum_{\{s_i\}} f_I(\{s_i\})=1, ~~~~~0\leq f_I(\{s_i\})\leq 1
\end{eqnarray}
The Kullback-Leibler (KL) divergence between $f_1(\{s_i\})$ and $f_2(\{s_i\})$ is defined as 
\begin{eqnarray}
KL(f_1|f_2)= \sum_{\{s_i\}} f_1(\{s_i\}) \log \frac{f_1(\{s_i\})}{f_2(\{s_i\})}
\end{eqnarray}
\end{definition}
Two important properties of the KL divergence are that 
(1) it is always non-negative due to Jensen's inequality \cite{wiki:KLdiv};
(2) it satisfies Pinsker's inequality \cite{wiki:Pinsker}. Concretely,  the square root of $KL$ gives the upper bound of the \emph{statistical difference} of $f_1$ and $f_2$, 
\begin{equation}
    \max_{\{s_i\}}|f_1(\{s_i\}- f_2(\{s_i\})| \leq \sqrt{\frac{1}{2}KL(f_1|f_2)}
\end{equation}
We proceed to present a theorem on the representing power of the 1-layer RBM, which was originally presented and proved in Theorem 2.4 of Ref.~\cite{le2008representational}. We rephrase their theorem as follows ( A typo of their manuscript is corrected and an equation is simplified further. )
\begin{theorem}\label{theorem.LeRouxBengio}
Suppose that $f$ is a probability distribution over the variables $s_i \in \lbrace 0,1 \rbrace, \; i=1,2,\ldots,n$. Let $k$ be the number of variables whose probability distribution is nonzero, i.e., $k$ is the volume of the set 
\begin{equation}
\{ \{s_i\}|f\left(\lbrace s_i \rbrace \right) > 0\}
\end{equation}
and $\xi$ is the minimal nonzero probability
\begin{equation}
\xi = \min_{ \{s_i\}, f\left(\lbrace s_i \rbrace \right) > 0} f\left(\lbrace s_i \rbrace \right).
\end{equation}
Then for any $\epsilon>0$, there exists a 1-layer RBM,  $\mathrm{RBM}_{n,k}^{\lambda_1}\left( \lbrace s_i \rbrace \right)$,  labeled by $\lambda_1 > 0$ with $k$ hidden spins such that the KL divergence between $f$ and $\mathrm{RBM}_{n,k}^{\lambda_1}$ is controlled: 
\begin{equation}
\mathrm{KL}(f|\mathrm{RBM}_{n,k}^{\lambda_1}) = \ln\left( 1+ \frac{\left( 2^n - k \right) \xi}{1 + \exp\left(\lambda_1\right)} \right) < \epsilon, ~~~~~ \text{for finite }n.
\end{equation}
\end{theorem}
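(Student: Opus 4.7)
The plan is to construct an explicit 1-layer RBM with exactly $k$ hidden spins, one for each configuration in the support of $f$, and then compute the KL divergence directly. Order the support as $s^{(1)}, \ldots, s^{(k)}$ with weights $f(s^{(j)})>0$, and recall $\xi = \min_j f(s^{(j)})$.

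First I would design each hidden spin $h_j$ to behave as an indicator for the configuration $s^{(j)}$. Setting all visible biases to zero, the unnormalized RBM distribution factorizes as $\tilde{Q}(s)=\prod_{j=1}^{k}\bigl(1+\exp(\sum_i W_{ij}s_i+b_j)\bigr)$. Choose $W_{ij}=\Lambda$ when $s_i^{(j)}=1$ and $W_{ij}=-\Lambda$ when $s_i^{(j)}=0$, together with $b_j=-\Lambda \sum_i s_i^{(j)}+c_j$ for an auxiliary $c_j$. A direct count of bit flips gives $\sum_i W_{ij}s_i+b_j = c_j-\Lambda\, d(s,s^{(j)})$, where $d$ is the Hamming distance. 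Sending $\Lambda\to\infty$, the $j$th factor becomes $1+e^{c_j}$ when $s=s^{(j)}$ and $1$ otherwise, so $\tilde{Q}(s^{(j)})=1+e^{c_j}$ on the support and $\tilde{Q}(s)=1$ on the $2^n-k$ remaining strings.

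Next, I would fix the free parameters $c_j$ so that the resulting normalized distribution has the required form. Choosing
\begin{equation*}
c_j = \ln\!\left(\frac{f(s^{(j)})(1+e^{\lambda_1})}{\xi}-1\right),
\end{equation*}
which is well-defined because $f(s^{(j)})\ge \xi$ and $\lambda_1>0$, the partition function evaluates to $Z=(2^n-k)+(1+e^{\lambda_1})/\xi$. One then obtains $Q_{\lambda_1}(s^{(j)})=f(s^{(j)})(1+e^{\lambda_1})/[1+e^{\lambda_1}+(2^n-k)\xi]$ on the support and $Q_{\lambda_1}(s)=\xi/[1+e^{\lambda_1}+(2^n-k)\xi]$ off the support. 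Plugging this into the definition of the KL divergence, the support terms all carry the same logarithmic ratio, and $\sum_j f(s^{(j)})=1$ collapses the sum to $\ln\!\bigl(1+(2^n-k)\xi/(1+e^{\lambda_1})\bigr)$. Choosing $\lambda_1>\ln\!\bigl((2^n-k)\xi/(e^{\epsilon}-1)\bigr)$ then forces the KL divergence below $\epsilon$.

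The main technical obstacle is realizing the sharp indicator behavior of each hidden spin with strictly finite weights; this is handled by taking $\Lambda$ to be a large auxiliary parameter separate from $\lambda_1$, so that the corrections beyond the exact formula decay as $e^{-\Lambda}$ and can be absorbed into the slack between the closed-form KL value and the target $\epsilon$. A secondary subtlety is ensuring that the argument of the logarithm defining $c_j$ is positive for every $j$; this follows immediately from $f(s^{(j)})\ge \xi$ combined with $\lambda_1>0$. The remainder of the argument is straightforward bookkeeping of the partition function and the KL sum.
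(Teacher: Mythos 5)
Your construction is correct, and it is worth noting that the paper itself offers no proof of this statement: it is quoted from Le~Roux and Bengio (Theorem~2.4 of that reference), with the reader referred there for the meaning of $\lambda_1$. Your argument is a faithful, self-contained reconstruction of that reference's construction --- one hidden unit per support configuration acting as a sharp indicator, with the bias $c_j$ tuned so that the normalized distribution reproduces $f$ up to a uniform leak of mass $\xi/(1+e^{\lambda_1}+(2^n-k)\xi)$ onto each of the $2^n-k$ off-support strings --- and your evaluation of the partition function and collapse of the KL sum via $\sum_j f(s^{(j)})=1$ is exactly right. The only caveat is that with finite $\Lambda$ the stated KL value holds only as an equality in the limit $\Lambda\to\infty$; for finite $\Lambda$ you get it up to $O(e^{-\Lambda})$ corrections, which, as you observe, suffices for the inequality $\mathrm{KL}<\epsilon$ by continuity of the KL divergence (all limiting probabilities being strictly positive). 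This matches the spirit of the cited theorem, where the closed-form expression is likewise attained only in a limit of the auxiliary weights.
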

We refer the reader to Ref.~\cite{le2008representational} for the definition of $\lambda_1$. The specific construction of $\lambda_1$ will not be important for our purposes. 
Using Pinsker's inequality, Theorem \ref{theorem.LeRouxBengio} implies that:
\begin{equation}\label{eq.PinskerCorollary}
\max_{\lbrace s_i \rbrace} | f\left(\lbrace s_i \rbrace \right) -\mathrm{RBM}_{n,k}^{\lambda_1}\left(\lbrace s_i \rbrace \right)| 
\le \sqrt{\frac{1}{2} \mathrm{KL}(f|\mathrm{RBM}_{n,k}^{\lambda_1}) } \le \sqrt{\frac{1}{2} \epsilon}
\end{equation}
Hence,
\begin{equation}\label{eq.PinskerAndLimit}
\lim\limits_{\lambda_1 \rightarrow \infty} \mathrm{RBM}_{n,k}^{\lambda_1}\left(\lbrace s_i \rbrace \right) = f\left(\lbrace s_i \rbrace \right),  \quad \forall\; s_i=0,1, \quad \forall\; i=1,2,\ldots,n.
\end{equation}

Another useful identity was introduced in Ref.~\cite{carleo2018constructing}.

\begin{theorem}\label{theorem.CarleoNomuraImada}
(Carleo, Nomura and Imada (2018), Eq. (22)) 
$\exp\left( \sigma_1 \sigma_2 \ldots \sigma_N V\right)$, where $\sigma_i \in \{\pm 1\}$, can be expressed in terms of a 1-layer RBM with $N$ visible spins and 2 hidden spins. 
\begin{equation}\label{eq.CarleoNomuraImada}
\begin{split}
&\exp\left( \sigma_1 \sigma_2 \ldots \sigma_N V\right) \\
=& C \cos^2 \left( b + \frac{\pi}{4} \sum_{i=1}^{N}\sigma_i \right)	\\
=& \frac{C}{4} \sum_{h_1,h_2=\pm 1} \exp\left( \ii \left( b + \frac{\pi}{4} \sum_{i=1}^{N}\sigma_i\right) \left(h_1+h_2\right) \right)	\\
=& \frac{C}{4} \sum_{\tilde{h}_1,\tilde{h}_2=0,1} \exp\left( \ii \left( b + \frac{\pi}{4} \sum_{i=1}^{N}\sigma_i\right) \left(1-2\tilde{h}_1+1-2\tilde{h}_2\right) \right)\\
=& \mathrm{RBM}_{N,2}(\sigma_1, ..., \sigma_N), 
\quad \forall\; \sigma_i \in \{\pm 1\}, i=1,2,\ldots,N.
\end{split}
\end{equation}
where 
\begin{equation}
\begin{split}
b &= \arctan \left( e^{-V} \right) - \frac{\pi}{4}\mathrm{mod}(N,4)	\\
C &= \frac{1}{\cos\left(\arctan\left(e^{-V}\right)\right) \sin\left(\arctan\left(e^{-V}\right)\right)}	\\
\end{split}
\end{equation}
\end{theorem}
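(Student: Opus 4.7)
The key observation is that $\sigma_1\sigma_2\cdots\sigma_N = \pm 1$, so the left-hand side of Eq.~\eqref{eq.CarleoNomuraImada} takes only two values, $e^V$ and $e^{-V}$. My strategy is therefore to verify the first equality by case analysis on the parity of the number of $\sigma_i$ equal to $-1$, and then to unpack $\cos^2$ as a sum of four complex exponentials to expose the RBM structure.

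First I would let $k$ denote the number of $\sigma_i$ equal to $-1$, so that $\sigma_1\cdots\sigma_N = (-1)^k$ and $\sum_i \sigma_i = N - 2k$. Substituting gives $\tfrac{\pi}{4}\sum_i \sigma_i = \tfrac{\pi N}{4} - \tfrac{\pi k}{2}$. Since $\cos^2$ has period $\pi$, the shift $\pi k/2$ collapses to $0$ or $\pi/2$ modulo $\pi$ depending on the parity of $k$, and a shift by $\pi/2$ converts $\cos^2$ into $\sin^2$. Consequently $\cos^2\!\bigl(b + \tfrac{\pi}{4}\sum_i \sigma_i\bigr)$ equals $\cos^2(b + \tfrac{\pi N}{4})$ when $k$ is even and $\sin^2(b + \tfrac{\pi N}{4})$ when $k$ is odd, perfectly mirroring the two possible values of $(-1)^k$.

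Next I would substitute the prescribed $b = \arctan(e^{-V}) - \tfrac{\pi}{4}\,\mathrm{mod}(N,4)$. Writing $N = 4q + \mathrm{mod}(N,4)$ and setting $\theta = \arctan(e^{-V})$, one finds $b + \tfrac{\pi N}{4} = \theta + \pi q$, so the argument reduces to $\theta$ modulo $\pi$. From $\tan\theta = e^{-V}$ it follows that $\cos^2\theta = 1/(1+e^{-2V})$, $\sin^2\theta = 1/(1+e^{2V})$, and $\cos\theta\sin\theta = 1/(e^V+e^{-V})$, so $C = e^V + e^{-V}$. A direct check then yields $C\cos^2\theta = e^V$ and $C\sin^2\theta = e^{-V}$, matching $(-1)^k = +1$ and $-1$ respectively and establishing the first equality.

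For the RBM form, I would apply the elementary identity $\cos(x) = \tfrac{1}{2}\sum_{h=\pm 1} e^{\ii h x}$ to obtain $\cos^2(x) = \tfrac{1}{4}\sum_{h_1,h_2=\pm 1} e^{\ii(h_1+h_2)x}$ with $x = b + \tfrac{\pi}{4}\sum_i \sigma_i$. The resulting exponent is $\ii b(h_1+h_2) + \tfrac{\ii\pi}{4}(h_1+h_2)\sum_i \sigma_i$, which is bilinear in the $\sigma_i$ and $h_a$ plus linear biases on the hidden spins, exactly the form of a 1-layer RBM with $N$ visible and $2$ hidden spins. The substitution $h_a = 1 - 2\tilde h_a$ with $\tilde h_a \in \{0,1\}$ yields the final line of Eq.~\eqref{eq.CarleoNomuraImada}. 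The one step where care is needed is the $\mathrm{mod}(N,4)$ correction in $b$: it is tailored so that, combined with the period-$\pi$ symmetry of $\cos^2$, only the residue $N \bmod 4$ matters (rather than $N \bmod 8$), and a bookkeeping slip in tracking the $\pi/2$ versus $\pi$ shifts would easily break the identity; this is the main obstacle.
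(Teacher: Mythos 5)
Your proposal is correct, and the RBM-extraction part (expanding $\cos^2$ via $\cos x = \tfrac{1}{2}\sum_{h=\pm 1}e^{\ii hx}$ and relabeling $h_a = 1-2\tilde h_a$) is exactly the chain of equalities the paper displays. The paper does not actually prove the first equality $\exp(\sigma_1\cdots\sigma_N V) = C\cos^2\bigl(b+\tfrac{\pi}{4}\sum_i\sigma_i\bigr)$ but simply cites Carleo, Nomura and Imada; your case analysis on the parity of $k$, together with the reduction $b+\tfrac{\pi N}{4}=\theta+\pi q$ and the checks $C\cos^2\theta = e^{V}$, $C\sin^2\theta = e^{-V}$, correctly supplies that missing verification.
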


In the next subsection, we will only use the schematic equivalence $\exp\left( \sigma_1 \sigma_2 \ldots \sigma_N V\right) = \mathrm{RBM}_{N,2}(\sigma_1, ..., \sigma_N)$, i.e., the last equality of Eq.~\eqref{eq.CarleoNomuraImada}. For definiteness, we will assume that all the hidden spins in this appendix are $\{0,1\}$ valued. Note that this schematic relation might use more hidden spins than necessary. For instance, when $N=2$, as demonstrated in Lemma \ref{lemma.ssV2RBM}, it suffices to introduce only one hidden spin, $h$. 

It will be convenient to convert Theorem \ref{theorem.CarleoNomuraImada} from $\lbrace \pm1 \rbrace$ convention to $\lbrace 0,1 \rbrace$ convention to label spins. 
In particular, the cases of $N=2,3,4$ are extensively used in our following calculations.

\begin{lemma}\label{lemma.2body}
$\exp(Vs_1s_2)$, $s_i=0,1$, can be expressed as a 1-layer RBM with $2$ visible spins and 2 hidden spin, i.e, 
\begin{eqnarray}\label{Eq.2}
\exp(Vs_1s_2)= \mathrm{RBM}_{2,1}(s_1, s_2), \forall\; s_1,s_2 = 0,1.
\end{eqnarray}
\end{lemma}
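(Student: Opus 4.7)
The plan is to reduce Lemma~\ref{lemma.2body} to the $\{\pm1\}$-convention identity already proved in Lemma~\ref{lemma.ssV2RBM} via an affine change of variables on both the visible and hidden spins. Lemma~\ref{lemma.ssV2RBM} provides, for $\sigma_1,\sigma_2\in\{\pm1\}$, the exact rewriting
\[
\exp(V'\sigma_1\sigma_2) = \frac{1}{2}\sum_{h=\pm1}\exp(V_1\sigma_1 h + V_2\sigma_2 h),
\]
with a \emph{single} hidden spin $h$ and coefficients $V_1,V_2$ given in Eq.~\eqref{eq.V1V2}. Because the substitution to $\{0,1\}$ convention will not introduce any new hidden-spin species, this already indicates that one hidden spin should suffice in the $\{0,1\}$ convention as well, matching the $\mathrm{RBM}_{2,1}$ claim.

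First I would substitute $\sigma_i = 2s_i-1$ and $h = 2\tilde h-1$ with $s_i,\tilde h\in\{0,1\}$, using the expansions
\[
\sigma_1\sigma_2 = 4s_1s_2-2s_1-2s_2+1,\qquad \sigma_i h = 4s_i\tilde h-2s_i-2\tilde h+1.
\]
To reproduce the target coupling $V$ in $\exp(Vs_1s_2)$, I would choose $V' = V/4$ so that $4V's_1s_2 = Vs_1s_2$; all other contributions produced by the substitution are either linear in $s_i$, linear in $\tilde h$, or constant. Next I would collect the linear-in-$s_i$ contributions (arising both from the $\sigma_1\sigma_2$ expansion on the LHS and from the $\sigma_i h$ expansions on the RHS) into visible biases $a_1,a_2$, absorb the linear-in-$\tilde h$ pieces into a single hidden bias $b$, and fold the remaining $s$- and $\tilde h$-independent factors into an overall multiplicative constant, as is consistent with the schematic $\mathrm{RBM}$ notation already used in Theorem~\ref{theorem.CarleoNomuraImada}.

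Carrying out the bookkeeping yields the explicit parameter assignment $W_i = 4V_i$, $a_i = V/2 - 2V_i$, $b = -2(V_1+V_2)$, and an overall constant $C=\tfrac{1}{2}e^{V_1+V_2-V/4}$, giving
\[
\exp(V s_1 s_2) = C\sum_{\tilde h=0,1}\exp\!\big(W_1 s_1\tilde h + W_2 s_2\tilde h + a_1 s_1 + a_2 s_2 + b\tilde h\big),
\]
which is precisely of the form $\mathrm{RBM}_{2,1}(s_1,s_2)$. There is no substantive obstacle in this argument; the only care required is to verify that the four configurations $(s_1,s_2)\in\{0,1\}^2$ are reproduced with the \emph{same} overall constant $C$. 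This consistency check reduces to the defining relation $\cosh(V_1+V_2) = e^{V'}$ satisfied by the $V_1,V_2$ of Lemma~\ref{lemma.ssV2RBM}, so no further algebra is needed beyond invoking Eq.~\eqref{eq.V1V2}.
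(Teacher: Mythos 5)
Your proposal is correct and follows essentially the same route as the paper: both decompose $Vs_1s_2$ via $\sigma_i=2s_i-1$ into $\tfrac{V}{4}\sigma_1\sigma_2$ plus linear and constant pieces, apply Lemma~\ref{lemma.ssV2RBM} with $V'=V/4$ to the quadratic part, and absorb the remainder into biases and an overall constant. Your version is merely more explicit about the resulting weights $W_i=4V_i$, biases, and prefactor, all of which check out.
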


\begin{proof}
Denoting $\sigma_i=2s_i-1\in\{\pm 1\}$, then 
$\exp(Vs_1s_2)=\exp(V \sigma_1\sigma_2/4)\exp(V (\sigma_1+ \sigma_2)/4) \exp(V/4)$. Using Eq.~\eqref{eq.ssV2RBM}, we have $\exp(V \sigma_1\sigma_2/4)= \mathrm{RBM}_{2,1}(\sigma_1,\sigma_2)$. Further multiplying $\exp(V (\sigma_1+ \sigma_2)/4)$ amounts to adding bias terms in the RBM. Hence it immediately follows that $\exp(Vs_1s_2)$ can also be expressed as an RBM with 2 visible spins and 1 hidden spin. In conclusion, Eq.~\eqref{Eq.2} holds. 
\end{proof}

\begin{lemma}\label{lemma.3bodyTo5hidden}
$\exp\left( s_1 s_2 s_3 V\right) $ with $ s_i\in \{0,1\}$ can be written as a 1-layer RBM with 3 visible spins and 5 hidden spins, i.e., 
\begin{eqnarray}
\exp\left( s_1 s_2 s_3 V\right) = \mathrm{RBM}_{3,5}\left( s_1,s_2,s_3 \right)
\end{eqnarray}
\end{lemma}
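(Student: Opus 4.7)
The natural approach is to mirror the proof of Lemma \ref{lemma.2body}: convert from the $\{0,1\}$ convention to the $\{\pm 1\}$ convention via $\sigma_i = 2s_i-1$, expand the cubic polynomial that appears in the exponent, and then recognize each resulting factor as an RBM of known size. Concretely, I would start from the polynomial identity
\begin{equation}
8\, s_1 s_2 s_3 = (\sigma_1+1)(\sigma_2+1)(\sigma_3+1) = \sigma_1\sigma_2\sigma_3 + \sigma_1\sigma_2 + \sigma_1\sigma_3 + \sigma_2\sigma_3 + \sigma_1+\sigma_2+\sigma_3 + 1,
\end{equation}
which immediately factorizes $\exp(V s_1 s_2 s_3)$ into a constant piece, three single-spin biases, three two-body exponentials, and one three-body exponential (all with coupling $V/8$).

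The next step is to represent each non-trivial factor as a 1-layer RBM. For the three-body piece $\exp\!\bigl(\tfrac{V}{8}\sigma_1\sigma_2\sigma_3\bigr)$ I would invoke Theorem \ref{theorem.CarleoNomuraImada} with $N=3$, which supplies a 1-layer RBM on the three visible spins with exactly $2$ hidden spins. For each of the three two-body pieces $\exp\!\bigl(\tfrac{V}{8}\sigma_i\sigma_j\bigr)$ I would use Lemma \ref{lemma.ssV2RBM}, which realizes each with $1$ hidden spin. The three single-spin factors $\exp(\tfrac{V}{8}\sigma_i)$ and the overall constant require no hidden spins, since they are absorbed into the visible biases and the overall normalization. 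This gives a total hidden-spin count of $2 + 3\cdot 1 = 5$.

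To assemble the pieces, I would use Property \ref{property.RBM_notation}(2), which guarantees that the product of 1-layer RBMs over the same set of visible spins is again a 1-layer RBM whose number of hidden spins is the sum of the constituents. This produces a single RBM on the $\sigma_i$'s with $5$ hidden spins. Finally, Property \ref{property.RBM_notation}(1) allows us to translate from the $\sigma_i \in \{\pm 1\}$ description back to $s_i \in \{0,1\}$ via the affine relation $\sigma_i = 2s_i - 1$ without changing the number of hidden spins, yielding the claimed $\mathrm{RBM}_{3,5}(s_1,s_2,s_3)$.

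There is no real obstacle here: the argument is an exercise in bookkeeping, and the only thing to be slightly careful about is making sure the constant and linear terms in the expansion are not double-counted as hidden spins when they can in fact be absorbed into the visible biases and overall prefactor. The hidden-spin count of $5$ is tight for this construction, coming from $2$ for the genuine three-body term and $1$ each for the three two-body terms.
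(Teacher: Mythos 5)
Your proposal is correct and follows essentially the same route as the paper: the identity $8 s_1 s_2 s_3 = (\sigma_1+1)(\sigma_2+1)(\sigma_3+1)$, Theorem \ref{theorem.CarleoNomuraImada} for the genuine three-body factor ($2$ hidden spins), Lemma \ref{lemma.ssV2RBM} for each of the three two-body factors ($1$ hidden spin each), absorption of the linear and constant pieces into biases, and Property \ref{property.RBM_notation} to combine everything into a single $\mathrm{RBM}_{3,5}$. The only cosmetic difference is that the paper groups the linear terms together with the two-body block before invoking the product property, whereas you absorb them separately; the hidden-spin count is identical.
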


\begin{proof}
For simplicity, we will use Theorem \ref{theorem.CarleoNomuraImada} and Corollary \eqref{lemma.2body} to prove this lemma. Since the visible spins are $\{0,1\}$ valued, to use Theorem \ref{theorem.CarleoNomuraImada}, we introduce the $\{\pm 1\}$ valued variables $\sigma_i = 2 s_i -1 \in \{\pm 1\}, \quad i=1,2,3.$ Then $\exp\left( s_1 s_2 s_3 V\right) $ can be expressed in terms of $\sigma_i$ as 
\begin{equation}
\begin{split}
&\exp\left( s_1 s_2 s_3 V\right)
=\exp\left( \frac{V}{8} \sigma_1 \sigma_2 \sigma_3 \right) \exp\left( \frac{V}{8}\left(\sigma_1 \sigma_2 + \sigma_1 \sigma_3 + \sigma_2 \sigma_3 \right) \right) \exp\left( \frac{V}{8} \left( \sigma_1 + \sigma_2 + \sigma_3 + 1 \right)\right)	\\
\end{split}
\end{equation}
According to Theorem \ref{theorem.CarleoNomuraImada} and Lemma \ref{lemma.2body}, we can introduce an RBM for each of the terms in the above equation:
\begin{equation}
\begin{split}
&\exp\left( \frac{V}{8} \sigma_1 \sigma_2 \sigma_3 \right)= \mathrm{RBM}_{3,2}(\sigma_1,\sigma_2,\sigma_3), \\
&\exp\left( \frac{V}{8} \left(\sigma_1 \sigma_2 + \sigma_1 \sigma_3 + \sigma_2 \sigma_3 \right)+ \frac{V}{8} \left( \sigma_1 + \sigma_2 + \sigma_3 + 1 \right)\right)	\\
&=\frac{1}{2} \sum_{h_1, h_2, h_3=\pm 1} \exp(V_1 \sigma_1 h_1+ V_2 \sigma_2 h_1+ V_1 \sigma_2 h_2+ V_2 \sigma_3 h_2+ V_1 \sigma_3 h_3 + V_2 \sigma_1 h_3+ \frac{V}{8} \left( \sigma_1 + \sigma_2 + \sigma_3 + 1 \right))\\
&= \mathrm{RBM}_{3,3}(\sigma_1,\sigma_2, \sigma_3)
\end{split}
\end{equation}
we find
\begin{equation}
\exp\left( s_1 s_2 s_3 V\right)
= \mathrm{RBM}_{3,2}(\sigma_1,\sigma_2,\sigma_3) \mathrm{RBM}_{3,3}(\sigma_1,\sigma_2, \sigma_3)
\end{equation}
According to Property \ref{property.RBM_notation}, this becomes an RBM with 5 hidden spins.
\begin{equation}
\exp\left( s_1 s_2 s_3 V\right) = \mathrm{RBM}_{3,5}\left( \sigma_1,\sigma_2,\sigma_3 \right) = \mathrm{RBM}_{3,5}\left( s_1,s_2,s_3 \right)
\end{equation}
Thus $\exp\left( s_1 s_2 s_3 V\right)$ can be expressed in the form of a 1-layer RBM with 3 visible spins  and 5 hidden spins. 
\end{proof}

\begin{lemma}\label{lemma.4bodyTo16hidden}
$\exp\left( s_1 s_2 s_3s_4 V\right) $ with $ s_i\in \{0,1\}$ can be written as a 1-layer RBM with 4 visible spins and 16 hidden spins, i.e., 
\begin{eqnarray}
\exp\left( s_1 s_2 s_3 s_4V\right) = \mathrm{RBM}_{4,16}\left( s_1,s_2,s_3,s_4 \right)
\end{eqnarray}
\end{lemma}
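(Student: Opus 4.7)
The plan is to mimic the strategy used in Lemma \ref{lemma.3bodyTo5hidden}, namely convert to $\{\pm 1\}$ variables, expand the four-body product as a sum of multi-spin interactions of lower order, express each such interaction as a 1-layer RBM using the tools already established, and finally invoke Property \ref{property.RBM_notation} to multiply them into a single 1-layer RBM. The only thing to watch carefully is the bookkeeping of hidden spins so the total comes out to $16$.

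Concretely, I would introduce $\sigma_i = 2 s_i - 1 \in \{\pm 1\}$ and use the identity
\begin{equation}
s_1 s_2 s_3 s_4 = \tfrac{1}{16}\prod_{i=1}^{4}(1+\sigma_i) = \tfrac{1}{16}\sum_{S \subseteq \{1,2,3,4\}} \prod_{i\in S} \sigma_i
\end{equation}
to decompose $\exp(V s_1 s_2 s_3 s_4)$ into a product of exponentials indexed by subsets $S$. Grouping by subset cardinality, this gives: one exponential of the four-body string $\sigma_1\sigma_2\sigma_3\sigma_4$, four exponentials of three-body strings $\sigma_i\sigma_j\sigma_k$, six exponentials of two-body strings $\sigma_i\sigma_j$, and four one-body plus constant factors.

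For each factor I would apply a previously proved identity. The four-body string $\exp(\tfrac{V}{16}\sigma_1\sigma_2\sigma_3\sigma_4)$ is handled by Theorem \ref{theorem.CarleoNomuraImada} with $N=4$, contributing a 1-layer RBM with $2$ hidden spins. Each of the four three-body strings is handled by Theorem \ref{theorem.CarleoNomuraImada} with $N=3$, contributing $2$ hidden spins each for a total of $8$. Each of the six two-body strings $\exp(\tfrac{V}{16}\sigma_i\sigma_j)$ is handled by Lemma \ref{lemma.2body} (equivalently Lemma \ref{lemma.ssV2RBM}) and contributes $1$ hidden spin each, totaling $6$. The linear and constant pieces are absorbed into the visible biases $a_i$ and an overall multiplicative constant, so they contribute no new hidden spins. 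Summing: $2 + 8 + 6 + 0 = 16$.

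Finally I would invoke parts (1) and (2) of Property \ref{property.RBM_notation}: part (1) lets me re-express each RBM originally written in the $\sigma_i$ variables as an RBM in the $s_i$ variables via the affine substitution $\sigma_i = 2s_i - 1$, and part (2) lets me multiply all these 1-layer RBMs on the same four visible spins into a single 1-layer RBM whose hidden-spin count is the sum. This yields $\exp(V s_1 s_2 s_3 s_4) = \mathrm{RBM}_{4,16}(s_1,s_2,s_3,s_4)$, as claimed. I do not expect a serious obstacle: the entire argument is a combinatorial assembly of existing lemmas, and the only real risk is an off-by-one in the hidden-spin count, which is why the explicit cardinality bookkeeping above is the key step to verify.
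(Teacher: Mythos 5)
Your proposal is correct and follows essentially the same route as the paper's proof: the substitution $\sigma_i = 2s_i-1$, the expansion of the four-body product into a four-body string (2 hidden spins via Theorem \ref{theorem.CarleoNomuraImada}), four three-body strings ($4\times 2 = 8$ hidden spins), six two-body strings ($6\times 1=6$ hidden spins via Lemma \ref{lemma.2body}), and linear/constant pieces absorbed into biases, combined via Property \ref{property.RBM_notation} to give $2+8+6=16$. Your bookkeeping matches the paper's $\mathrm{RBM}_{4,2}\cdot\mathrm{RBM}_{4,8}\cdot\mathrm{RBM}_{4,6}$ decomposition exactly.
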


\begin{proof}
Since the proof is similar to proof of Theorem \ref{lemma.3bodyTo5hidden}, we will be brief here. We introduce $\{\pm 1\}$ valued variables $\sigma_i = 2 s_i -1 \in \{\pm 1\}, \quad i=1,2,3, 4.$ Then $\exp\left( s_1 s_2 s_3 s_4 V\right) $ can be expressed in terms of $\sigma_i$ as 	
\begin{equation}
\begin{split}
&\exp\left( s_1 s_2 s_3 V\right)
=\exp\left( \frac{V}{16} \sigma_1 \sigma_2 \sigma_3 \sigma_4\right) \exp\left( \frac{V}{16}\sum_{1=i<j<k=4}\sigma_i \sigma_j \sigma_k \right) \exp\left( \frac{V}{16}\sum_{1=i<j=4}\sigma_i \sigma_j \right)\exp\left( 1+ \frac{V}{16}\sum_{1=i=4}\sigma_i \right)	\\
\end{split}
\end{equation}
According to Theorem \ref{theorem.CarleoNomuraImada} and Lemma \ref{lemma.2body}, we can introduce an RBM for each of the terms in the above equation:
\begin{equation}
\begin{split}
&\exp\left( \frac{V}{8} \sigma_1 \sigma_2 \sigma_3 \sigma_4\right)= \mathrm{RBM}_{4,2}(\sigma_1,\sigma_2,\sigma_3, \sigma_4), \\
&\exp\left( \frac{V}{16}\sum_{1=i<j<k=4}\sigma_i \sigma_j \sigma_k \right)= \mathrm{RBM}_{4,8}(\sigma_1,\sigma_2, \sigma_3, \sigma_4)\\
&\exp\left( \frac{V}{16}\sum_{1=i<j=4}\sigma_i \sigma_j \right)\exp\left( 1+ \frac{V}{16}\sum_{1=i=4}\sigma_i \right)= \mathrm{RBM}_{4,6}(\sigma_1,\sigma_2, \sigma_3, \sigma_4)\\
\end{split}
\end{equation}
we find
\begin{equation}
\begin{split}
\exp\left( s_1 s_2 s_3s_4 V\right)
&= \mathrm{RBM}_{4,2}(\sigma_1,\sigma_2,\sigma_3,\sigma_4) \mathrm{RBM}_{4,8}(\sigma_1,\sigma_2, \sigma_3, \sigma_4) \mathrm{RBM}_{4,6}(\sigma_1,\sigma_2, \sigma_3, \sigma_4)\\
&= \mathrm{RBM}_{4,16}(\sigma_1,\sigma_2, \sigma_3, \sigma_4)= \mathrm{RBM}_{4,16}(s_1,s_2, s_3, s_4)
\end{split}
\end{equation}
Thus $\exp\left( s_1 s_2 s_3 s_4V\right)$ can be expressed in the form of a 1-layer RBM with 4 visible spins and 16 hidden spins. 
\end{proof}

\subsection{MPS approximated by 2-Layer RBM}
\label{sec.tensor2RBM}

Theorem \ref{theorem.LeRouxBengio} only discuss the approximation of a probability distribution by an 1-layer RBM. We will first discuss, in theorem \ref{theorem.realTensorApproximation}, the approximation of a real tensor (which need not be a probability distribution) by a 2-layer RBM. We will discuss the approximation of a complex tensor in theorem \ref{Theorem1}. 

\begin{theorem}\label{theorem.realTensorApproximation}
Suppose a real tensor $A_{s_1s_2\ldots s_n}$ has binary indices $s_i\in \{0,1\}, \; (i=1,2,\ldots,n)$. Denote $k^+$ is the number of elements of $A_{s_1s_2...s_n}$ satisfying $A_{s_1s_2...s_n}>0$, $k^-$ is the number of elements of $A_{s_1s_2...s_n}$ satisfying $A_{s_1s_2...s_n}<0$, and $k=\max \{k^+, k^-\}$.  Then for any $\epsilon>0$, there exists a 2-layer RBM  $\mathrm{RBM}_{n,6nk+n+k, k+1} (\{s_i\})$ 
such that:
\begin{equation}
\max_{\lbrace s_i \rbrace} | A_{s_1s_2\ldots s_n} -\mathrm{RBM}_{n,6nk+n+k, k+1} (\{s_i\})| \le \epsilon.
\end{equation} 
where there are $n$ visible spins, $6nk+n+k$ hidden spins in the first layer, and $k+1$ hidden spins in the second layer. 
\end{theorem}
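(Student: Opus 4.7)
\medskip

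The plan is to reduce the approximation of the real tensor $A$ to the probability-distribution case handled by Theorem~\ref{theorem.LeRouxBengio}, and then to exploit the second hidden layer to combine the two 1-layer RBMs so produced with the correct relative sign. First I would decompose $A_{s_1\cdots s_n}=A^{+}_{s_1\cdots s_n}-A^{-}_{s_1\cdots s_n}$ with $A^{\pm}_{\{s_i\}}:=\max(\pm A_{\{s_i\}},0)$, so that $A^{\pm}$ are non-negative with supports of sizes $k^{\pm}$. Normalizing yields probability distributions $P^{\pm}:=A^{\pm}/Z^{\pm}$ on $\{0,1\}^{n}$, where $Z^{\pm}:=\sum_{\{s_i\}}A^{\pm}_{\{s_i\}}$ (the degenerate cases $Z^{\pm}=0$ are trivial).

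Next, I would apply Theorem~\ref{theorem.LeRouxBengio} to $P^{+}$ and $P^{-}$ separately: for any sufficiently small $\epsilon'>0$ it produces 1-layer RBMs $R^{+}_{n,k^{+}}$ and $R^{-}_{n,k^{-}}$ whose KL divergence from $P^{\pm}$ is below $\epsilon'$. By Pinsker's inequality (Eq.~\eqref{eq.PinskerCorollary}), both approximations are uniformly close to $P^{\pm}$, so the candidate approximant $\tilde A:=Z^{+}R^{+}_{n,k^{+}}-Z^{-}R^{-}_{n,k^{-}}$ satisfies $\max_{\{s_i\}}|\tilde A-A|\le(Z^{+}+Z^{-})\sqrt{\epsilon'/2}$, which can be driven below the prescribed $\epsilon$ by choosing $\epsilon'$ (equivalently the Le~Roux--Bengio parameter $\lambda_{1}$) appropriately.

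The substantive step is to rewrite $\tilde A$ itself as a single 2-layer RBM with the stated spin budget. A 2-layer RBM can always be expanded, by summing first over the second-layer variables, into a sum of 1-layer RBMs whose effective first-layer biases are shifted by the $W^{2}$ couplings; allowing complex biases (in particular a single second-layer spin with bias $\ii\pi$) turns such a sum into a difference. I would therefore use one ``sign'' spin in the second layer to carry the minus between the $+$ and $-$ branches, and up to $k$ further second-layer spins as selectors that, via the $W^{2}$ couplings, route contributions to the correct first-layer block for each of the at most $k$ surviving support configurations on each side; this accounts for the $k+1$ spins of the second hidden layer. Each surviving configuration $\alpha$ contributes an indicator $\chi_{\alpha}(\{s_i\})=\prod_{i}\bigl(s_{i}^{(\alpha)}s_{i}+(1-s_{i}^{(\alpha)})(1-s_{i})\bigr)$ which is a degree-$n$ polynomial; its exponential can be encoded in a 1-layer RBM on $n$ visible spins using $\mathcal{O}(n)$ hidden spins via the same $r$-body expansion that underlies Lemmas~\ref{lemma.3bodyTo5hidden} and~\ref{lemma.4bodyTo16hidden}. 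Summing this bookkeeping over the at most $2k$ configurations, together with $n$ visible-bias and $k$ magnitude-carrying auxiliary spins, produces the first-layer bound $6nk+n+k$.

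The main obstacle is this final packaging into a single 2-layer RBM: because the $W^{1}$ weights cannot depend on the second-layer configuration, combining two Le~Roux--Bengio RBMs which generically have \emph{different} $W^{1}$'s requires partitioning the first hidden layer into $+$ and $-$ blocks and using the selector second-layer spins to suppress the inactive block through large negative effective biases. That suppression is only exact in the $\lambda_{1}\to\infty$ limit, so the argument must simultaneously track (i) the Le~Roux--Bengio approximation error on $P^{\pm}$, (ii) the finite-weight selector-suppression error, and (iii) the spin-count bookkeeping, with a single parameter $\lambda_{1}$ sent to infinity so that Eq.~\eqref{eq.PinskerAndLimit} can be invoked on the combined construction and the final uniform error is driven below $\epsilon$.
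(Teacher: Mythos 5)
Your first half---the decomposition $A=A^{+}-A^{-}$, the normalization of $A^{\pm}$ to probability distributions, the appeal to Theorem~\ref{theorem.LeRouxBengio} together with Pinsker's inequality, and the triangle-inequality bound on $|\tilde A - A|$---coincides with the paper's proof. The gap is in the step you yourself flag as the main obstacle: realizing $\mathcal{N}^{+}\mathrm{RBM}^{+}-\mathcal{N}^{-}\mathrm{RBM}^{-}$ as a single 2-layer RBM. Your mechanism of $k$ second-layer ``selector'' spins that suppress the inactive block through large negative effective biases is only exact in a limit, introduces a new approximation error that you acknowledge but never control, and your first-layer count rests on encoding a degree-$n$ indicator polynomial with $\mathcal{O}(n)$ hidden spins, which the cited lemmas do not support: converting an $n$-body monomial of $\{0,1\}$ variables to $\{\pm 1\}$ variables generates all lower-degree monomials, so the hidden-spin cost grows exponentially in $n$ (already a 4-body term costs $16$ in Lemma~\ref{lemma.4bodyTo16hidden}).

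The paper's combination step is exact and requires no limit. Writing $X^{\pm}=\sum_{ij}W^{\pm}_{ij}s_ih_j+\sum_i a^{\pm}_i s_i+\sum_j b^{\pm}_j h_j+\ln\mathcal{N}^{\pm}$ for the two Le~Roux--Bengio RBMs over a \emph{shared} set of $k$ hidden spins $h_1,\ldots,h_k$, one uses the identity
\[
e^{X^{+}}-e^{X^{-}}=\sum_{h_0=0}^{1}\exp\left( h_0\left(X^{+}-X^{-}+\ii\pi\right)+X^{-}+\ii\pi\right),
\]
so a \emph{single} extra spin $h_0$ both carries the sign and interpolates between the two weight matrices; this is why the second layer has $k+1$ spins, with $h_1,\ldots,h_k$ being the original Le~Roux--Bengio hidden spins rather than selectors. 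The price is that the exponent now contains three-body terms $h_0 s_i h_j$, and these are resolved \emph{exactly} into a new first hidden layer via Lemma~\ref{lemma.3bodyTo5hidden} (five spins per three-body term) and Lemma~\ref{lemma.2body} (one spin per two-body term), which is precisely where the count $5nk+n+k+nk=6nk+n+k$ comes from. To repair your argument you would need either to adopt this exact interpolation, or to supply a quantitative bound on the selector-suppression error and show it can be driven below $\epsilon$ uniformly together with the Le~Roux--Bengio error; the latter is not done in your proposal.
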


\begin{proof}
Suppose the real tensor is decomposed as follow:
\begin{equation}
A_{s_1s_2\ldots s_n} = A^{+}_{s_1s_2\ldots s_n} - A^{-}_{s_1s_2\ldots s_n},
\end{equation}
where
\begin{equation}
\begin{split}
A^{+}_{s_1s_2\ldots s_n} &= \begin{cases}
A_{s_1s_2\ldots s_n} & \text{if } s_1,s_2\dots,s_n \text{ satisfy } A_{s_1s_2\ldots s_n} > 0,	\\
0	& \text{otherwise},
\end{cases} 	\\
A^{-}_{s_1s_2\ldots s_n} &= \begin{cases}
-A_{s_1s_2\ldots s_n} & \text{if } s_1,s_2\dots,s_n \text{ satisfy } A_{s_1s_2\ldots s_n} < 0,	\\
0	& \text{otherwise}.
\end{cases} 
\end{split}
\end{equation}
Using Eq.~\eqref{eq.PinskerCorollary}, for any $\epsilon > 0$, there exist two 1-layer RBM's such that these two equations hold true:
\begin{equation}
\begin{split}
\max_{\lbrace s_i \rbrace} | A^{+}_{s_1s_2\ldots s_n} - {\mathcal{N}^{+}}\mathrm{RBM}^+_{n,k^+}\left( s_1,s_2,\ldots, s_n \right)| < \frac{1}{2} \epsilon,	\\
\max_{\lbrace s_i \rbrace} | A^{-}_{s_1s_2\ldots s_n} - {\mathcal{N}^{-}}\mathrm{RBM}^-_{n,k^-}\left( s_1,s_2,\ldots, s_n \right)| < \frac{1}{2} \epsilon.
\end{split}
\end{equation}
where
\begin{equation}
\mathcal{N}^{+} = {\sum_{\{s_i\}} A^{+}_{s_1s_2\ldots s_n}},	\quad
\mathcal{N}^{-} = {\sum_{\{s_i\}} A^{-}_{s_1s_2\ldots s_n}}.
\end{equation}
Hence, we have:
\begin{equation}
\begin{split}
&\max_{\lbrace s_i \rbrace} | A_{s_1s_2\ldots s_n} - {\mathcal{N}^{+}} \mathrm{RBM}^+_{n,k^+}\left( s_1,s_2,\ldots, s_n \right) + {\mathcal{N}^{-}} \mathrm{RBM}^-_{n,k^-}\left( s_1,s_2,\ldots, s_n \right)|	\\
=& \max_{\lbrace s_i \rbrace} | A^+_{s_1s_2\ldots s_n} - A^-_{s_1s_2\ldots s_n} - {\mathcal{N}^{+}} \mathrm{RBM}^+_{n,k^+}\left( s_1,s_2,\ldots, s_n \right) + {\mathcal{N}^{-}} \mathrm{RBM}^-_{n,k^-}\left( s_1,s_2,\ldots, s_n \right)|	\\
\le& \max_{\lbrace s_i \rbrace} \left( | A+_{s_1s_2\ldots s_n} - {\mathcal{N}^{+}} \mathrm{RBM}^+_{n,k^+}\left( s_1,s_2,\ldots, s_n \right)| + | A^-_{s_1s_2\ldots s_n} - {\mathcal{N}^{-}}\mathrm{RBM}^-_{n,k^-}\left( s_1,s_2,\ldots, s_n \right)| \right) 	\\
\le& \max_{\lbrace s_i \rbrace} | A+_{s_1s_2\ldots s_n} - {\mathcal{N}^{+}}\mathrm{RBM}^+_{n,k^+}\left( s_1,s_2,\ldots, s_n \right)| + \max_{\lbrace s_i \rbrace} | A^-_{s_1s_2\ldots s_n} - {\mathcal{N}^{-}}\mathrm{RBM}^-_{n,k^-}\left( s_1,s_2,\ldots, s_n \right)|	\\
\le& \epsilon
\end{split}
\end{equation}
We further show that  $ {\mathcal{N}^{+}}\mathrm{RBM}^+_{n,k^+} -  {\mathcal{N}^{-}}\mathrm{RBM}^-_{n,k^-}$ can be expressed as a 2-layer RBM. 
\begin{equation}\label{Eq.205}
\begin{split}
& {\mathcal{N}^{+}}\mathrm{RBM}^+_{n,k^+}\left( s_1,s_2,\ldots, s_n \right) -  {\mathcal{N}^{-}}\mathrm{RBM}^-_{n,k^-}\left( s_1,s_2,\ldots, s_n \right)	\\
=& {\mathcal{N}^{+}}\mathrm{RBM}^+_{n,k}\left( s_1,s_2,\ldots, s_n \right) - {\mathcal{N}^{-}}\mathrm{RBM}^-_{n,k}\left( s_1,s_2,\ldots, s_n \right)	\\
=& \sum_{\{h_1,h_2,\ldots h_k \}} \exp\left( W^+_{ij} s_i h_j + a^+_i s_i + b^+_j h_j + \ln\left( \mathcal{N}^+ \right)\right) + \exp\left( W^-_{ij} s_i h_j + a^-_i s_i + b^-_j h_j + \ln\left( \mathcal{N}^- \right) + \ii\pi\right)\\
=& \sum_{\{h_1,h_2,\ldots h_k \}} \sum_{h_0=0}^{1} e^{ h_0 \lbrack W^+_{ij} s_i h_j + a^+_i s_i + b^+_j h_j + \ln\left( \mathcal{N}^+ \right) - W^-_{ij} s_i h_j - a^-_i s_i - b^-_j h_j - \ln\left( \mathcal{N}^- \right) + \ii\pi\rbrack + W^-_{ij} s_i h_j + a^-_i s_i + b^-_j h_j + \ln\left( \mathcal{N}^- \right) + \ii\pi}	\\
=& \sum_{\{h_0, h_1,h_2,\ldots h_k \}} \prod_{ij} \mathrm{RBM}_{3,5}\left(h_0, s_i, h_j\right) \prod_{i} \mathrm{RBM}_{2,1}\left(h_0, s_i\right)
\prod_{j} \mathrm{RBM}_{2,1}\left(h_0, h_j\right) \prod_{ij} \mathrm{RBM}_{2,1}\left(s_i,h_j\right)	\\
=& \sum_{\{h_0, h_1,h_2,\ldots h_k \}} \mathrm{RBM}_{n+k+1, 6nk + n+ k}(s_1,s_2,\ldots,s_n,h_0,h_1,\ldots,h_k)
\end{split}	
\end{equation}
In the last second equality, we have used
\begin{eqnarray}
\begin{split}
\exp((W^+_{ij}-W^-_{ij})h_0s_ih_j) &= \mathrm{RBM}_{3,5}\left(h_0, s_i, h_j\right)\\
\exp((a_i^+-a_i^-)h_0s_i) &= \mathrm{RBM}_{2,1}\left(h_0, s_i\right)\\
\exp((b_j^+-b_j^-)h_0h_j - (\ln(\mathcal{N}^+)-\ln(\mathcal{N}^-)) h_0 + \ii\pi ) &=\mathrm{RBM}_{2,1}\left(h_0, h_j\right) \\
\exp(W^-_{ij} s_i h_j + a^-_i s_i + b^-_j h_j - \ln\left( \mathcal{N}^- \right) + \ii\pi)&= \mathrm{RBM}_{2,1}\left(s_i,h_j\right)	
\end{split}
\end{eqnarray}
\begin{figure}[H]
	\centering
	\includegraphics[width=0.6\columnwidth]{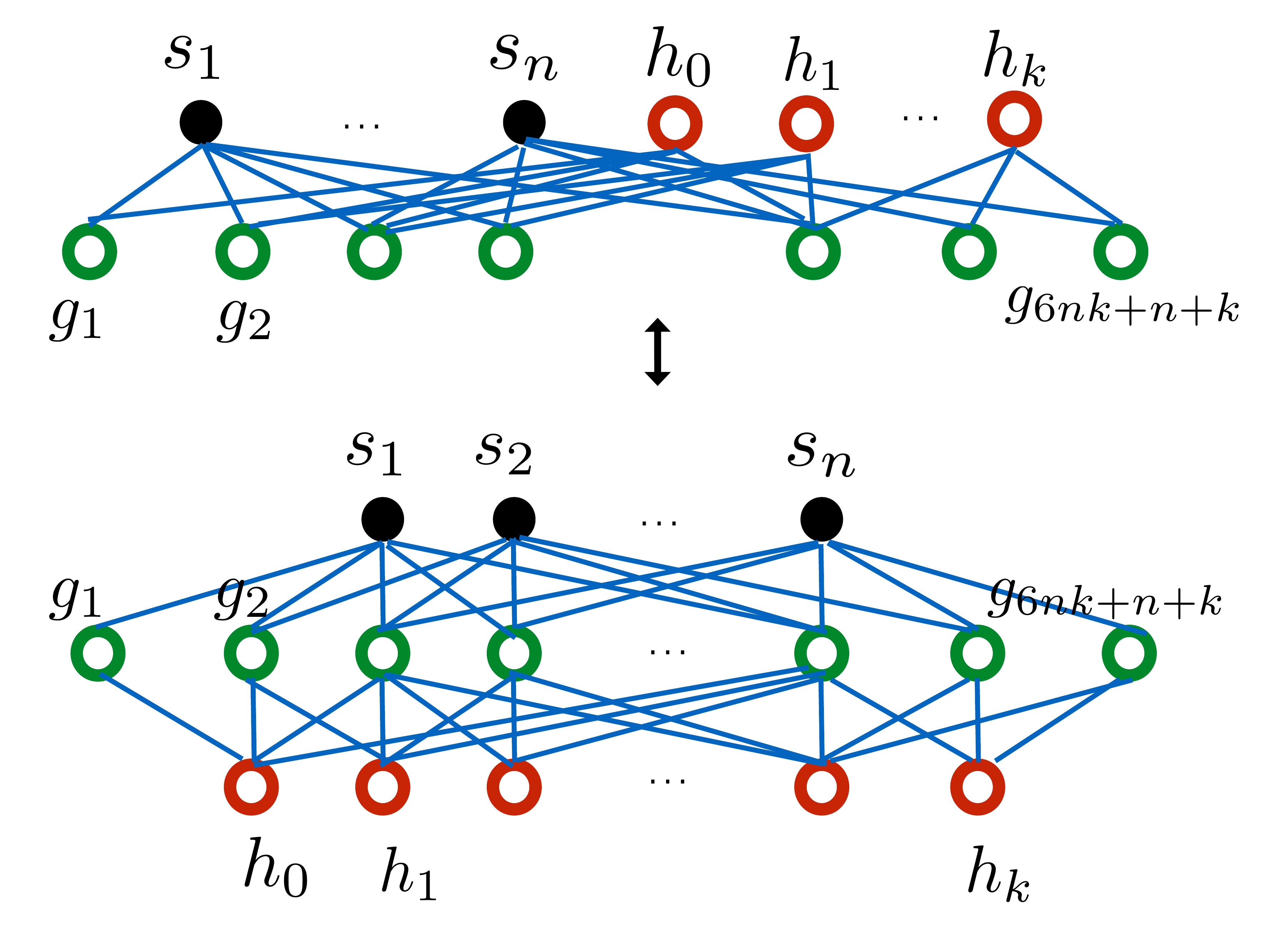}
	\caption{Top: A graphical representation of the 1-layer RBM $ \mathrm{RBM}_{n+k+1, 6nk + n+ k}(s_1,s_2,\ldots,s_n,h_0,h_1,\ldots,h_k)$. Black disks and red circles represent the $n+k+1$ visible spins of the 1-layer RBM $s_1, ..., s_n, h_0, h_1, ... h_k$, while the green circles represent the $6nk+n+k$ hidden spins. $g_1, ..., g_{6nk+n+k}$. Bottom:   A graphical representation of the 2-layer RBM $\mathrm{RBM}_{n,6nk+n+k, k+1} (\{s_i\})$.  This is obtained by summing over $h_0, h_1, ..., h_k$. Since $h$'s are summed over, they are regarded as the hidden spins in the second hidden layer. The graph in the bottom is obtained from the graph in the top by simply folding the red circles to the other side of the green circles. }
	\label{Fig.Real}
\end{figure}

To see that the last equality in Eq.~\eqref{Eq.205} is a 2-layer RBM, we notice that in the 1-layer RBM $\mathrm{RBM}_{n+k+1, 6nk + n+ k}(s_1,s_2,\ldots,s_n,h_0,h_1,\ldots,h_k)$,   $\{s_i\}$ and $\{h_j\}$ are all connected with $6nk + n+ k$ hidden spins, while $\{s_i\}$ and $\{h_j\}$ are not mutually connected. Using this structure, one can label $\{s_1, ..., s_n\}$ as the visible spin of the 2-layer RBM, the $6nk+n+k$ hidden spins in the above 1-layer RBM as the hidden spins in the first layer of the 2-layer RBM, and $k+1$ spins $\{h_0, h_1, ..., h_k\}$ as the hidden spins in the second layer of the 2-layer RBM. In summary, we denote the 2-layer RBM as 
\begin{eqnarray}
\mathrm{RBM}_{n,6nk+n+k, k+1} (\{s_i\})=  \sum_{\{h_0, h_1,h_2,\ldots h_k \}} \mathrm{RBM}_{n+k+1, 6nk + n+ k}(s_1,s_2,\ldots,s_n,h_0,h_1,\ldots,h_k)
\end{eqnarray}
See figure \ref{Fig.Real} for the graphical representation of the 2-layer RBM $\mathrm{RBM}_{n,6nk+n+k, k+1} (\{s_i\})$. 
This completes the proof of theorem \ref{theorem.realTensorApproximation}. 
\end{proof}

As we will see in theorem \ref{Theorem1}, we will keep $n$ to be a finite number, which does not go to infinity as the system size (i.e. the number of unit cells $N$) goes to infinity. This is because $n$ is the number of spins for each MPS matrix. $n$ should not be confused with $N$.

\begin{theorem}\label{Theorem1}
Suppose a complex tensor $A_{s_1s_2\ldots s_n}$ has binary indices $s_i\in \{0,1\}, \; (i=1,2,\ldots,n)$. Denote $k^{R\pm}$ is the number of elements of the real part of $A_{s_1s_2\ldots s_n}$ which are strictly positive and negative respectively, i.e., $\Re\left( A_{s_1s_2\ldots s_n}\right)>0$ or $\Re\left( A_{s_1s_2\ldots s_n}\right)<0$. Similarly, $k^{I\pm}$ is the number of elements of the imaginary part of $A_{s_1s_2\ldots s_n}$ which are strictly positive and negative respectively, i.e., $\Im\left( A_{s_1s_2\ldots s_n}\right)>0$ or $\Im\left( A_{s_1s_2\ldots s_n}\right)<0$. Then for any $\epsilon>0$, there exists a 2-layer RBM such that 
\begin{eqnarray}
\max_{\lbrace s_i \rbrace} | A_{s_1s_2\ldots s_n} - \mathrm{RBM}_{n, 36kn^2+6n^2+36k^2 n+ 54k n+ 8n+6k^2+8k+1, 6nk+n+2k+2}(\{s_i\}) | \le \epsilon.
\end{eqnarray}
where $k=\max(k^{R+},k^{R-},k^{I+},k^{I-})$. The 2-layer RBM contains $n$ visible spins, $36kn^2+6n^2+36k^2 n+ 54k n+ 8n+6k^2+8k+1$ hidden spins in the first hidden layer and $6nk+n+2k+2$ hidden spins in the second hidden layer.
\end{theorem}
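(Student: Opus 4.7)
The strategy is to extend the proof of Theorem \ref{theorem.realTensorApproximation} from real tensors to complex ones by decomposing the complex tensor into four nonnegative pieces, approximating each by a 1-layer RBM via Theorem \ref{theorem.LeRouxBengio} and Pinsker's inequality, and then encoding their weighted sum (with weights $1, -1, \ii, -\ii$) as a single 2-layer RBM with the help of two auxiliary binary hidden spins.

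First I would write
\begin{equation*}
A_{s_1 \ldots s_n} = A^{R+}_{s_1\ldots s_n} - A^{R-}_{s_1\ldots s_n} + \ii\bigl(A^{I+}_{s_1\ldots s_n} - A^{I-}_{s_1\ldots s_n}\bigr),
\end{equation*}
where each piece is the sign-selected part of $\mathrm{Re}(A)$ or $\mathrm{Im}(A)$, has at most $k$ nonzero entries, and is nonnegative. Dividing each piece by its total sum $\mathcal{N}^{(\cdot)}$ produces four probability distributions, so Theorem \ref{theorem.LeRouxBengio} together with Eq.~\eqref{eq.PinskerCorollary} yields four 1-layer RBMs $\mathrm{RBM}^{R\pm}, \mathrm{RBM}^{I\pm}$, each with $n$ visible spins and $k$ hidden spins, approximating their respective targets within precision $\epsilon/4$. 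By the triangle inequality, $\mathcal{N}^{R+}\mathrm{RBM}^{R+} - \mathcal{N}^{R-}\mathrm{RBM}^{R-} + \ii\mathcal{N}^{I+}\mathrm{RBM}^{I+} - \ii\mathcal{N}^{I-}\mathrm{RBM}^{I-}$ then approximates $A$ within $\epsilon$.

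The heart of the argument, following Eq.~\eqref{Eq.205}, is to rewrite this weighted sum as a single 2-layer RBM. I would introduce two auxiliary binary hidden spins $h_0, h_0' \in \{0,1\}$ so that the four assignments $(h_0, h_0')=(0,0),(1,0),(0,1),(1,1)$ select the four pieces $R+, R-, I+, I-$ in turn, absorbing $\ln \mathcal{N}^{(\cdot)}$ and the $\ii\pi/2$ phase factors into linear terms in $h_0, h_0'$. The resulting combined exponent contains products of up to four spins drawn from $\{h_0, h_0', s_i, h_j\}$: 4-body terms of type $h_0 h_0' s_i h_j$; 3-body terms of types $h_0 s_i h_j$, $h_0' s_i h_j$, $h_0 h_0' s_i$, and $h_0 h_0' h_j$; and lower-order residuals. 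Applying Lemma \ref{lemma.4bodyTo16hidden}, Lemma \ref{lemma.3bodyTo5hidden}, and Lemma \ref{lemma.2body} to each such term converts the entire exponent into a product of elementary 1-layer RBMs, which by Property \ref{property.RBM_notation} is itself a 1-layer RBM over the extended spin set $\{s_i, h_0, h_0', h_j\}$. Grouping $\{s_i\}$ as the visible layer, $\{h_0, h_0', h_j\}$ as the second hidden layer, and all the freshly introduced spins as the first hidden layer then exhibits the required 2-layer RBM structure.

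The main obstacle is the combinatorial accounting of hidden spins, which must reproduce the stated counts $36kn^2+6n^2+36k^2 n+ 54k n+ 8n+6k^2+8k+1$ in the first hidden layer and $6nk+n+2k+2$ in the second. Each 4-body term contributes $16$ spins per instance via Lemma \ref{lemma.4bodyTo16hidden}, each 3-body term contributes $5$ via Lemma \ref{lemma.3bodyTo5hidden}, and each 2-body term contributes $1$ via Lemma \ref{lemma.2body}; the appearance of $n^2$ and $k^2$ factors in the first-layer count signals that one must also account for cross-couplings that arise when the four distinct families $\{h^{R\pm}_j, h^{I\pm}_j\}$ of second-layer hidden spins interact with all $n$ visible spins and with $h_0, h_0'$ simultaneously, together with the residual $h_0 h_0'$ interaction and the extra biases. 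Once this enumeration is complete, a final triangle inequality gives the stated $\epsilon$ bound and closes the proof.
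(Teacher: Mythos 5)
Your construction is sound as an existence argument, but it takes a genuinely different route from the paper, and the route you chose cannot reproduce the hidden-spin counts that appear in the theorem statement. The paper does \emph{not} decompose $A$ into four pieces at once. It first proves Theorem \ref{theorem.realTensorApproximation} for a real tensor using a single selector spin $h_0$ to combine the $\pm$ parts (yielding a 2-layer RBM with $6nk+n+k$ first-layer and $k+1$ second-layer hidden spins), and then, in the proof of Theorem \ref{Theorem1}, combines the real-part and imaginary-part \emph{2-layer} RBMs with one further selector spin $g_0$. In that second stage the inner first-layer hidden spins $\{g_l\}$ (of which there are $6nk+n+k$) must themselves be coupled to $g_0$ and to $\{s_i\},\{h_j\}$ via Lemmas \ref{lemma.2body} and \ref{lemma.3bodyTo5hidden}; only 2-body and 3-body terms ever appear, and the $n^2$ and $k^2$ factors in $36kn^2+6n^2+36k^2n+54kn+8n+6k^2+8k+1$ come precisely from products like $6(n+k+1)(6nk+n+k)$, i.e., from the nesting. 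Your flat decomposition with two selector spins $h_0,h_0'$ instead produces 4-body terms $h_0h_0's_ih_j$ handled by Lemma \ref{lemma.4bodyTo16hidden}, and a straightforward count gives a first layer of size $\mathcal{O}(nk)$ --- strictly smaller than the paper's $\mathcal{O}(n^2k+nk^2)$. So your heuristic that the $n^2$ and $k^2$ terms arise from ``cross-couplings among the four families $\{h_j^{R\pm},h_j^{I\pm}\}$'' is not correct; in your own construction the $h_j$ are shared (padded to $k$), and no quadratic-in-$n$ count can arise.

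This discrepancy is not fatal to the mathematical content: the multilinear interpolation in $(h_0,h_0')$ does select the four exponents correctly (with the phases $0,\ii\pi,\ii\pi/2,3\ii\pi/2$ absorbed into the biases), the triangle inequality with four errors of $\epsilon/4$ closes the bound, and one can always pad with decoupled hidden spins to reach any prescribed larger count. Your version is therefore a valid --- and more economical --- proof of the existence statement, but if the goal is to prove the theorem \emph{as stated}, with those particular polynomials in $n$ and $k$, you must either follow the paper's two-stage nesting or explicitly justify the padding step and redo the enumeration, which you currently leave as an acknowledged gap.
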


\begin{proof}
Because $A_{s_1s_2\ldots s_n}$ is complex, to use Theorem \ref{theorem.realTensorApproximation}, we consider the real part and the imaginary part of $A_{s_1s_2\ldots s_n}$ respectively. Denote $\Re\left( A_{s_1s_2\ldots s_n} \right) $ and $\Im\left( A_{s_1s_2\ldots s_n} \right) $ as the real and imaginary part respectively, i.e., 
\begin{eqnarray}
A_{s_1s_2\ldots s_n}=\Re\left( A_{s_1s_2\ldots s_n} \right) + \ii \Im\left( A_{s_1s_2\ldots s_n} \right) 
\end{eqnarray}
Using Theorem \ref{theorem.realTensorApproximation}, for any given $\epsilon>0$, there exist 2-layer $\mathrm{RBM}^R_{n,6nk^R+n+k^R, k^R+1}$ and $\mathrm{RBM}^I_{n,6nk^I+n+k^I, k^I+1}$ such that
\begin{eqnarray}
\begin{split}
& \max_{\lbrace s_i \rbrace} |\Re\left( A_{s_1s_2\ldots s_n} \right)  -\mathrm{RBM}^R_{n,6nk^R+n+k^R, k^R+1}(\{s_i\}) |< \frac{\epsilon}{2}\\
& \max_{\lbrace s_i \rbrace} |\Im\left( A_{s_1s_2\ldots s_n} \right)  -\mathrm{RBM}^I_{n,6nk^I+n+k^I, k^I+1}(\{s_i\}) |< \frac{\epsilon}{2}
\end{split}
\end{eqnarray}
where $k^R = \max\left( k^{R+}, k^{R-}\right)$ and $k^I = \max\left( k^{I+}, k^{I-}\right)$.
Then
\begin{eqnarray}
\begin{split}
&\max_{\lbrace s_i \rbrace} | A_{s_1s_2\ldots s_n}  -\mathrm{RBM}^R_{n,6nk^R+n+k^R, k^R+1}(\{s_i\}) - i\mathrm{RBM}^I_{n,6nk^I+n+k^I, k^I+1}(\{s_i\})  |\\
&= \max_{\lbrace s_i \rbrace} |\Re\left( A_{s_1s_2\ldots s_n} \right) + \ii \Im\left( A_{s_1s_2\ldots s_n} \right)   -\mathrm{RBM}^R_{n,6nk^R+n+k^R, k^R+1}(\{s_i\}) - i\mathrm{RBM}^I_{n,6nk^I+n+k^I, k^I+1}(\{s_i\})  |\\
&\leq \max_{\lbrace s_i \rbrace} |\Re\left( A_{s_1s_2\ldots s_n} \right)  -\mathrm{RBM}^R_{n,6nk^R+n+k^R, k^R+1}(\{s_i\}) |+ \max_{\lbrace s_i \rbrace}  |\Im\left( A_{s_1s_2\ldots s_n} \right)  -\mathrm{RBM}^I_{n,6nk^I+n+k^I, k^I+1}(\{s_i\}) |\\
&< \frac{\epsilon}{2}+  \frac{\epsilon}{2} = \epsilon
\end{split}
\end{eqnarray}
Thus any complex tensor $A_{s_1s_2\ldots s_n}$ can be approximated by  $\mathrm{RBM}^R_{n,6nk^R+n+k^R, k^R+1}(\{s_i\}) + i\mathrm{RBM}^I_{n,6nk^I+n+k^I, k^I+1}(\{s_i\}) $ up to an arbitrary given precision $\epsilon$. 

To complete the proof, we need to show that $\mathrm{RBM}^R_{n,6nk^R+n+k^R, k^R+1}(\{s_i\}) + i\mathrm{RBM}^I_{n,6nk^I+n+k^I, k^I+1}(\{s_i\}) $ is a 2-layer RBM. Denote $k=\max\{k^R, k^I\}$, then $\mathrm{RBM}^R_{n,6nk^R+n+k^R, k^R+1}(\{s_i\}) $ and  $\mathrm{RBM}^I_{n,6nk^I+n+k^I, k^I+1}(\{s_i\}) $ can both be  written as $\mathrm{RBM}^{R/I}_{n,6nk+n+k, k+1}(\{s_i\})$. Hence 
\begin{equation}\label{Eq.101}
\begin{split}
& \mathrm{RBM}^R_{n,6nk+n+k, k+1}(\{s_i\}) + i\mathrm{RBM}^I_{n,6nk+n+k, k+1}(\{s_i\}) \\
& = \sum_{\{h_j\},  \{g_l\}} \mathrm{RBM}^R_{n+k+1, 6nk + n+ k}(s_1,s_2,\ldots,s_n,h_0,h_1,\ldots,h_k) + i \mathrm{RBM}^I_{n+k+1, 6nk + n+ k}(s_1,s_2,\ldots,s_n,h_0,h_1,\ldots,h_k)\\
&=\sum_{\substack{\{h_j\},  \{g_l\}}}  \exp (W^{sgR}_{il}s_i g_l+ W^{hgR}_{jl}h_j g_l + a^{sR}_{i}s_i+ a^{hR}_{j} h_j+ b^R_l g_l) + \exp (W^{sgI}_{il}s_i g_l+ W^{hgI}_{jl}h_j g_l + a^{sI}_{i}s_i+ a^{hI}_{j} h_j+ b^I_l g_l+ \frac{\pi i}{2})\\
&= \sum_{\substack{\{h_j\},  \{g_l\}}}  \exp (W^{sgR}_{il}s_i g_l+ W^{hgR}_{jl}h_j g_l + a^{sR}_{i}s_i+ a^{hR}_{j} h_j+ b^R_l g_l )\\&~~~\times  \bigg(1+ \exp (W^{sgI}_{il}s_i g_l+ W^{hgI}_{jl}h_j g_l + a^{sI}_{i}s_i+ a^{hI}_{j} h_j+ b^I_l g_l+ \frac{\pi i}{2}- W^{sgR}_{il}s_i g_l- W^{hgR}_{jl}h_j g_l - a^{sR}_{i}s_i- a^{hR}_{j} h_j- b^R_l g_l)\bigg)\\
&= \sum_{\substack{\{h_j\},  \{g_l\}}} \sum_{g_0=0}^{1}  \exp (W^{sgR}_{il}s_i g_l+ W^{hgR}_{jl}h_j g_l + a^{sR}_{i}s_i+ a^{hR}_{j} h_j+ b^R_l g_l )\\&~~~\times   \exp\bigg(g_0 (W^{sgI}_{il}s_i g_l+ W^{hgI}_{jl}h_j g_l + a^{sI}_{i}s_i+ a^{hI}_{j} h_j+ b^I_l g_l+ \frac{\pi i}{2}- W^{sgR}_{il}s_i g_l- W^{hgR}_{jl}h_j g_l - a^{sR}_{i}s_i- a^{hR}_{j} h_j- b^R_l g_l)\bigg)\\
&=\sum_{\substack{\{h_j\},  \{g_l\}, g_0}} \prod_{il} \mathrm{RBM}_{2,1}\left(s_i, g_l\right) \prod_{jl} \mathrm{RBM}_{2,1}\left( h_j, g_l\right)
\prod_{il} \mathrm{RBM}_{3,5}\left(g_0, s_i, g_l\right) \prod_{jl} \mathrm{RBM}_{3,5}\left(g_0,h_j, g_l\right) \\&~~~~~~~~~~~~~\times \prod_{i} \mathrm{RBM}_{2,1}\left( g_0, s_i\right)  \prod_{j} \mathrm{RBM}_{2,1}\left( g_0, h_j\right)	 \prod_{l} \mathrm{RBM}_{2,1}\left( g_0, g_l\right)\\
&=\sum_{\substack{\{h_j\},  \{g_l\}, g_0}}  \mathrm{RBM}_{6nk+2n+2k+2, 36kn^2+6n^2+36k^2 n+ 54k n+ 8n+6k^2+8k+1 }(\{s_i\},\{h_j\},\{g_l\}, g_0)
\end{split}
\end{equation}
In the above, we used $\{s_i\}$ to label $\{s_1, ..., s_n\}$, $\{h_j\}$ to label $\{h_0,h_1, ...,h_k \}$, $\{g_l\}$ to label $\{g_1, ..., g_{6nk+n+k}\}$. 
In the last second equality, we have used 
\begin{eqnarray}
\begin{split}
\exp(W^{sgR}_{il}s_i g_l + a_{i}^{sR} s_i ) &= \mathrm{RBM}_{2,1}\left(s_i, g_l\right)=\\
\exp( W^{hgR}_{jl}h_j g_l + a^{hR}_{j} h_j+ b^R_l g_l )&= \mathrm{RBM}_{2,1}\left( h_j, g_l\right)\\
\exp((W^{sgI}_{il}-W^{sgR}_{il})g_0 s_i g_l )&= \mathrm{RBM}_{3,5}\left(g_0, s_i, g_l\right)\\
\exp( (W^{hgI}_{jl}-W^{hgR}_{jl})g_0, h_j g_l)&= \mathrm{RBM}_{3,5}\left(g_0,h_j, g_l\right)\\
\exp( (a^{sI}_{i}-  a^{sR}_{i})g_0 s_i)&= \mathrm{RBM}_{2,1}\left( g_0, s_i\right)\\
\exp( (a^{hI}_{j}- a^{hR}_{j}) g_0h_j)&= \mathrm{RBM}_{2,1}\left( g_0, h_j\right)	\\
\exp((b^I_l-b^R_l) g_0 g_l+ \frac{\pi i}{2} g_0)&=\mathrm{RBM}_{2,1}\left( g_0, g_l\right)\\
\end{split}
\end{eqnarray}
From the last second line of Eq.~\eqref{Eq.101} and using the property \ref{property.RBM_notation}, we can simplify the product of 1-layer RBM's to a single RBM, which is a last line. Explicitly, the visible spins in the 1-layer RBM is the union $\{s_i\}\cup \{h_j\}\cup \{g_l\}\cup \{g_0\}$ whose number is $6nk+2n+2k+2$. The number of hidden spins in the last line comes from the following sum:
\begin{equation}\label{Eq.100}
[n(6nk+n+k)]+[ (k+1)(6nk+n+k)]+ [5n(6nk+n+k)]+[5 (k+1)(6nk+n+k)] + [n]+ [k+1] + [6nk+n+k]
\end{equation}
The term in each of the 7 square brackets Eq.~\eqref{Eq.100} counts the number of hidden spins contributed from each of the 7 products in the last second equality of Eq.~\eqref{Eq.101}.

Finally, we reorganize the sum of 1-layer RBM's into a 2-layer RBM, where there are $n$ visible spins $\{s_i\}$, $36kn^2+6n^2+36k^2 n+ 54k n+ 8n+6k^2+8k+1$ hidden spins in the first hidden layer, and $6nk+n+2k+2$ hidden spins in the second layer. We denote this 2-layer RBM as 
\begin{eqnarray}\label{Eq.104}
\begin{split}
&\mathrm{RBM}_{n, 36kn^2+6n^2+36k^2 n+ 54k n+ 8n+6k^2+8k+1, 6nk+n+2k+2}(\{s_i\})\\
&\equiv  \sum_{\substack{\{h_j\},  \{g_l\}, g_0}}  \mathrm{RBM}_{6nk+2n+2k+2, 36kn^2+6n^2+36k^2 n+ 54k n+ 8n+6k^2+8k+1 }(\{s_i\},\{h_j\},\{g_l\}, g_0)
\end{split}
\end{eqnarray}

In summary, for any given $\epsilon>0$ and any complex valued tensor $A_{s_1s_2\ldots s_n} $, we find a 2-layer RBM to approximate it within the precision $\epsilon$, 
\begin{eqnarray}\label{Eq.103}
\max_{\lbrace s_i \rbrace} | A_{s_1s_2\ldots s_n} -\mathrm{RBM}_{n, 36kn^2+6n^2+36k^2 n+ 54k n+ 8n+6k^2+8k+1, 6nk+n+2k+2}(\{s_i\}) | \le \epsilon.
\end{eqnarray}
The 2-layer RBM is graphically represented as follows. 
\begin{equation}\label{Eq.3}
\begin{gathered}
\includegraphics[width=10cm]{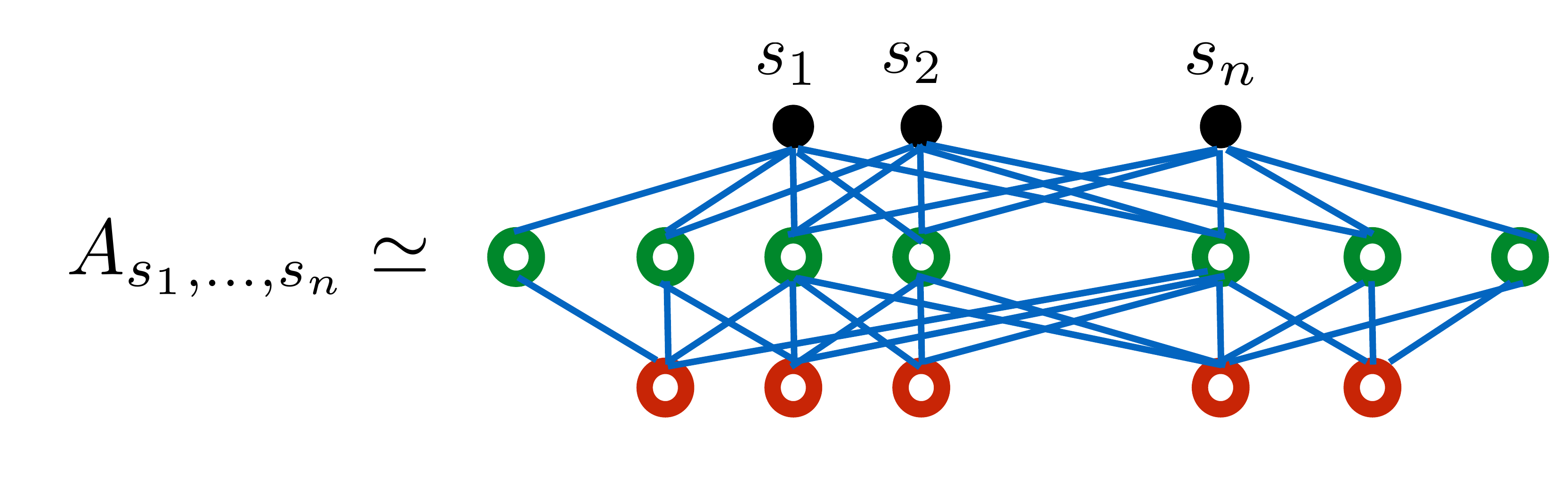}
\end{gathered}
\end{equation}
where $\simeq$ represents the approximation up to precision $\epsilon$, as in  Eq.~\eqref{Eq.103}.  The black dots represent the visible spins $\{s_1, ..., s_n\}$. There are  $6nk+n+2k+2$ red circles in the second hidden layer which  represent the set of $6nk+n+2k+2$ spins $\{h_j\}\cup \{g_l\}\cup \{g_0\}$ in Eq.~\eqref{Eq.104}. There are $36kn^2+6n^2+36k^2 n+ 54k n+ 8n+6k^2+8k+1$  green circles in the first hidden layer which represent the $36kn^2+6n^2+36k^2 n+ 54k n+ 8n+6k^2+8k+1$ hidden spins in the first hidden layer. This completes the proof. 
\end{proof}

Theorem \ref{Theorem1} gives an approximation of a complex tensor with $n$ indices, $A_{s_1, ..., s_n}$, by a 2-layer RBM. We further demand the complex tensor to be a single MPS tensor $A^S_{LR}$ and apply theorem \ref{Theorem1} to $A^S_{LR}$. 
An MPS tensor $A^S_{LR}$ has three sets of indices: the set of physical indices $S=\{s_1, ..., s_n\}$, the set of left virtual indices $L= \{l_1, ..., l_m\}$ and the set of right virtual indices $R=\{r_1, ..., r_m\}$, where we assume the number of left and right virtual indices are equal. To represent $A^S_{LR}$ as an RBM, we regard both $S, L$ and $R$ as the visible spins, and introduce additional hidden spins.  Applying Theorem \ref{Theorem1}, we find the following corollary:
\begin{corollary}\label{corollary.MPS_Tensor_2LayerRBM}
Let $A^S_{LR}=A^{s_1s_2\ldots s_n}_{l_1l_2\ldots l_m, r_1r_2 \ldots r_m} $ be a MPS tensor  with $n$ physical spins $\{s_i\}, i=1, ..., n$ and $m$ virtual spins $\{l_p\},\{r_p\}, p=1, ..., m $. $k^{R\pm}$ is the number of elements of the real part of $A^S_{LR}$ which are strictly positive and negative respectively, and $k^{I\pm}$ is the number of elements of the imaginary part of $A^S_{LR}$ which are strictly positive and negative respectively. Denote $k=\max(k^{R+},k^{R-},k^{I+},k^{I-})$. Then given any positive number $\epsilon$, $A^S_{LR}$ can be approximated by a 2-layer RBM within the uncertainty $\epsilon$
\begin{equation}
\begin{split}
\max_{\{s_i\}, \{l_p\}, \{r_p\}} |A^{s_1s_2\ldots s_n}_{l_1l_2\ldots l_m, r_1r_2 \ldots r_m}- \mathrm{RBM}_{n+2m,H_1 , H_2}(\{s_i\},\{l_p\}, \{r_p\})| <\epsilon
\end{split}
\end{equation}
where $H_1=36k(n+2m)^2+6(n+2m)^2+36k^2 (n+2m)+ 54k (n+2m)+ 8(n+2m)+6k^2+8k+1$ is the number of hidden spins in the first hidden layer,  and $H_2=6(n+2m)k+(n+2m)+2k+2$  is the number of hidden spins in the second hidden layer. 
\end{corollary}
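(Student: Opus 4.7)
The plan is to reduce the statement to a direct application of Theorem~\ref{Theorem1}, by viewing the MPS local tensor as an ordinary complex tensor on a larger collection of binary indices. First I would use the paper's standing convention $D = 2^m$, so that each virtual index $L = (l_1\ldots l_m)_2$ and $R = (r_1\ldots r_m)_2$ is encoded by $m$ binary variables. Combining them with the $n$ physical binary indices $\{s_i\}$, the MPS tensor $A^{s_1\ldots s_n}_{l_1\ldots l_m,r_1\ldots r_m}$ is naturally identified with a complex-valued function on $n + 2m$ binary variables. Concretely, define the relabeled tensor $\tilde{A}_{t_1 t_2 \ldots t_{n+2m}} := A^{s_1\ldots s_n}_{l_1\ldots l_m,r_1\ldots r_m}$ with $(t_1,\ldots,t_{n+2m}) := (s_1,\ldots,s_n,l_1,\ldots,l_m,r_1,\ldots,r_m)$; by construction the sign counts of the real and imaginary parts of $\tilde A$ coincide with those of $A$, namely the integers $k^{R\pm}$, $k^{I\pm}$, and hence $k = \max(k^{R+},k^{R-},k^{I+},k^{I-})$.

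Next, I would simply invoke Theorem~\ref{Theorem1} applied to $\tilde A$ with the role of ``$n$'' played by $n + 2m$. The theorem produces, for any $\epsilon > 0$, a 2-layer RBM in $n+2m$ visible spins approximating $\tilde A$ uniformly within $\epsilon$. Relabeling the visible spins back as $\{s_i\}\cup\{l_p\}\cup\{r_p\}$ yields
\begin{equation*}
\max_{\{s_i\},\{l_p\},\{r_p\}} \bigl| A^{s_1\ldots s_n}_{l_1\ldots l_m,r_1\ldots r_m} - \mathrm{RBM}_{n+2m, H_1, H_2}(\{s_i\},\{l_p\},\{r_p\}) \bigr| < \epsilon,
\end{equation*}
which is the claimed bound. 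The counts $H_1$ and $H_2$ of first- and second-layer hidden spins are obtained by substituting $n \mapsto n + 2m$ into the two expressions furnished by Theorem~\ref{Theorem1}, giving exactly $H_1 = 36k(n+2m)^2 + 6(n+2m)^2 + 36k^2(n+2m) + 54k(n+2m) + 8(n+2m) + 6k^2 + 8k + 1$ and $H_2 = 6(n+2m)k + (n+2m) + 2k + 2$.

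There is no substantive mathematical obstacle: the corollary is essentially a bookkeeping statement that ``visible index'' in Theorem~\ref{Theorem1} can refer to any binary index at all, be it physical or virtual, so an MPS local tensor is covered automatically. The only point to be careful with is to verify that each binary variable used in the decomposition of $L$ and $R$ is genuinely treated on the same footing as a physical spin in the proof of Theorem~\ref{Theorem1}; since that proof only uses the binary nature of the arguments (and not their physical interpretation), the identification is immediate and the substitution $n \mapsto n + 2m$ propagates consistently through every step of the hidden-spin accounting.
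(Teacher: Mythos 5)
Your proposal is correct and matches the paper's own proof: the paper likewise treats the virtual indices as additional binary visible spins, applies Theorem~\ref{Theorem1} with $n$ replaced by $n+2m$, and reads off $H_1$ and $H_2$ by that substitution. No gaps.
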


\begin{proof}
To prove the corollary \ref{corollary.MPS_Tensor_2LayerRBM}, we directly apply Theorem \ref{Theorem1}. We replace the indices $\{s_i\}$ in Eq.~\eqref{Eq.103} with the indices $\{s_i\}\cup \{l_p\}\cup \{r_p\}$. Then $A_{s_1...s_n}$ is replaced by $A^{s_1s_2\ldots s_n}_{l_1l_2\ldots l_m, r_1r_2 \ldots r_m}$, and correspondingly, $n$ is replaced by $n+2m$ because the number of indices of $A$ tensor is now $n+2m$. 

It is helpful to represent the 2-layer RBM graphically. We first simplify the graphical representation in Eq.~\eqref{Eq.3} as 
\begin{equation}\label{Eq.1000}
\begin{gathered}
\includegraphics[width=7cm]{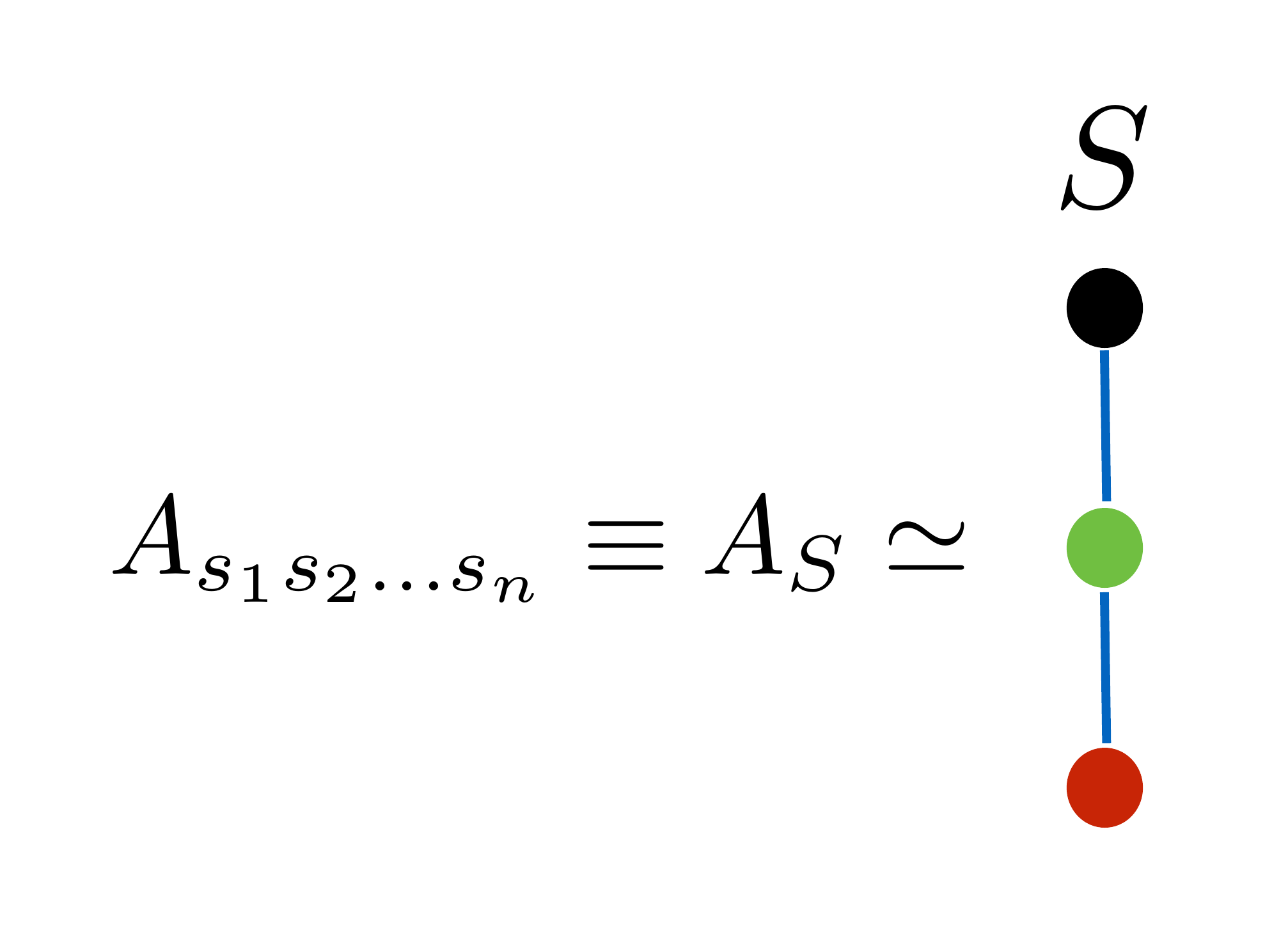}
\end{gathered}
\end{equation}
where we schematically use a black dot to represent the collective visible spins $S$, and use a single green dot and red dot to represent the collective hidden spins in the first and second hidden layers respectively. Using this simplified notation, we can represent the tensor with physical spins $S$ and virtual spins $L, R$ as
\begin{equation}\label{Eq.1001}
\includegraphics[width=10cm]{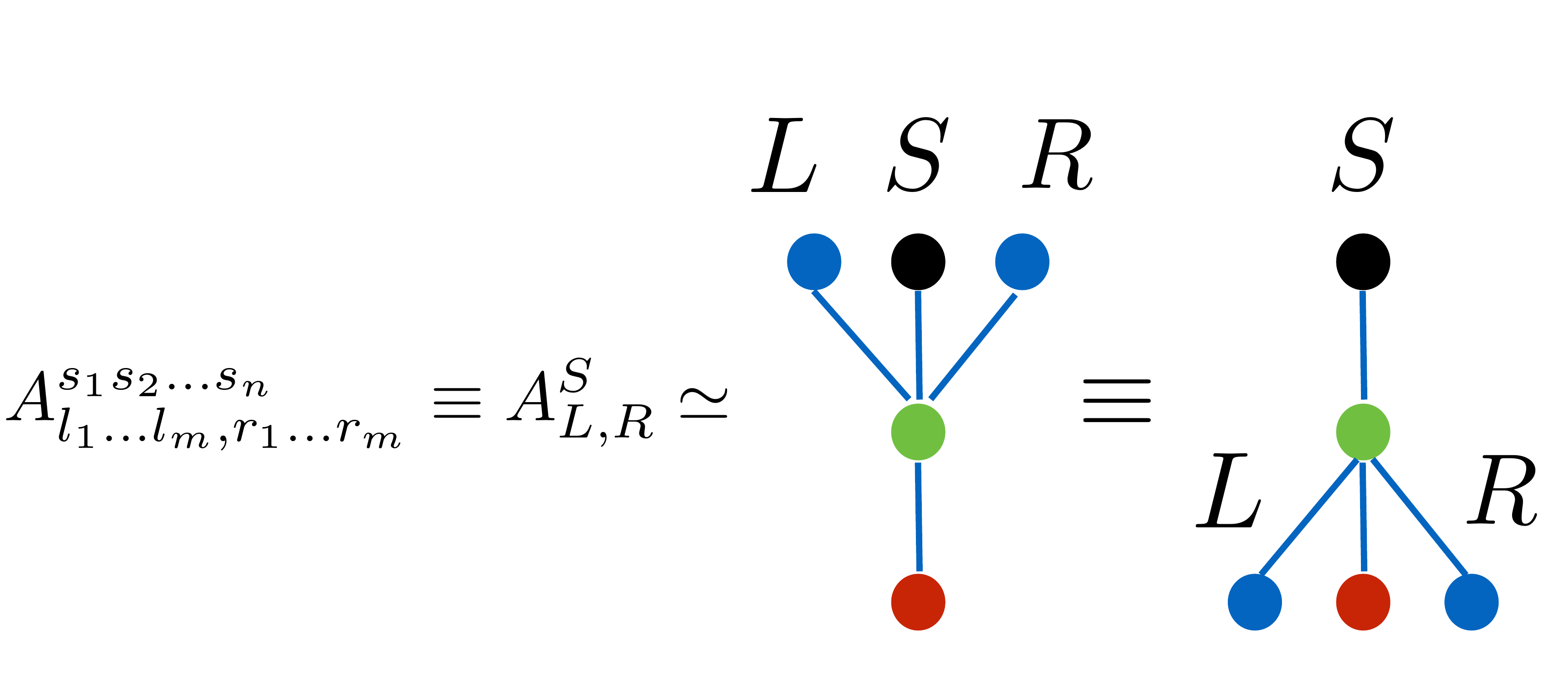}
\end{equation}
where the blue dots represent the virtual spins $L$ and $R$. In the last equality of Eq.~\eqref{Eq.1001}, we fold the spins $L$ and $R$ from the visible layer to the second hidden layer because $L$ and $R$ are to be summed over when the $A^S_{L,R}$ tensors are contracted. 
\end{proof}

A MPS wave function can be obtained by contracting the virtual indices of the MPS tensors $A_{L,R}^S$. Contracting the virtual indices of the MPS tensor amounts to contracting the hidden spins in the second hidden layer of the RBM. Applying Corollary \ref{corollary.MPS_Tensor_2LayerRBM}, we find that any MPS wave function can be represented by a 2-layer RBM neural network. To streamline the analysis, we first present a lemma
\begin{lemma}\label{lemma.final}
	Suppose $T^{S_x}_{L_x L_{x+1}}$ and $R^{S_x}_{L_x L_{x+1}}$ are two tensors, which satisfy
	\begin{eqnarray}\label{single}
	\max_{L_x, L_{x+1}, S_x} |T^{S_x}_{L_x L_{x+1}}- R^{S_x}_{L_x L_{x+1}}|< \epsilon
	\end{eqnarray}
	for $\epsilon>0$. Denote the {normalized} wave functions 
	\begin{eqnarray}
	\Psi_T(\{S_x\})= \frac{\Tr(\prod_{x=1}^N T^{S_x}) }{\sum_{\{S_x\}} |\Tr(\prod_{x=1}^N T^{S_x}) |^2}, ~~~~~ 	\Psi_R(\{S_x\})= \frac{\Tr(\prod_{x=1}^N R^{S_x}) }{\sum_{\{S_x\}} |\Tr(\prod_{x=1}^N R^{S_x}) |^2}
	\end{eqnarray}
	Then the norm between the two wave functions, to the linear order of $\epsilon$, is given by
	\begin{eqnarray}\label{orderN}
	|\Psi_T(\{S_x\}) - \Psi_R(\{S_x\})|\simeq \mathcal{O}(N)\epsilon
	\end{eqnarray}
	where $\mathcal{O}(N)$ is a number of order $N$.

\end{lemma}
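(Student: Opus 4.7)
The plan is to linearize both the numerator and the denominator of each wave function in the perturbation $\delta^{S_x} := R^{S_x} - T^{S_x}$, whose entries satisfy $|\delta^{S_x}_{L_x L_{x+1}}| < \epsilon$ by Eq.~\eqref{single}. First I would expand the unnormalized amplitude $\Tr\!\left(\prod_{x=1}^N R^{S_x}\right) = \Tr\!\left(\prod_{x=1}^N (T^{S_x}+\delta^{S_x})\right)$, keeping only the terms that are at most linear in $\delta$. This yields
\begin{equation}
\Tr\!\left(\prod_{x=1}^N R^{S_x}\right) = \Tr\!\left(\prod_{x=1}^N T^{S_x}\right) + \sum_{x=1}^{N}\Tr\!\left(T^{S_1}\cdots T^{S_{x-1}}\,\delta^{S_x}\,T^{S_{x+1}}\cdots T^{S_N}\right) + O(\epsilon^{2}).
\end{equation}
Since each of the $N$ first-order summands carries exactly one factor of $\delta$ (bounded by $\epsilon$) and $N-1$ factors of $T$ (with bounded bond dimension), each such summand is of order $\epsilon$ and the whole correction is at most $\mathcal{O}(N)\,\epsilon$.

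Next I would apply the same linearization to the normalization $Z_R := \sum_{\{S_x\}} |\Tr\prod_x R^{S_x}|^2$. Using $|u+v|^2 = |u|^2 + 2\,\mathrm{Re}(\bar u v) + |v|^2$ with $u = \Tr\prod_x T^{S_x}$ and $v$ the $O(N\epsilon)$ correction from the previous step, one obtains $Z_R = Z_T + \mathcal{O}(N)\epsilon$. Then the ratio is controlled by the elementary identity
\begin{equation}
\frac{u+\Delta u}{v+\Delta v} - \frac{u}{v} = \frac{\Delta u}{v} - \frac{u\,\Delta v}{v^{2}} + O(\Delta^{2}),
\end{equation}
applied with $\Delta u,\Delta v = \mathcal{O}(N)\epsilon$, which yields $\Psi_R(\{S_x\}) - \Psi_T(\{S_x\}) = \mathcal{O}(N)\,\epsilon$ to linear order in $\epsilon$, as claimed in Eq.~\eqref{orderN}.

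The main technical nuisance will be keeping track of the prefactors hidden inside $\mathcal{O}(N)$: each linear-in-$\delta$ term involves a product of $N-1$ matrices $T^{S_x}$ together with a single $\delta$, and in principle operator-norm growth of such products could dress the $\epsilon$ by an $N$-dependent constant. Because the bond dimension is finite and the lemma is stated only to leading order in $\epsilon$ (with all $N$-dependent but $\epsilon$-independent factors absorbed into the symbol $\mathcal{O}(N)$), this is not a genuine obstruction; it only means that the statement should be read as "linear in $\epsilon$ with an $N$-dependent coefficient of order $N$ in the counting of first-order terms," which is exactly how the lemma will be invoked downstream when it is chained together with Corollary~\ref{corollary.MPS_Tensor_2LayerRBM} to conclude that an MPS is approximated by a 2-layer RBM with error $\mathcal{O}(N)\epsilon$.
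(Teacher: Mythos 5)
Your proposal is correct and follows essentially the same route as the paper's proof: write $R^{S_x}=T^{S_x}+\delta^{S_x}$ with $|\delta^{S_x}_{L_xL_{x+1}}|<\epsilon$, expand both the trace in the numerator and the normalization in the denominator to first order in the perturbation, and observe that the $N$ first-order summands each contribute an $\mathcal{O}(1)$ coefficient after normalization, giving a total error $\mathcal{O}(N)\epsilon$. Your closing remark about $N$-dependent prefactors from products of $N-1$ matrices being absorbed into the $\mathcal{O}(N)$ symbol is a fair, slightly more explicit acknowledgment of what the paper asserts via ``due to normalization,'' but it does not change the argument.
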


\begin{proof}
	Using Eq.~\eqref{single}, one can write $R$ as 
	\begin{eqnarray}\label{expan}
	R^{S_x}_{L_x L_{x+1}} = T^{S_x}_{L_x L_{x+1}} + \epsilon U^{S_x}_{L_x L_{x+1}} + \mathcal{O}(\epsilon^2)
	\end{eqnarray}
	where $U^{S_x}_{L_x L_{x+1}}$ is an arbitrary function whose maximal norm element is of order 1. Substituting Eq.~\eqref{expan} into $|\Psi_T(\{S_x\}) - \Psi_R(\{S_x\})|$, one gets 
	\begin{eqnarray}\label{order1}
	\begin{split}
	&|\Psi_T(\{S_x\}) - \Psi_R(\{S_x\})|= \left|\frac{\Tr(\prod_{x=1}^N T^{S_x}) }{\sum_{\{S_x\}} |\Tr(\prod_{x=1}^N T^{S_x}) |^2}- \frac{\Tr(\prod_{x=1}^N R^{S_x}) }{\sum_{\{S_x\}} |\Tr(\prod_{x=1}^N R^{S_x}) |^2}\right|\\
	&=  \left|\frac{\Tr(\prod_{x=1}^N T^{S_x}) }{\sum_{\{S_x\}} |\Tr(\prod_{x=1}^N T^{S_x}) |^2}- \frac{\Tr(\prod_{x=1}^N(T^{S_x} + \epsilon U^{S_x} + \mathcal{O}(\epsilon^2))) }{\sum_{\{S_x\}} |\Tr(\prod_{x=1}^N (T^{S_x} + \epsilon U^{S_x} + \mathcal{O}(\epsilon^2))) |^2}\right|\\
	&\simeq  \left|\frac{\Tr(\prod_{x=1}^N T^{S_x}) }{\sum_{\{S_x\}} |\Tr(\prod_{x=1}^N T^{S_x}) |^2}- \frac{\Tr(\prod_{x=1}^N T^{S_x} ) + \epsilon\sum_{x=1}^N \Tr( T^{S_1}T^{S_2}... U^{S_x}...T^{S_N} )}{ \sum_{\{S_x\}} |\Tr(\prod_{x=1}^N T^{S_x}) |^2 \bigg(1+ \frac{\epsilon\sum_{x=1}^N \Tr( T^{S_1}T^{S_2}... U^{S_x}...T^{S_N} ) |\Tr(\prod_{x=1}^N T^{S_x}) |^2+ c.c}{ \sum_{\{S_x\}} |\Tr(\prod_{x=1}^N T^{S_x}) |^2}\bigg)}\right|\\
	&=\left| -\epsilon \frac{\sum_{x=1}^N \Tr( T^{S_1}T^{S_2}... U^{S_x}...T^{S_N} )}{\sum_{\{S_x\}} |\Tr(\prod_{x=1}^N T^{S_x}) |^2} +\epsilon \frac{\Tr(\prod_{x=1}^N T^{S_x} ) (\sum_{x=1}^N \Tr( T^{S_1}T^{S_2}... U^{S_x}...T^{S_N} ) |\Tr(\prod_{x=1}^N T^{S_x}) |^2+ c.c)}{(\sum_{\{S_x\}} |\Tr(\prod_{x=1}^N T^{S_x}) |^2 )^2} \right|\\
	&= \mathcal{O}(N) \epsilon
	\end{split}
	\end{eqnarray}
	In the third line, we only keep the terms linear in $\epsilon$. In the last equation, we used the fact that
	\begin{equation}
	\frac{ \Tr( T^{S_1}T^{S_2}... U^{S_x}...T^{S_N} )}{\sum_{\{S_x\}} |\Tr(\prod_{x=1}^N T^{S_x}) |^2}\sim \mathcal{O}(1), ~~~~~ \frac{\Tr(\prod_{x=1}^N T^{S_x} ) (\Tr( T^{S_1}T^{S_2}... U^{S_x}...T^{S_N} ) |\Tr(\prod_{x=1}^N T^{S_x}) |^2+ c.c)}{(\sum_{\{S_x\}} |\Tr(\prod_{x=1}^N T^{S_x}) |^2 )^2}\sim \mathcal{O}(1)
	\end{equation}
	due to normalization. In the last equality of Eq.~\eqref{order1}, there are $N$ such order 1 coefficients of $\epsilon$, hence the total coefficient of $\epsilon$ is $\mathcal{O}(N)$. This completes the proof. 
\end{proof}

\begin{theorem}\label{theorem.Final}
Suppose $\Psi(\{S_x\})$ is an arbitrary normalized MPS wave function
\begin{eqnarray}
\Psi(\{S_x\})=\frac{\Tr(\prod_{x=1}^N A^{S_x})}{\sum_{\{S_x\}} |\Tr(\prod_{x=1}^N A^{S_x})|^2}
\end{eqnarray} 
where $A^{S_x}_{L_x, L_{x+1}}$ is the MPS tensor and $N$ is the number of unit cells. In each unit cell, there are $n$ physical spins $S_x\in \{0,1\}^n$ and $m$ virtual spins $L_x\in \{0,1\}^m$.  For any positive number $\epsilon$, there exists a 2-layer RBM, $\mathrm{RBM}_{nN,  H_1N,  H_2N+ m N}(\{S_x\})$ such that their error is linear in $N$, 
\begin{equation}\label{Eq.1006}
    \max_{\{S_x\}} |\Psi(\{S_x\})-\Psi_{\mathrm{RBM}}(\{S_x\}) |\simeq \mathcal{O}(N) \epsilon
\end{equation}
where
\begin{eqnarray}\label{eqd66}
\Psi_{\mathrm{RBM}}(\{S_x\})= \frac{\mathrm{RBM}_{nN,  H_1N,  H_2N+ m N}(\{S_x\})}{\sum_{\{S_x\}} |\mathrm{RBM}_{nN,  H_1N,  H_2N+ m N}(\{S_x\})|^2}
\end{eqnarray}
is the normalized RBM. 
Here $H_1$ is the number of hidden spins in the first hidden layer per unit cell, and $H_2$ is the number of hidden spins (not including $\{L_x\}$) in the second hidden layer per unit cell. The precise values of $H_1, H_2$ are  given in Corollary \ref{corollary.MPS_Tensor_2LayerRBM}. $\mathcal{A}=\max_{ S_{x}, L_{x}, L_{x+1}} |A^{S_x}_{L_{x}, L_{x+1}}|$. A graphical representation of the RBM representation is given as follows
\begin{equation}
\includegraphics[width=0.5\textwidth]{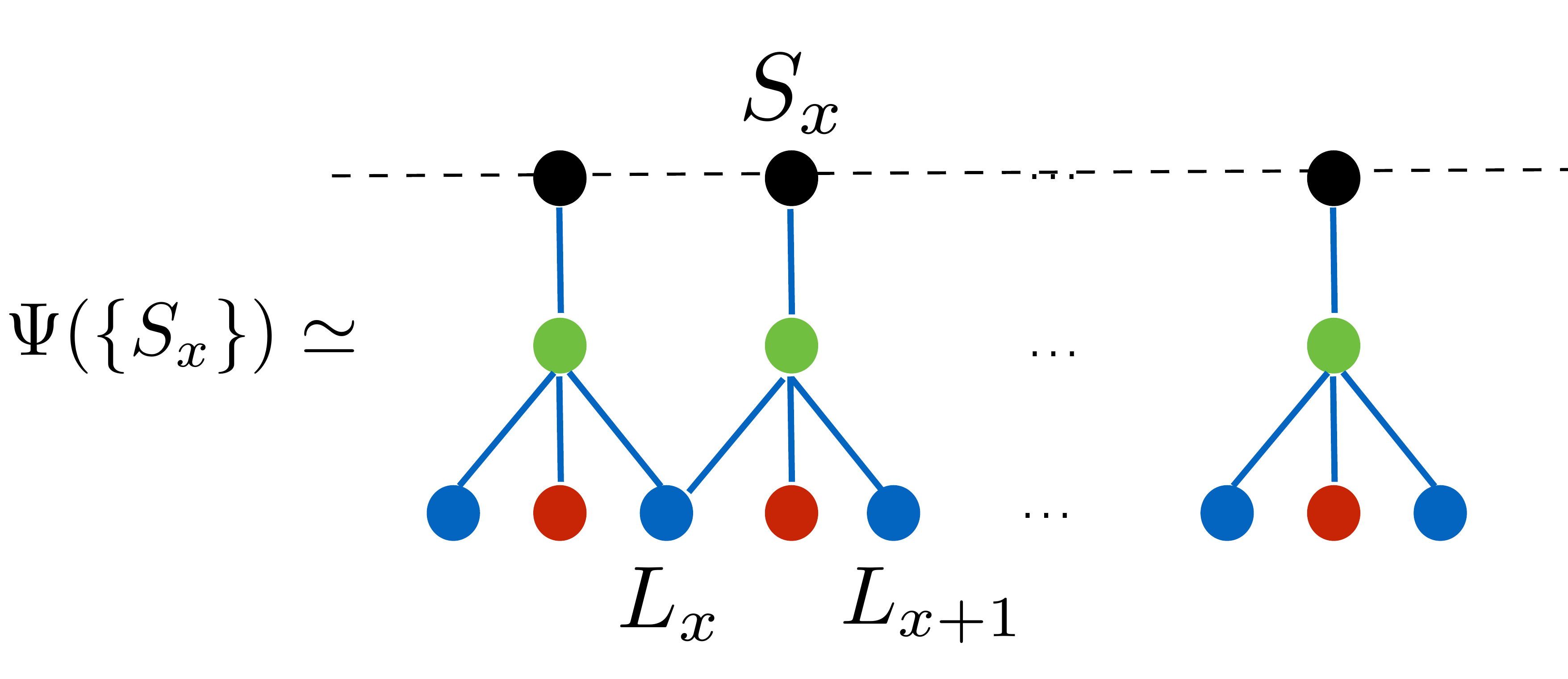}
\end{equation}
where $\simeq$ is short for the approximation in Eq.~\eqref{Eq.1006}.
\end{theorem}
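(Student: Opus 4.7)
The plan is to combine the local 2-layer RBM approximation of a single MPS tensor (Corollary \ref{corollary.MPS_Tensor_2LayerRBM}) with the wave-function perturbation bound (Lemma \ref{lemma.final}). First, I would apply Corollary \ref{corollary.MPS_Tensor_2LayerRBM} separately to every site tensor $A^{S_x}_{L_x L_{x+1}}$, obtaining for each $x=1,\dots,N$ a local 2-layer RBM, call it $\widetilde{A}^{S_x}_{L_x L_{x+1}}=\mathrm{RBM}_{n+2m,H_1,H_2}(\{s_i^{(x)}\},\{l_p^{(x)}\},\{l_p^{(x+1)}\})$, satisfying
\begin{equation}
\max_{S_x,L_x,L_{x+1}}\bigl|A^{S_x}_{L_x L_{x+1}}-\widetilde{A}^{S_x}_{L_x L_{x+1}}\bigr|<\epsilon .
\end{equation}
The visible spins of the local RBM consist of the physical spins $S_x$ together with the virtual spins $L_x,L_{x+1}$, which have been folded into the second hidden layer as in Eq.~\eqref{Eq.1001}.

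Next I would splice the $N$ local 2-layer RBMs into a single global 2-layer RBM. The key observation is that adjacent local RBMs share the virtual spins $L_{x+1}$, and forming the MPS trace amounts to summing over all $\{L_x\}$. Because each $L_x$ couples only to the green (first-layer) hidden spins of the two neighboring unit cells, and never directly to any physical spin or to another $L_y$ with $y\neq x$, the summed object retains the bipartite visible/first-layer/second-layer structure defining a 2-layer RBM. Using Property \ref{property.RBM_notation} to merge the exponentials unit-cell by unit-cell, the product $\prod_{x=1}^N\widetilde{A}^{S_x}_{L_x L_{x+1}}$ contracted over $\{L_x\}$ becomes the 2-layer RBM $\mathrm{RBM}_{nN,H_1 N,H_2 N+mN}(\{S_x\})$ appearing in the statement; the extra $mN$ second-layer hidden spins are exactly the $\{L_x\}$ that get summed over.

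Finally, setting $T^{S_x}_{L_x L_{x+1}}=A^{S_x}_{L_x L_{x+1}}$ and $R^{S_x}_{L_x L_{x+1}}=\widetilde{A}^{S_x}_{L_x L_{x+1}}$, Lemma \ref{lemma.final} directly yields
\begin{equation}
\max_{\{S_x\}}\bigl|\Psi(\{S_x\})-\Psi_{\mathrm{RBM}}(\{S_x\})\bigr|\simeq\mathcal{O}(N)\epsilon ,
\end{equation}
with $\Psi_{\mathrm{RBM}}$ defined as in Eq.~\eqref{eqd66}, which is exactly Eq.~\eqref{Eq.1006}. Since $\epsilon$ was arbitrary, the theorem follows.

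The main obstacle is the second step: verifying that the global object obtained by gluing the local 2-layer RBMs is \emph{itself} a 2-layer RBM rather than a deeper or differently structured network. The subtlety is that the shared virtual spins $L_x$, originally treated as visible in each local RBM, must be absorbed into the second hidden layer without generating any direct coupling between two hidden spins of the same layer or between non-adjacent layers. Concretely, one must check that in the combined exponent the $L_x$ variables appear linearly and only in terms of the form $W\,h^{(1)}_{\cdot}L_x$ with first-layer hidden spins from unit cells $x-1$ and $x$, so that after the $L_x$ sum the graph topology of Fig.~\ref{fig.RBM} is preserved. Once this bookkeeping is done carefully using Property \ref{property.RBM_notation}, the remaining steps reduce to the already-established Corollary \ref{corollary.MPS_Tensor_2LayerRBM} and Lemma \ref{lemma.final}.
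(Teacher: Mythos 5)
Your proposal is correct and follows essentially the same route as the paper: apply Corollary \ref{corollary.MPS_Tensor_2LayerRBM} to each site tensor, contract the shared virtual spins $\{L_x\}$ as second-layer hidden spins to obtain the global $\mathrm{RBM}_{nN,H_1N,H_2N+mN}$, and invoke Lemma \ref{lemma.final} with $(T,R)=(A,\widetilde{A})$ to get the $\mathcal{O}(N)\epsilon$ bound. In fact, your explicit check that each $L_x$ couples only linearly to first-layer hidden spins of the two adjacent unit cells (so that the glued object remains a genuine 2-layer RBM) is spelled out more carefully than in the paper's own proof, which leaves that step implicit in the graphical representation.
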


\begin{proof}
This theorem follows from the Corollary \ref{corollary.MPS_Tensor_2LayerRBM} by contracting the virtual spins $\{L_x\}$. From Corollary \ref{corollary.MPS_Tensor_2LayerRBM}, we have 
\begin{equation}\label{Eq.1002}
    \max_{S_x, L_x, L_{x+1}}|A^{S_x}_{L_x, L_{x+1}}- \mathrm{RBM}_{n+2m, H_1, H_2}(S_x, L_x, L_{x+1})|<\epsilon, ~~~~\forall x=1, ..., N
\end{equation}
Then we apply lemma \ref{lemma.final}, where we demand the pair of tensors $(T^{S_x}_{L_x, L_{x+1}}, R^{S_x}_{L_x, L_{x+1}})$ in lemma \ref{lemma.final} to be the pair of tensors $(A^{S_x}_{L_x, L_{x+1}}, A^{S_x}_{L_x, L_{x+1}}, \mathrm{RBM}_{n+2m, H_1, H_2}(S_x, L_x, L_{x+1}))$. Substituting into the Eq.~\eqref{orderN} in lemma \ref{lemma.final}, Eq.~\eqref{eqd66} immediately follows. This completes the proof.

\end{proof}

\end{document}